\DeclareMathOperator{\E}{\mathbb{E}}
\newcommand{\probP}{\text{I\kern-0.15em P}}
\newtheorem{theorem}{Theorem}[section]
\theoremstyle{remark}
\newtheorem*{example}{Example}
\theoremstyle{definition}
\newtheorem{definition}{Definition}[section]
\theoremstyle{plain}
\newtheorem{conjecture}{Conjecture}[section]
\newtheorem*{setting}{Setting}
\begin{document}

\title{Bidding efficiently in Simultaneous Ascending Auctions with budget and eligibility constraints using Simultaneous Move Monte Carlo Tree Search}

\author{Alexandre Pacaud, Aurelien Bechler and Marceau Coupechoux
\thanks{A.Pacaud and A.Bechler are with Orange Labs, France (e-mail: alexandre.pacaud@orange.com, aurelien.bechler@orange.com).}
\thanks{A.Pacaud and M.Coupechoux are with LTCI, Telecom Paris, Institut Polytechnique de Paris, France (e-mail: marceau.coupechoux@telecom-paris.fr). The work of M. Coupechoux has been performed at LINCS (lincs.fr).}
}

\markboth{Bidding efficiently in SAA with budget and eligibility constraints using SM-MCTS}%
{Shell \MakeLowercase{\textit{et al.}}: A Sample Article Using IEEEtran.cls for IEEE Journals}

\maketitle

\begin{abstract}
For decades, Simultaneous Ascending Auction (SAA) has been the most popular mechanism used for spectrum auctions. It has recently been employed by many countries for the allocation of 5G licences. Although SAA presents relatively simple rules, it induces a complex strategic game for which the optimal bidding strategy is unknown. Considering the fact that sometimes billions of euros are at stake in an SAA, establishing an efficient bidding strategy is crucial. In this work, we model the auction as a $n$-player simultaneous move game with complete information and propose the first efficient bidding algorithm that tackles simultaneously its four main strategic issues: the \textit{exposure problem}, the \textit{own price effect}, \textit{budget constraints} and the \textit{eligibility management problem}. Our solution, called $SMS^\alpha$, is based on Simultaneous Move Monte Carlo Tree Search (SM-MCTS) and relies on a new method for the prediction of closing prices. By introducing a new reward function in $SMS^\alpha$, we give the possibility to bidders to define their own level of risk-aversion. Through extensive numerical experiments on instances of realistic size, we show that $SMS^\alpha$ largely outperforms state-of-the-art algorithms, notably by achieving higher expected utility while taking less risks. 
\end{abstract}

\begin{IEEEkeywords}
Simultaneous Move Monte Carlo Tree Search, Ascending Auctions, Exposure, Own price effect, Risk-aversion  
\end{IEEEkeywords}

\section{Introduction}

In order to provide high quality service and develop wireless communication networks, mobile operators need to have access to a wide range of frequencies. These frequencies are obtained in the form of licences. A licence is defined by four features: its frequency band, its geographic coverage, its period of usage and its restrictions on use. Nowadays, spectrum licences are mainly assigned through auctions. \textit{Simultaneous Ascending Auction} (SAA), also known as \textit{Simultaneous Multi Round Auction} (SMRA), has been the privileged mechanism used for spectrum auction since its introduction in 1994 by the US Federal Communications Commission (FCC) for the allocation of wireless spectrum rights. For instance, it has been used in Portugal \cite{5G_portugal}, Germany \cite{5G_germany}, Italy \cite{5G_italy} and the UK \cite{5G_UK} to sell 5G licences. SAA is also expected to play a central role in future spectrum allocations, e.g. for 6G licenses. The popularity of SAA is mainly due to the relative simplicity of its rules and the generation of substantial revenue for the regulator. Both of its creators, Paul Milgrom and Robert Wilson, received the 2020 Sveriges Riksbank Prize in Economic Sciences in Memory of Alfred Nobel mainly for their contributions to SAA. Establishing an efficient bidding strategy for SAA is crucial for mobile operators, especially considering the large amount of money involved, e.g. Deutsche Telekom spent 2.17 billion euros in the 5G German SAA. This is the aim of this work. 

SAA has a dynamic multi-round auction mechanism where bidders submit their bids simultaneously on all licences each round. It offers the freedom to adjust bids throughout the auction while taking into account the latest information about the likelihood of winning different sets of licences. Hence, a great number of bidding strategies can be applied. Unfortunately, selecting the most efficient one is a difficult task. Indeed, SAA induces a $n$-player simultaneous move game with incomplete information with a large state space for the solution of which no generic exact game resolution method is known \cite{reeves2005exploring}. 

In addition to the complexities tied to its general game properties, SAA presents a number of complex strategic issues. Its four main strategic issues are the \textit{exposure problem}, the \textit{own price effect}, \textit{budget constraints} and the \textit{eligibility management problem}. The exposure problem corresponds to the situation where a bidder pursues a set of complementary licences but ends up by paying more than its valuation for the ones it actually wins. The own price effect refers to the fact that bidding on a licence inevitably increases its price and, hence, decreases the utility of all bidders willing to acquire it. On the contrary, it is in the interest of all bidders to keep prices as low as possible. Budget constraints correspond to a fix budget that caps the maximum amount that a bidder can bid during an auction and, thus, can hugely impact an auction's outcome. The eligibility management problem is introduced by activity rules which penalise bidders that do not maintain a certain level of bidding activity. At the beginning of the auction, each bidder is given a certain level of eligibility. Each round a bidder fails to satisfy the activity rule, its eligibility is reduced. As bidders are forbidden to bid on sets of licences which exceed their eligibility, managing efficiently one's eligibility during the course of an auction is crucial to obtain a favourable outcome. In this work, we propose the first efficient bidding algorithm which tackles simultaneously the four strategic issues of SAA. 

\subsection{Related works}

Most works on SAA, such as \cite{milgrom2000putting,cramton2002spectrum,cramton2006simultaneous}, have focused on its mechanism design, its efficiency and the revenue it generates for the regulator. Only a few works have addressed the bidder's point of view. These studies generally consider one of the two following formats of SAA: its original format \cite{cramton2006simultaneous} and its corresponding clock format defined hereafter. In neither of these formats, an efficient bidding strategy tackling simultaneously its four main strategic issues has yet been proposed. Generally, research has focused on trying to solve one of these strategic issues in specific simplified versions of these formats. Moreover, the solutions proposed can often only be applied to small instances. 

As the original format of SAA is generally too complex to draw theoretical guarantees, a simplified clock format of SAA \cite{goeree2014equilibrium} with two types of bidders (\textit{local} and \textit{global}) is often considered. It presents the advantage of being a tractable model where bidders have continuous and differentiable expected utilities. Standard optimisation methods can then be applied to derive an equilibrium. 

In the literature, the clock format is mainly used to analyse the exposure problem. Global bidders all have super-additive value functions. Goeree et al \cite{goeree2014equilibrium} consider the case of identical licences for which they compute the optimal dropout level of each global bidder using a Bayesian framework. They extend their work to a larger class of value functions (regional complementarities) but with only two global bidders. By modifying the initial clock format of SAA with a pause system that enables jump bidding, Zheng \cite{zheng2012jump} builds a continuation equilibrium which fully eliminates the exposure problem in the case of two licences and one global bidder. Using a different pause system, Brusco and Lopomo \cite{brusco2009simultaneous} study the effect of binding public budget constraints on the structure of the unique noncollusive equilibria in the case of two licences and two global bidders. They show that such constraints can be a great source of inefficiency.

Regarding the original format of SAA, Wellman et al. \cite{wellman2008bidding} propose an algorithm which uses probabilistic predictions of closing prices to tackle exposure. Results seemed promising but were only obtained for a specific class of super-additive value functions. 

The own price effect has also been studied in the original format of SAA. In a simple example of SAA with two licences between two bidders having the same public value function, Milgrom \cite{milgrom2000putting} describes a collusive equilibrium. This work was then pursued by Brusco and Lopomo \cite{brusco2002collusion} who build a collusive equilibrium based on signalling for SAA with two licences between two bidders having super-additive value functions. Similarly to the algorithm built to tackle exposure, Wellman et al. \cite{wellman2008bidding} propose an algorithm to tackle the own price effect based on the probabilistic prediction of closing prices when all licences are identical and bidders have subadditive value functions. However, obtained results were unsatisfactory as they are significantly inferior to a simple demand reduction algorithm. 

Regarding budget constraints and the eligibility management problem, little work has been done in the original format of SAA. However, it is commonly accepted that one should gradually reduce its eligibility to avoid being trapped in a vulnerable position if other bidders do not behave as expected~ \cite{weber1997making}. 

In our previous work \cite{pacaud2022monte}, we presented a bidding strategy computed by Monte Carlo Tree Search (MCTS) that we applied to a deterministic version of the original format of SAA with complete information. In this paper, we extend our work to simultaneous moves, budget constraints, activity rules, risk-averse rewards and larger instances. All four MCTS phases have been modified.

\subsection{Contributions}

In this paper, we consider the original format of SAA with complete information for which we propose the first bidding algorithm, named $SMS^\alpha$, tackling simultaneously its four main strategic issues. We make the following contributions: 
\begin{itemize}
    \item We model the auction as a $n$-player simultaneous move game with complete information that we name SAA-c. No specific assumption is made on the bidders' value functions. 
    \item We present an efficient bidding strategy ($SMS^\alpha$) that tackles simultaneously the \textit{exposure problem}, the \textit{own price effect}, \textit{budget constraints} and the \textit{eligibility management problem} in SAA-c. $SMS^\alpha$ is based on a Simultaneous Move Monte Carlo Tree Search (SM-MCTS) \cite{tak2014monte,bovsansky2016algorithms}. To the best of knowledge, it is the first algorithm that tackles the four main strategic issues of SAA. 
    \item We introduce a hyperparameter $\alpha$ in $SMS^\alpha$ which allows a bidder to arbitrate between expected utility and risk-aversion. 
    \item We propose a new method based on the convergence of a specific sequence for the prediction of closing prices in SAA-c. This prediction is then used to enhance the expansion and rollout phase of $SMS^\alpha$.  
    \item Through typical examples taken from the literature and extensive numerical experiments on instances of realistic size, we show that $SMS^\alpha$ outperforms state-of-the-art algorithms by achieving higher expected utility and better tackling the exposure problem and the own price effect in budget and eligibility constrained environments. 
\end{itemize}

The remainder of this paper is organised as follows. In Section \ref{SAA}, we define our model SAA-c and provide its game and strategic complexities. We then introduce our performance indicators. In Section \ref{prediction}, we present our method for the prediction of closing prices. In Section \ref{MCTS bidding strat}, we present our algorithm $SMS^\alpha$. In Section \ref{Experiments}, we show on typical examples taken from the literature  the empirical convergence of our method for the prediction of closing prices and that $SMS^\alpha$ tackles efficiently the four main strategic issues. Then, by comparing $SMS^\alpha$ to state-of-the-art algorithms, we show through extensive numerical experiments on instances of realistic size the main increase in performance of our solution. 

\section{Simultaneous Ascending Auction} \label{SAA}

\subsection{Simultaneous Ascending Auction model with complete information}

Simultaneous Ascending Auction (SAA) \cite{milgrom2000putting,cramton2006simultaneous,wellman2008bidding} is one of the most commonly used mechanism design where $m$ indivisible goods are sold via separate and concurrent English auctions between $n$ players. Bidding occurs in multiple rounds. At each round, players submit their bids simultaneously. The player having submitted the highest bid on an item $j$ becomes its temporary winner. If several players have submitted the same highest bid on item $j$, then the temporary winner is uniformly chosen at random amongst them. The \textit{bid price} of item $j$, noted $P_j$, is then set to the highest bid placed on it. The new temporary winners and bid prices are revealed to all players at the end of each round. The auction closes if no new bids have been submitted during a round. The items are then sold at their current bid price to their corresponding temporary winners. 

In our model, at the beginning of the auction, the bid price of each item is set to $0$. New bids are constrained to $P_j+\varepsilon$ where $\varepsilon$ is a fixed bid increment. This reduction of the bidding space is common in the literature on SAA \cite{goeree2014equilibrium,wellman2008bidding,pacaud2022monte}. We make the classical assumption that players won't bid on items that they are currently temporarily winning \cite{wellman2008bidding,pacaud2022monte}. Hence, in our model, a winner will always pay a price for an item at most $\varepsilon$ above the highest opponent bid.

Activity rules are introduced in SAA to penalise bidders which do not maintain a certain level of bidding activity. In our model, bidders are subject to the following simplified activity rule: the number of items temporarily won plus the number of new bids (also known as \textit{eligibility}) by a bidder can never rise \cite{goeree2014equilibrium, milgrom2004putting}. For instance, suppose a bidder $i$ is temporarily winning a set of items $Y$ and bids on a set of items $X$ at a given round. Its eligibility is defined as $e_i=|Y|+|X|$ and is revealed to all bidders at the end of the round. In the next round, if bidder $i$ is temporarily winning a set of items $Y'$, it can only bid on a set of items $X'$ of size $|X'|\leq e_i-|Y'|$. Its eligibility is then set to $e'_i=|X'|+|Y'|\leq e_i$. At the beginning of the auction, the eligibility of each player is set to $m$.

We assume that the value function $v_i$ and budget $b_i$ of each player $i$ are common knowledge \cite{szentes2003beyond,szentes2003three,pacaud2022monte}. Players are not allowed to bid on a set of items that exceeds their budget. In spectrum auctions, obtaining such knowledge about competitors is considered a very difficult task. Indeed, main mobile operators invest substantial effort to refine as much as possible their estimations. Nevertheless, the complete information framework remains still particularly interesting as it provides a strategic benchmark under ideal conditions. An efficient bidding strategy within this framework can be a significant asset for mobile companies as it facilitates the analysis of possible SAA scenarios based on point-wise estimates of opponents' private information.

This simplified version of SAA induces an $n$-player simultaneous move game with complete information that we name SAA-c. 

\subsection{Budgets, Utility and Value functions} \label{utility section}

A player $i$ in SAA-c is defined by its budget $b_i$, its value function $v_i$ and its utility function $\sigma_i$. Without loss of generality, $b_i$ and $v_i$ are chosen independently. If the current bid price vector is $P$, a player $i$ temporarily winning a set of items $Y$ with current eligibility $e_i$ can bid on a set of items $X$ if and only if 
\begin{equation}
    \begin{cases}
	|X|+|Y|\leq e_i\\
    \sum_{j\in X}(P_j+\varepsilon)\leq b_i-\sum_{j\in Y} P_j
	\end{cases}
\end{equation}

At the end of the auction, the utility obtained by player $i$ after winning the set of items $X$ at bid price vector $P$ is equal to its profit, i.e.: 

\begin{equation}
    \label{profit}
    \sigma_i(X,P)=v_i(X)-\sum_{j\in X} P_j 
\end{equation}

To respect common reinforcement learning conventions, we will sometimes denote by $R^\pi$ the random variable corresponding to the utility obtained by playing policy $\pi$.

Value functions are assumed to be normalised ($v_i(\emptyset)=0$), finite and verify the free disposal condition, i.e. for any two sets of goods $X$ and $Y$ such that $X\subset Y$, then $v(X)\leq v(Y)$ \cite{lehmann2006combinatorial,milgrom2000putting}. Two disjoint sets $X$ and $Y$ of goods are said to be complements if $v(X+Y)>v(X)+v(Y)$ \cite{wellman2008bidding}.  

\subsection{Extensive form}

\begin{figure*}[t]
\centering
\includegraphics[width=1.3\columnwidth]{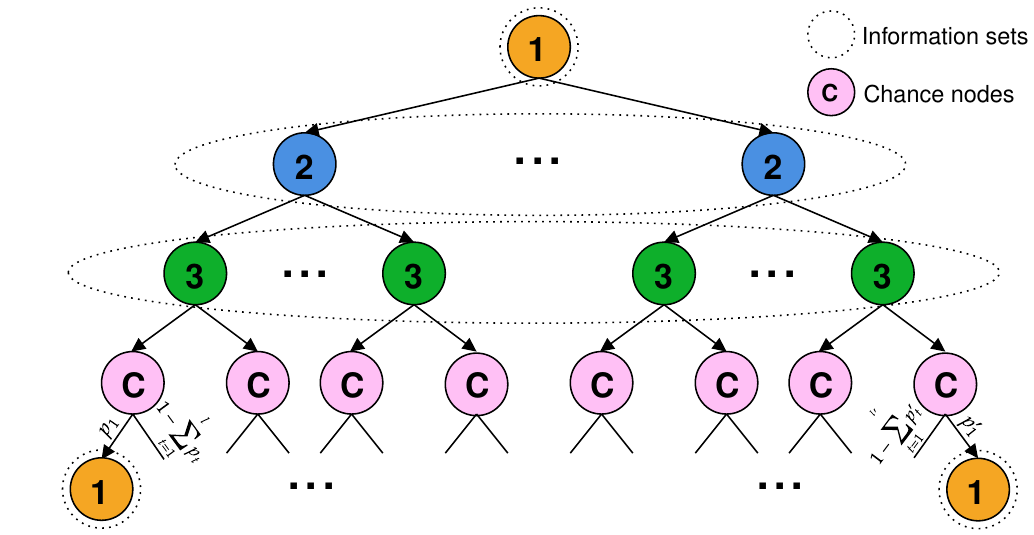}
\caption{Extensive form of a three player SAA-c game with information sets and chance nodes} 
\label{fig:SAA-c}
\end{figure*}

The standard representation for multi-round games is a tree representation named extensive form \cite{maschler2020game}. The game tree is a finite rooted directed tree admitting two types of nodes: \textit{decision nodes} and \textit{chance nodes}. At each decision node, a player has the choice between many actions each represented by a directed edge. A chance node has a fixed probabilistic distribution assigned over its outgoing edges. An information set is a set of decision nodes which are indistinguishable for the concerned player at the current position of the game \cite{cowling2012information}. This means that a player, given its current information, does not know exactly at which decision node it is playing. It only knows that it is playing at one of the decision nodes of the corresponding information set. Games where information sets are not all singletons are known as imperfect information games \cite{swiechowski2022monte}. 

We represent the SAA-c game in this form with the decision nodes representing the different states of the game and the chance nodes representing the random draws of temporary winners in case of ties. At each decision node, an outgoing edge represents a set of items on which the concerned player bids if it selects this edge. Each decision node or state is defined by five features: the concerned player, the eligibility vector revealed at the end of the last round, the temporary winner of each item, the current bid price vector and the bids already submitted during the current round. The four first features are common knowledge and the last feature is hidden information for the concerned player. Therefore, all decision nodes which differ only by the last feature belong to the same information set. In Figure \ref{fig:SAA-c}, we represent an SAA-c game between three players with their information sets and chance nodes.

\subsection{Game and strategic complexities}
\subsubsection{Game complexities}
To highlight the complexity of the SAA-c game, we focus on two metrics: \textit{information set space complexity} and \textit{game tree complexity} \cite{van2002games}. We define the first as the number of different information sets which can be legally reached in the game. It acts as a lower bound of the \textit{state space complexity} \cite{van2002games}. The second corresponds to the number of different paths in its extensive form. We compute both complexities for a given number of rounds $R$, unlimited budgets and without any activity rule.  

\begin{theorem}
Let $\Gamma$ be an instance of the SAA-c game with no activity rule. Let $n$, $m$ and $R$ be respectively the number of players, the number of items and the number of rounds in $\Gamma$. Suppose that all players have unlimited budgets. The number of possible information sets in $\Gamma$ is:
\begin{equation}
    n(Rn+1)^m
\end{equation}

\end{theorem}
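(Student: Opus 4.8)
The plan is to exhibit an explicit bijection between the information sets of $\Gamma$ and a product structure, and then count the latter. First I would invoke the paper's own characterisation of an information set: since there is no activity rule the eligibility vector is constant and carries no information, and the within-round bids are exactly the hidden feature that gets quotiented out. Hence an information set is determined precisely by the triple consisting of the concerned player, the temporary-winner vector, and the current bid-price vector. It therefore suffices to count the legally reachable public states $(\text{winners},\text{prices})$, after which the factor $n$ for the identity of the player to move splits off immediately: all players act simultaneously each round and, with unlimited budgets and no activity rule, each always retains at least the option to pass, so every reachable public state occurs as a decision-node state for each of the $n$ players.

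The key structural step is a factorisation over items. Because budgets are unlimited and there is no activity rule, the only restriction on a bid is that a player cannot bid on an item it is currently winning, and there is no coupling whatsoever between distinct items. Consequently the $m$ single-item English auctions evolve independently, and I would argue that any combination of per-item public states is \emph{jointly} reachable: to realise a target configuration one schedules, item by item, the bids that drive each item to its target price and temporary winner, which never interfere across items. This reduces the problem to computing $N_1^{\,m}$, where $N_1$ is the number of legally reachable single-item public states.

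It then remains to determine $N_1$. Over the $R$ rounds the price of a given item is raised by $\varepsilon$ at most once per round, so it takes one of the $R+1$ values $0,\varepsilon,2\varepsilon,\dots,R\varepsilon$. The only coupling between price and winner occurs at price $0$: no bid has been placed, so there is no temporary winner, contributing a single state. For each of the $R$ strictly positive prices, any of the $n$ players can be the temporary winner, and each such pair is reachable by letting the players alternately raise the item --- never bidding on an item they currently win --- with the tie-breaking chance node allowing the designated player to win the final raise. This yields $N_1 = 1 + Rn$, and combining the three steps gives $n\,(Rn+1)^m$.

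The main obstacle I expect is reachability rather than the arithmetic: I must verify both directions of the correspondence --- that every triple in the claimed product is \emph{legally} reachable with positive probability (including the joint reachability across items underlying the factorisation, and the positive-probability routing through the tie-breaking chance nodes that lets any bidder end up as temporary winner), and conversely that no further public states arise and that distinct triples genuinely index distinct information sets. Establishing the independence of the per-item dynamics, together with the chance-node argument, is the crux; once it is in place the product count and the final formula follow directly.
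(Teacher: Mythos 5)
Your argument is correct and follows essentially the same route as the paper's proof: factor out the identity of the player to move, decompose the public state item by item using the independence afforded by unlimited budgets and the absence of activity rules, and count $Rn+1$ reachable (winner, price) pairs per item. The only difference is one of emphasis --- you explicitly flag and sketch the reachability/bijectivity verification that the paper's proof treats as immediate --- which is a welcome tightening but not a different proof.
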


\begin{proof}
    Each information set is defined by three components: the player to bid, the temporary winner and bid price of each item. If no player has bidded on an item, then it remains unsold and is handed back to the auctioneer. Otherwise, its bid price is included in $\{\varepsilon,2\varepsilon,...,R \varepsilon\}$ and the item is allocated to one of the $n$ players. Therefore, the number of different allocations and bid prices of an item in $\Gamma$ is $Rn+1$. Under the unlimited budget assumption, all items are mutually independent. Thus, the number of different allocations and bid prices for all items is $(Rn+1)^m$. As there are $n$ different players who can bid, the number of possible information sets is:
    $$n(Rn+1)^m$$
\end{proof}

\begin{theorem}
Let $\Gamma$ be an instance of the SAA-c game with no activity rule. Let $n$, $m$ and $R$ be respectively the number of players, the number of items, and the number of rounds in $\Gamma$. Suppose that all players have unlimited budgets. A lower bound of the game tree complexity of $\Gamma$ is:
\begin{equation}
    \Omega(2^{m(n-1)R})
\end{equation}

\end{theorem}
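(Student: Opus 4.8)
The plan is to lower-bound the game tree complexity by exhibiting an explicitly countable family of distinct root-to-leaf paths in the extensive form, exploiting the fact that, under unlimited budgets and no activity rule, a player's set of legal bids at the start of a round is unconstrained except for the assumption that one never bids on currently-won items. Recall that the game tree complexity is the number of distinct paths in the extensive form, so it suffices to produce a sufficiently large family of such paths.

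First I would fix a round and a reachable state, and let $w_i$ denote the number of items that player $i$ is currently temporarily winning. Since player $i$ may bid on \emph{any} subset of the items it is not currently winning, the number of legal joint action profiles available that round is
\begin{equation}
    \prod_{i=1}^{n} 2^{\,m - w_i} = 2^{\,nm - \sum_{i=1}^n w_i}.
\end{equation}
The key observation is that each item admits at most one temporary winner, so $\sum_{i=1}^n w_i \le m$. Hence the number of action profiles in any round is at least $2^{m(n-1)}$, and crucially this estimate holds at \emph{every} reachable state, independently of how the winners are distributed (in the opening round all items are unsold, so the count is in fact $2^{mn}$). Equivalently, every item always has at least $n-1$ eligible bidders, each of whom independently chooses to bid or not.

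Next I would multiply across rounds. Because the per-round branching $\ge 2^{m(n-1)}$ holds at every decision point along every branch, the number of distinct sequences of joint bids over $R$ rounds is at least $\bigl(2^{m(n-1)}\bigr)^{R} = 2^{m(n-1)R}$. Distinct sequences of joint bids induce distinct paths in the extensive form, and since the game tree complexity also counts the branches introduced by the tie-breaking chance nodes, discarding those outcomes only weakens the count and therefore preserves the lower bound. This yields the claimed $\Omega(2^{m(n-1)R})$.

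The step I expect to be the main obstacle is guaranteeing that these length-$R$ paths genuinely occur, rather than the tree merely branching in principle: an all-pass round closes the auction, so one must argue that the auction stays active for the full horizon of $R$ rounds without eroding the per-round factor. I would address this by working within the fixed $R$-round horizon assumed in the statement and by restricting attention to profiles in which at least one active bid is placed, together with a check that $w_i$ is well defined across the chance outcomes that resolve ties. The decisive point throughout is the uniform inequality $\sum_{i=1}^n w_i \le m$ (one winner per item): it is exactly this bound, which is insensitive to the particular winner configuration reached along a path, that forces at least $n-1$ eligible bidders per item in every round and thereby drives the entire estimate.
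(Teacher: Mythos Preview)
Your argument is essentially the paper's proof: both compute the per-round joint branching factor as $\prod_i 2^{m-w_i} = 2^{nm - \sum_i w_i} \ge 2^{m(n-1)}$ via the one-winner-per-item bound $\sum_i w_i \le m$, exclude the all-pass profile to keep the auction alive (the paper writes $2^{m(n-1)}-1$ non-terminal branches per round), and then compound over $R$ rounds. The only cosmetic difference is that the paper disposes of chance nodes by fixing a deterministic tie-breaking rule at the outset, whereas you observe that ignoring the extra chance branches can only lower the count; both devices yield the same $\Omega(2^{m(n-1)R})$.
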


\begin{proof}
We consider $\Gamma$ with a deterministic tie-breaking rule. This eliminates chance nodes and reduces the number of paths in the game's extensive form. Let's first compute a lower bound of the number of different branches created in $\Gamma$ during a given round.

Suppose player $i$ is the temporary winner of $m_i$ items. Thus, during this given round, player $i$ can bid $2^{m-m_i}$ different ways as it can either bid or not bid on each of the remaining $m-m_i$ items. Hence, during this round, there are $2^{nm-\sum_{i=1}^n m_i}$ different bidding scenarios. Thus, this given round creates $2^{nm-\sum_{i=1}^n m_i}-1$ new branches all leading to non-terminal nodes of $\Gamma$. Moreover, as $\sum_{i=1}^n m_i\leq m$, the number of different branches created during any round is lower bounded by $2^{m(n-1)}-1$.

A lower bound of the game tree complexity of $\Gamma$ can then easily be calculated by induction. Indeed, every non-terminal node of $\Gamma$ starting a bidding round induces at least $2^{m(n-1)}-1$ new branches during this round. Therefore, the game tree complexity of $\Gamma$ is lower bounded by:

\begin{equation}
    \sum_{l=0}^R (2^{m(n-1)}-1)^l=\frac{(2^{m(n-1)}-1)^{R+1}-1}{2^{m(n-1)}}
\end{equation}

Thus, a lower bound of the game tree complexity of $\Gamma$ is $\Omega(2^{m(n-1)R})$.
\end{proof}

\begin{example}
An SAA for 12 spectrum licences (5G) between 5 telecommunication companies was held in Italy in 2018 and ended after 171 rounds \cite{5G_italy}. The number of possible information sets as well as a lower bound of the game tree complexity of the corresponding SAA-c game with no activity rule are respectively $10^{35}$ and $10^{2470}$.
\end{example}

Adding activity rules decreases the game tree complexity as a bidder can no longer bid on a set of items which exceeds its eligibility. However, it increases the information set space complexity as a new feature (eligibility) is added to every information set.

\subsubsection{Strategic complexities}
SAA-c game also admits a number of strategic issues. The four main ones are presented below. 

\begin{itemize}
    \item \textbf{Exposure:} It is a phenomenon which happens when a player tries to acquire a set of complementary items but ends up by paying too much for the subset it actually wins at the end of the auction. Hence, the player obtains a negative utility. For instance, Table~\ref{Exemple1} presents a well-studied example, see e.g. \cite{wellman2008bidding}, in a 2-item SAA-c game with a bid increment of $1$ between two players with unlimited budgets (referred to as Example 1). Player 1 considers both items as perfect substitutes, i.e. it values both items equally and desires to acquire only one of the two, while player 2 considers them as perfect complements, i.e. each item is worthless without the other and desires to acquire both of them. If player 1 is temporarily winning no items and the bid price of the cheapest item is lower than $11$, it should bid on it. Otherwise, it should pass. Hence, if player 2 decides to bid on both items, it will end up exposed as it won't be able to obtain both items for a price inferior to 22. Moreover, if after a few rounds, player 2 decides to give up an item, it will still end up by paying for the other item and, hence, incur a loss. 
    \begin{table}[h]
    \centering
    \caption{Example of exposure ($\varepsilon=1$)\label{Exemple1}}
    \resizebox{.4\textwidth}{!}{
    \begin{tabular}{|c c c c|} 
    \hline
    & $v(\{1\})$ & $v(\{2\})$ & $v(\{1,2\})$\\ [0.5ex] 
    \hline
    Player 1 & 12 & 12 & 12 \\ 
    Player 2 & 0 & 0 &  20\\
    \hline
    \end{tabular}
    }
    \end{table}

\item \textbf{Own price effect:} Competing on an item causes inevitably the rise of its bid price and, hence, the decrease in utility of all players wishing to acquire it. Thus, players have all a strong interest in maintaining the bid price of all items as low as possible. To avoid this rise, a player can concede items to its opponents hoping that they will not bid on the items it is temporarily winning in exchange. This strategy is known as \textit{demand reduction} \cite{weber1997making,ausubel2014demand}. Dividing items between players to avoid this issue is called \textit{collusion} \cite{brusco2002collusion}. No communication is allowed between players. In SAA-c, players should be able to use the common knowledge of valuations and budgets to agree on a same fair split of items to tackle this issue without any communication.

\item \textbf{Budget constraints:} Capping the maximum amount a bidder can spend during an auction can highly impact the auction's outcome. Indeed, it can prevent players from bidding on certain sets of items and be a source of exposure. Moreover, given this information, players can drastically change their bidding strategy. For instance, in the auction presented in Table \ref{Exemple1}, if player 1 and 2 have respectively a fixed budget of $8$ and $20$, player 2 should bid on both items as this situation no longer presents any risk of exposure. 

\item \textbf{Eligibility management:} Efficient management of its own eligibility is a key factor to ensure a favourable outcome. Bidding on a high number of items to maintain high eligibility induces the own price effect. However, reducing its eligibility to form collusions can trap a bidder in a vulnerable position if the other bidders do not behave as expected. Hence, a tradeoff must be found. 
\end{itemize}

\subsection{Performance indicators} \label{Perf indic}

The natural metric used to measure the performance of a strategy is the \textit{expected utility}. However, given the fact that a specific instance of a spectrum auction (i.e. same frequency bands, same operators, etc ...) is generally only held once and an operator just participates to a few different instances, comparing strategies only on the basis of their \textit{expected utility} is not sufficient. Indeed, given the huge amount of money involved, potential losses due to exposure should also be taken into account. To measure this risk, we decompose the expected utility  as follows: 

\begin{equation}\label{expected utility}
\begin{split}
    \hspace*{-0.2cm}\E(R^\pi)=\probP(R^\pi\geq0)\E(R^\pi|R^\pi\geq0)+
    \underbrace{\probP(R^\pi\textless 0)\E(R^\pi|R^\pi\textless 0)}_{\text{Exposure}}
\end{split}
\end{equation}
\noindent where $\pi$ is a policy and $R^\pi$ is a random reward obtained by playing $\pi$ in a SAA-c game. We introduce the term $-\probP(R^\pi\textless0)\E(R^\pi|R^\pi\textless0)$ as a metric of potential exposure which should be minimised. We name it \textit{expected exposure} and estimate it by taking the opposite of all losses incurred by a strategy divided by the number of plays. Moreover, we define the \textit{exposure frequency} as $\probP(R^\pi\textless0)$. This is estimated by the number of times a strategy incurs a loss divided by the number of plays. To increase its expected utility, one can either try to acquire a set of items with higher value or reduce the price paid for the items won. Hence, to highlight the rise in expected utility due to better tackling the own price effect, we use the \textit{average price paid per item won}. To ensure that a strategy divides efficiently items between bidders and that no item is returned to the auctioneer unnecessarily, we consider the \textit{ratio of items won}.

\section{Predicting closing prices} \label{prediction}

$SMS^\alpha$ is based on a SM-MCTS whose expansion and rollout phases rely on the following bidding strategy and prediction of closing prices, i.e., an estimation of the price of each item at the end of the auction. 

\subsection{Constrained point-price prediction bidding}

We start by extending the definition of point-price prediction bidding (\textit{PP}) \cite{wellman2008bidding} to budget and eligibility constrained environments. 
\begin{definition}\label{constr PP}
In a SAA-c game with $m$ objects and a current bid price vector $P$, a point-price prediction bidder with budget $b$, a current eligibility $e$, an initial prediction of closing prices $P^{init}$ and a set of temporarily won items $Y$ computes the subset of goods 
\begin{equation}
    X^*=\underset{\substack{X\subset \{\emptyset\}\cup\{1,...,m\}\backslash Y \\ \sum_{j\in X\cup Y} \rho_j(P^{init},P,Y)\leq b  \\ |X|+|Y|\leq e}}{\text{arg max}}\; \sigma(X\cup Y,\rho(P^{init},P,Y))
\end{equation}
breaking ties in favour of smaller subsets and lower-numbered goods. It then bids $P_j+\varepsilon$ on all items $j$ belonging to $X^*$. The function $\rho:(P^{init},P,Y)\rightarrow \mathbb{R_+}^m$ maps an initial prediction of closing prices, a current bid price vector and a set of items temporarily won to an estimation of closing prices. For any item $j$, it follows the below update rule:  
\begin{equation}
\rho_j(P^{init},P,Y)=\left\{
\begin{array}{ll}
      \max(P^{init}_j,P_j)  \quad &  \text{if }j\in Y\\
      \max(P^{init}_j,P_j+\varepsilon) \quad &   \text{otherwise}\\
\end{array} 
\right. 
\end{equation}
\end{definition}

A point-price prediction bidder only considers sets of items \textit{within budget} $b$ given its prediction of closing prices $\rho(P^{init},P,Y)$, i.e., only sets of items $X$ such that $\sum_{j\in X\cup Y} \rho_j(P^{init},P,Y)\leq b$. Moreover, it can only bid on sets of items which does not exceed its eligibility $e$.  

If closing prices are correctly estimated and independent of the bidding strategy, then playing \textit{PP} is optimal for a player. However, in practice, closing prices are usually tightly related to a player's bidding strategy. Playing \textit{PP} with a null prediction of closing prices ($P^{init}=0$) is known as straightforward bidding (SB) \cite{milgrom2000putting}. The efficiency of the bidding strategy \textit{PP} highly depends on the accuracy of the initial prediction of closing prices $P^{init}$. For instance, if $P^{init}$ largely underestimates the actual closing price of each item, then when the current bid price $P\geq P^{init}$ component-wise, playing \textit{PP} with initial prediction $P^{init}$ gives the same strategy as SB. However, if $P^{init}$ overestimates too much the actual closing price of each item, then the bidder might stop playing prematurely in order to avoid exposure.

\subsection{Computing an initial prediction of closing prices} \label{init prediction}

Several methods exist in the literature for computing an initial prediction of closing prices $P^{init}$ in budget constrained environments. However, they all seem to present some limitations in SAA-c. For instance, the well known Walrasian price equilibrium \cite{arrow1971general} does not always exist when preferences exhibit complementarities as it is the case in Example 1. Standard tâtonnement processes, such as the one used to compute \textit{expected price equilibrium} \cite{wellman2008bidding}, return the same price vector regardless of the auction's specificities (e.g., bid increment $\varepsilon$). The final prediction is then completely independent of the auction mechanism of SAA-c which is problematic. Computing an initial prediction by using only the outcomes of a single strategy profile is relevant only if bidders actually play according to this strategy profile. For instance, simulating SAA-c games where all bidders play SB and using the average closing prices as initial prediction is relevant if the actual bidders play SB. We propose hereafter a prediction method based on the convergence of a specific sequence which aims at tackling all of these issues.

\begin{conjecture}
Let $\Gamma$ be an instance of an SAA-c game. Let $f_{\Gamma}(P)$ be a random variable returning the closing prices of $\Gamma$ when all bidders play \textit{PP} with initial prediction $P$. The sequence $p_{t+1}=\frac{1}{t+1}\E[f_{\Gamma}(p_t)]+(1-\frac{1}{t+1})p_t$ with $p_0$ the null vector of prices converges to a unique element $p^*$.
\end{conjecture}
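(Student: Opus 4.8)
The plan is to read the recursion as an averaged best-response (Cesàro) scheme and to combine a compactness argument with a fixed-point / dynamical-systems analysis. Write $g(P) := \E[f_{\Gamma}(P)]$ for the expected closing-price map obtained when every bidder plays \emph{PP} with prediction $P$; note that, because we take the expectation, the sequence $\{p_t\}$ is deterministic. Unrolling $p_{t+1} = p_t + \frac{1}{t+1}\bigl(g(p_t) - p_t\bigr)$ from $p_0 = 0$ shows that $p_t = \frac{1}{t}\sum_{s=0}^{t-1} g(p_s)$, i.e. $p_t$ is the running average of past expected closing prices; in particular any limit $p^*$ must satisfy the self-consistency (rational-expectations) condition $g(p^*) = p^*$. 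The statement thus splits into three subclaims: (i) such a self-consistent $p^*$ exists; (ii) it is unique; (iii) the averaged scheme converges to it.

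For existence and boundedness, I would first observe that closing prices are always confined to a compact box: each component of $f_{\Gamma}(P)$ lies in $\{0,\varepsilon,2\varepsilon,\dots\}$ and is bounded above by a constant $M_j$ depending only on the budgets $b_i$ and valuations $v_i$, independently of $P$. Hence $g$ maps the compact convex box $C = \prod_j [0, M_j]$ into itself and every iterate $p_t$ stays in $C$. If $g$ is continuous on $C$, Brouwer's fixed-point theorem immediately yields existence of $p^*$, settling (i).

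The crux---and the reason this is stated as a conjecture---is that $g$ need not be obviously continuous, let alone a contraction. Indeed $f_{\Gamma}(P)$ is produced by running the whole auction while each bidder solves the combinatorial maximisation in Definition~\ref{constr PP}; that bundle is a piecewise-constant function of the prediction, so $f_{\Gamma}(\cdot)$ jumps as $P$ crosses indifference thresholds, and the closing price is a long composition of such maps. The main obstacle is therefore to control these discontinuities. My plan is to show that the set of predictions at which some bidder is indifferent between two bundles, or at which a tie in the auction is resolved, is a finite union of hyperplanes, hence Lebesgue-null, and that off this null set the realised trajectory is locally constant in $P$; combining this with dominated convergence over the tie-breaking randomness should give continuity of $g = \E[f_{\Gamma}(\cdot)]$, the expectation smoothing the remaining boundary behaviour.

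Granting continuity, I would obtain uniqueness and convergence through the negative-feedback structure of the map: raising $P$ makes every \emph{PP} bidder more cautious and shrinks demand, which can only lower realised prices, so $g$ is antitone and $g\circ g$ is monotone, suggesting that one can squeeze the even and odd averaged subsequences between monotone bounds converging to a common limit. The cleaner route is the \emph{ODE method}: since the step sizes $\frac{1}{t+1}$ satisfy $\sum_t \frac{1}{t+1} = \infty$ and $\sum_t \bigl(\frac{1}{t+1}\bigr)^2 < \infty$, the iterates track the flow of $\dot P = g(P) - P$, and exhibiting a strict Lyapunov function---for instance $\lVert P - p^*\rVert$ decreasing along the flow, which follows from any contraction-on-average estimate for $g$---would give a globally asymptotically stable equilibrium, yielding simultaneously uniqueness of $p^*$ and convergence $p_t \to p^*$. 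I expect this Lyapunov/contraction estimate to be the hardest quantitative step, since it must quantify how far aggregate demand, and hence $g$, moves when the prediction is perturbed across the discrete bidding regions; absent a global contraction, the averaging is precisely what damps the possible oscillations of the antitone map, in the spirit of fictitious-play convergence.
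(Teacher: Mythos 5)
First, a point of comparison: the paper does not prove this statement at all --- it is explicitly a conjecture, with the general proof ``left for future work,'' supported only by empirical observation and by a hand computation in Example~1 where a closed form of $\E[f_\Gamma(\cdot)]$ is derived and the limit $p^*=(10,10)$ is identified directly. So your proposal cannot be measured against a proof in the paper; it can only be assessed on its own terms, and as written it is a roadmap that defers its hardest steps rather than a proof.

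Beyond incompleteness, there is a concrete error that the paper's own Example~1 exposes. You claim that the expectation over the tie-breaking randomness smooths $f_\Gamma$ into a continuous map $g$, that Brouwer then gives a fixed point, and that ``any limit $p^*$ must satisfy $g(p^*)=p^*$.'' None of this survives contact with the example: the tie-breaking randomness has finitely many outcomes, so $g$ is a finite convex combination of piecewise-constant functions and remains piecewise constant with genuine jumps (the paper's closed form for $\E[f_\Gamma(p)]$ takes only four values on $[0,11.5]^2$). Moreover the actual limit $p^*=(10,10)$ sits exactly on the discontinuity surface $p_1+p_2=20$ and satisfies $g(p^*)=(1,0)\neq p^*$; it is not a fixed point of $g$ but a Filippov-type equilibrium lying in the convex hull of the values $g$ takes in every neighbourhood of $p^*$. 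Consequently your step (i) is false as stated, the Brouwer argument does not apply, and the ODE method must be replaced by stochastic-approximation theory for the differential \emph{inclusion} $\dot P\in \overline{\mathrm{co}}\,\{g(q):\lVert q-P\rVert<\delta\}-P$, whose limit sets are internally chain-transitive sets rather than zeros of $g-P$. Uniqueness would then require a Lyapunov function for that inclusion, which is precisely the quantitative content you acknowledge you do not have; the antitonicity heuristic is plausible but unestablished (and antitone maps generically produce period-two behaviour that averaging need not collapse to a \emph{unique} limit independent of initialisation without further structure). The compactness observation and the Ces\`aro rewriting $p_t=\frac{1}{t}\sum_{s<t}g(p_s)$ are correct and useful, but the core of the conjecture remains open in your proposal just as it does in the paper.
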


The fact that $f_{\Gamma}$ is a random variable comes from the tie-breaking rule which introduces stochasticity in $\Gamma$. By taking its expectation $\E[f_{\Gamma}(p_t)]$ at each iteration $t$, we ensure our deterministic sequence $p_t$ to always converge to the same fixed point $p^*$. Hence, all players using our method share the same prediction of closing prices $p^*$. In practice, we perform a Monte-Carlo estimation of $\E[f_{\Gamma}(p_t)]$ by simulating many SAA-c games. In small instances, it is possible to obtain a closed-form expression of $\E[f_{\Gamma}(p_t)]$ and, from that, prove the convergence of sequence $p_t$.

\begin{example}
Suppose that both players play \textit{PP} with $P^{init}=p_0$ in Example 1. During the first round, player 1 bids on item 1 and player 2 bids on both items. There is $50\%$ chance that player 1 temporarily wins item 1 and $50\%$ chance that player 2 temporarily wins item 1. If player 1 wins item 1 during the first round, player 2 bids on item 1 during the second round while player 1 passes. In the third round, player 1 bids on item 2 while player 2 passes. In the fourth round, player 2 bids on item 2 while player 1 passes. Hence, the bid price of item 1 (respectively item 2) is odd (respectively even) if temporarily won by player 1. When the bid price $P=(12,11)$ and both items are temporarily won by player 2, player 1 drops out of the auction as, by definition of \textit{PP}, it prefers smaller subsets of items for a same predicted utility. If player 2 wins item 1 during the first round, the bid price of item 1 (respectively item 2) is even (respectively even) if temporarily won by player 1. The closing price are then $P=(11,11)$. Therefore, $f_\Gamma(p_0)$ has 50\% chance of returning $(12,11)$ and 50\% chance of returning $(11,11)$. Hence, $\E[f_{\Gamma}(p_0)]=(11.5,11)$. By performing a similar analysis, we can show that $\forall p\in\mathbb{R}^2,\E[f_{\Gamma}(p)]\in [0,11.5]^2$ and obtain the following closed-form expression for any $p\in [0,11.5]^2$: 
\begin{equation}
{\E[f_{\Gamma}}(p)]=\left\{
\begin{array}{ll}
      (1,0)  \quad &  \text{if }p_1+p_2\geq 20\text{ and }p_1\leq p_2\\
      (0,1) \quad &   \text{if }p_1+p_2\geq 20\text{ and }p_1> p_2\\
      (11.5,11)  \quad &  \text{if }p_1+p_2<20\text{ and }p_1\leq p_2\\
      (11,11.5)  \quad &  \text{if }p_1+p_2<20\text{ and }p_1> p_2 \\     
\end{array} 
\right.
\end{equation}
From there, it is easy to show that sequence $p_t$ converges to $p^*=(10,10)$ in Example 1.    
\end{example}

The general proof of the conjecture is left for future work. 

Computing an initial prediction of closing prices as above has mainly three advantages compared to other methods in the literature. (1) We observe that this sequence converges in all undertaken SAA-c game instances. (2) This method takes into account the auction's mechanism through $f_\Gamma$. (3) This prediction of closing price is not based only on the outcomes of a single specific strategy profile. Indeed, depending on the value of $p_t$, different strategy profiles are used across iterations. At a fixed iteration $t$, a single strategy profile is used to compute $\E[f_\Gamma(p_t)]$ as the strategy returned by \textit{PP} only depends on its initial prediction $P^{init}=p_t$.

\section{SM-MCTS bidding strategy} \label{MCTS bidding strat}

\subsection{Brief presentation of MCTS}

Given the large state space and game tree complexities, it is practically impossible to explore the SAA-c game tree exhaustively as soon as we depart from very small instances. Thus, only a small portion of the game tree, called the search tree, can be explored. MCTS is a search technique that iteratively builds a search tree using simulations through a process named search iteration (see Figure~\ref{fig:MCTS scheme}). Each search iteration is divided into four steps. (1) The \textit{selection phase} selects a path from the root to a leaf node of the search tree. (2) The \textit{expansion phase} chooses one or more children to be added to the search tree from the selected leaf node according to the available actions. (3) The \textit{simulation phase} simulates the outcome of the game from the newly added node. (4) The \textit{backpropagation phase} propagates backwards the outcome of the game from the newly added node to the root in order to update the diverse statistics stored in each selected node of the search tree. This process is repeated until some predefined computational budget (time, memory, iteration constraint) is reached. Before running $SMS^\alpha$, we compute our initial prediction of closing prices $p^*$ as presented in Section \ref{init prediction}. 

\begin{figure*}[t]
\centering
\includegraphics[width=1.5\columnwidth]{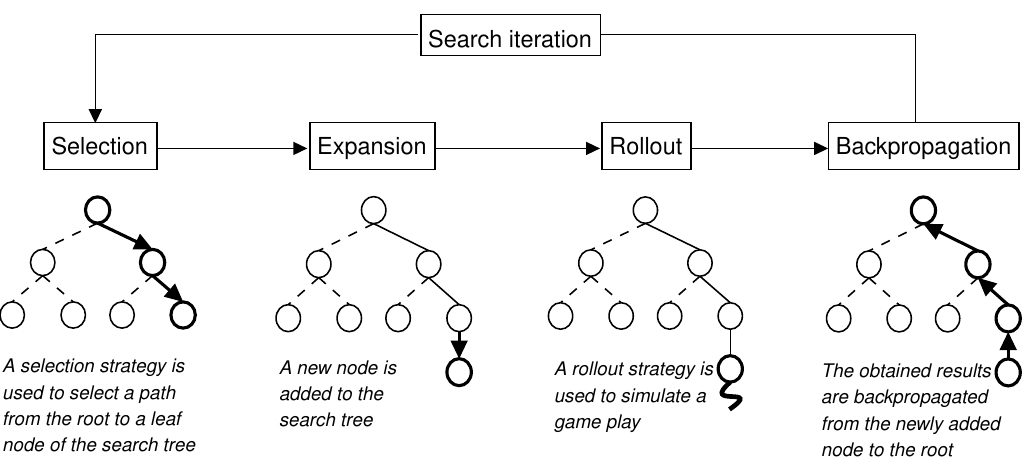}
\caption{MCTS scheme} 
\label{fig:MCTS scheme}
\end{figure*}

\subsection{Risk-averse rewards}

Ideally, one would want to maximise its expected profit while minimising its risk of exposure. However, achieving both objectives simultaneously may present potential conflicts. Indeed, taking risks can either be highly beneficial or lead to exposure depending on how the other players react. To do so, we introduce a new risk-averse reward incorporating both targets. For any strategy $\pi$, we define: 

\begin{equation}
    R_\alpha^\pi=(1+\alpha \mathbb{1}_{R^\pi<0})R^\pi
\end{equation}

where $\alpha$ is a hyperparameter which controls the risk aversion of $SMS^\alpha$. Note that 
\begin{equation}
 \E(R_\alpha^\pi)=\E(R^\pi)+\alpha \probP(R^\pi\textless0)\E(R^\pi|R^\pi\textless0)   
\end{equation}
where $\probP(R^\pi\textless0)\E(R^\pi|R^\pi\textless0)$ is the term corresponding to the losses induced by exposure in Equation~\eqref{expected utility}. Moreover, we define for any vector of price $P$ and any set of items $X$, $\sigma^\alpha(X,P)=(1+\alpha \mathbb{1}_{\sigma(X,P)<0})\sigma(X,P)$ which is a modified utility taking into account both of our objectives. We name it risk-averse utility. When the term utility is not explicitly specified as risk-averse, we refer to the utility described in Equation~\eqref{profit}.

The use of a linear scalarization function is a classical approach in multi-objective optimisation, multi-objective reinforcement learning~\cite{barrett2008learning}, constrained MDP~\cite{Lee2018MonteCarloTS} or POMDP~\cite{lee2018monte}.

\subsection{Search tree structure}

In order to maintain the simultaneous nature of SAA-c in the selection phase of $SMS^\alpha$, we use a Simultaneous Move MCTS (SM-MCTS) \cite{tak2014monte} (Figure~\ref{fig:Decoupled}). At each selection step, we select an $n$-tuple where each index $i$ corresponds to the action maximising the selection index of player $i$ given only its information set. By doing so, bids are selected simultaneously and independently. Each selection step corresponds to a complete bidding round of SAA-c. Hence, the depth of our search tree corresponds to how many rounds ahead $SMS^\alpha$ can foresee. The search tree nodes are defined by the eligibility of each bidder, the temporary winner and current bid price of each item. The vertices correspond to players' joint actions. Chance nodes are explicitly included in the search tree to break ties. There are three main advantages of using an SM-MCTS instead of an MCTS applied to a serialised game tree, i.e. turning SAA-c into a purely sequential game with perfect information. The first advantage is that it maintains the simultaneous move nature of SAA-c. The second advantage is that it does not increase the number of information sets making our learning process more efficient. The third advantage is that the number of selection steps to complete a bidding round of SAA-c is reduced from $n$ to $1$. Thus, the number of players $n$ is no longer a burden for planning a bidding strategy over many rounds.

\begin{figure*}[t]
\centering
\includegraphics[width=1.5\columnwidth]{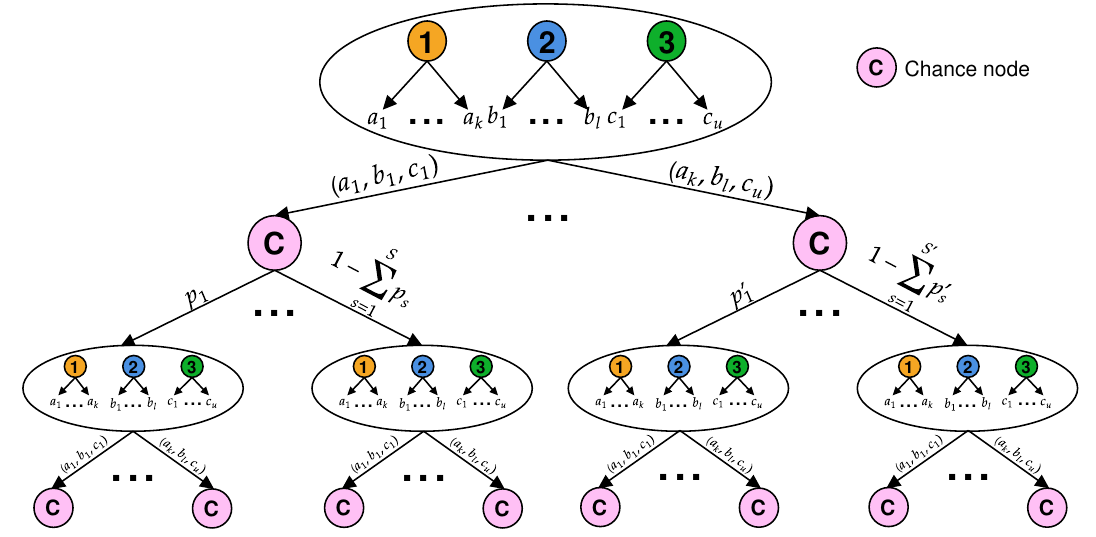}
\caption{SM-MCTS tree structure with explicit chance nodes for SAA-c game with 3 players} 
\label{fig:Decoupled}
\end{figure*}

\subsection{Selection}\label{selection}

At each selection step, players are asked to bid on the set of items which maximises their selection index. The selection phases ends when a terminal state of the SAA-c game or a non-expanded node, i.e. configuration of temporary winners, bid prices and eligibilities not yet added to the search tree, is reached. Our selection index is a direct application of the \textit{Upper Confidence bound applied to Trees} (UCT) \cite{kocsis2006bandit} to risk-averse rewards. Unlike usual applications of UCT, the size of the risk-averse reward support is unknown so we proceed to an online estimation of it. Each player $i$ chooses to bid on the set of items $x_i$ with highest score $q_{x_i}$ at information set $I_i$:

\begin{equation}
    \begin{split}
    q_{x_i}=\frac{r^\alpha_{x_i}}{n_{x_i}}+\max(c^\alpha_{x_i}-a^\alpha_{x_i},\varepsilon) \sqrt{\frac{2\log(\sum_{x'_i}n_{x'_i})}{n_{x_i}}}
    \end{split}
\end{equation}

where $r^\alpha_{x_i}$ is the sum of risk-averse rewards obtained after bidding on $x_i$ at $I_i$, $n_{x_i}$ is the number of times player $i$ has bidded on $x_i$ at $I_i$, $\varepsilon$ is the bid increment, $a^\alpha_{x_i}$ is the estimated lower bound and $c^\alpha_{x_i}$ is the estimated higher bound of the risk-averse reward support when bidding on $x_i$ at $I_i$. Thus, $\max(c^\alpha_{x_i}-a^\alpha_{x_i},\varepsilon)$ acts like the size of the risk-averse reward support when bidding on $x_i$ at $I_i$.

\subsection{Expansion}

The high branching factor due to the exponential growth of the game tree's width with the number of items $m$ prevents in-depth inspection of promising branches. Thus, it is necessary to reduce the action space at each information set of the search tree \cite{swiechowski2022monte}. To do so, each time a non-expanded node is added to the search tree, we select a maximum number $N_{act}$ of promising actions per information set. In order to obtain a relevant estimation of the risk of exposure when playing a specific move, it is important that we select individual actions and not joint actions. If a too low value is chosen for $N_{act}$, then a bidder can only play a very limited number of moves which affects its coordination with other bidders as well as its possibilities to avoid exposure. Thus, given thinking time constraints, a tradeoff between in-depth inspection and a wide range of possible moves must be found.

Passing its turn without bidding on any item is always included in the $N_{act}$ selected actions. This enables $SMS^\alpha$ to obtain shallow terminal nodes in its search tree which correspond to potential collusions between bidders and, thus, reduces the own price effect. The remaining $N_{act}-1$ actions correspond to the moves leading to the $N_{act}-1$ highest predicted utilities in strategy \textit{PP} with initial prediction $p^*$. Hence, the actions are chosen independently for each player, ensuring that the choices made for one player have no influence on the selections made by another. More formally, for each player $i$ at information set $I_i$ temporarily winning set of items $Y_i$ with eligibility $e_i$, the action of bidding on set of items $X_i$ is selected if $\sigma_i^\alpha(Y_i\cup X_i,\rho(p^*,P,Y_i))$ is one of the $N_{act}-1$ highest values with $P$ the current bid price. Only sets of items $X_i$ verifying $\sum_{j\in X_i\cup Y_i} \rho_j(p^*,P,Y_i)\leq b_i$ and $|X_i|+|Y_i|\leq e_i$ are considered. Statistics for each action are then initialised as follows:  
\begin{itemize}
    \item[$\ast$] $r^\alpha_{x_i}\gets 0$
    \item[$\ast$]  $n_{x_i}\gets 0$
    \item[$\ast$] $a^\alpha_{x_i}\gets +\infty$
     \item[$\ast$] $c^\alpha_{x_i}\gets -\infty$
\end{itemize}

\subsection{Rollout}

From the newly added node, an SAA-c game is simulated until the game ends. Players are asked to bid at each round of the rollout. The default strategy is usually to bid on a random set of items. However, it leads to absurd outcomes in this case with very high prices as player rarely all pass. Therefore, we propose an alternative approach. At the beginning of each rollout phase, we set $p^*_i=p^*+\eta_i$ with $\eta_i\sim U([-\varepsilon,\varepsilon]^m)$. Each player $i$ then plays \textit{PP} with initial prediction of closing prices $p^*_i$ during the entire rollout. Noise is added to our initial prediction $p^*$ to diversify players' bidding strategy and, hence, improve the quality of our sampling. To ensure that the prediction $p^*_i$ is always coherent with $p^*$ in all auctions, $\eta_i$ is chosen in $[-\varepsilon,\varepsilon]^m$. Hence, no absurd prediction can be drawn which could negatively impact our sampling. At the end of the rollout, an $n$-tuple is returned corresponding to the risk-averse utility obtained by each player. 

\subsection{Backpropagation}

The results obtained during the rollout phase are propagated backwards to update the statistics of the selected nodes. Let $V^\alpha_i$ be the risk-averse utility obtained by player $i$ at the end of the rollout. Let $x_i$ be the set of items on which player $i$ bidded at information state $I_i$ for one of the selected nodes. The statistics stored for $I_i$ are updated as follows:

\begin{itemize}
    \item[$\ast$] $r^\alpha_{x_i}\gets r^\alpha_{x_i}+V^\alpha_i$
    \item[$\ast$] $n_{x_i} \gets n_{x_i}+1$
    \item[$\ast$] $a^\alpha_{x_i}\gets \min(a^\alpha_{x_i},V^\alpha_i)$
     \item[$\ast$] $c^\alpha_{x_i}\gets \max(c^\alpha_{x_i},V^\alpha_i)$
\end{itemize}

\subsection{Transposition table}\label{transposition table}

Transposition tables are a common search enhancement used to considerably reduce the size of the search tree and improve performance of MCTS within the same computational budget \cite{childs2008transpositions}. By using such tables, we prevent the expansion of redundant nodes in our search tree and share the same statistics between transposed information states. This results in a significant improvement in performance of $SMS^\alpha$ for the same amount of thinking time. 

To identify each information set in the search tree, our hash function is based on two functions $h_1$ and $h_2$. The first returns a different integer for each combination of bid prices and allocations. The second returns a different integer for each eligibility vector. Hence, our hash function assigns a unique value to each information set in the search tree. More precisely, due to computational constraints, we can only assign a unique value for every node in the search tree with a depth lower than $R_{max}$. $R_{max}$ is a hyperparameter corresponding to an upper bound of the maximal depth (or rounds) in the final search tree. An example of function $h_1$ assigning a different integer for each combination of bid prices and allocations in a search tree of maximal depth $R_{max}$ is given in Algorithm \ref{alg:hash_d-SAA}. It uses as inputs the bid price vector $P^0$ at the root of the search tree, the bid price vector $P$ and the temporary winner $A_j$ of each item $j$ at a given node. If $A_j=0$, then item $j$ is temporarily allocated to the auctioneer. 

In practice, given the thinking time constraints in our experimental results, choosing $R_{max}=10$ is more than sufficient to guarantee a final search tree with maximal depth lower than $R_{max}$. Hence, our hash function acts as a perfect hash function as no type-1 error or type-2 error occurs \cite{zobrist1990new}.

\begin{algorithm}
\caption{Example of function $h_1$}\label{alg:hash_d-SAA}
\begin{algorithmic}
\State \textbf{Inputs Game:} $n$, $m$, $\varepsilon$
\State \textbf{Inputs Root Node:} Bid price vector $P^0$ 
\State \textbf{Inputs Node:} Bid price vector $P$, Allocation vector $A$ 
\State \textbf{Hyperparameter:} $R_{max}$
\State $h=0$
\State $step=R_{max}\times n$
\For{$j=1,2,...,m$}
\If{$A_j>0$} 
\small{$h+=(R_{max}\times (A_j-1)+\frac{P_j-P^0_j}{\varepsilon})step^{j-1}$}
\EndIf
\EndFor \\
\Return {h}
\end{algorithmic}
\end{algorithm}

\subsection{Final move selection}

The final move which is returned by $SMS^\alpha$ is the action which maximises the player's expected risk-averse reward at the root node. More formally, $SMS^\alpha$ returns $\underset{x_i}{\text{arg max}} \frac{r_{x_i}^\alpha}{n_{x_i}}$ for player $i$.

\section{Experiments} \label{Experiments}

In this section, we start by analysing the convergence rates of sequence $p_t$, notably through Example 1. Then, we show that our algorithm $SMS^\alpha$ largely outperforms state-of-the-art existing bidding algorithms in SAA-c, mainly by tackling own price effect and exposure more efficiently. This is first shown through typical examples taken from the literature and, then, through extensive experiments on instances of realistic size. We compare $SMS^\alpha$ to the following four strategies: 
\begin{itemize}
    \item $MS^\lambda$: An MCTS algorithm described in \cite{pacaud2022monte} which relies on two risk-aversion hyperparameters $\lambda^r$ and $\lambda^o$. 
    \item EPE: A \textit{PP} strategy using expected price equilibrium \cite{wellman2008bidding} as initial prediction. 
    \item SCPD: A distribution price prediction strategy using self-confirming price distribution \cite{wellman2008bidding} as initial distribution prediction.
    \item SB: Straightforward bidding \cite{milgrom2000putting}.
\end{itemize}

The four strategies $MS^\lambda$, EPE, SCPD and SB initially rely on the definition of \textit{PP} for unconstrained environments \cite{wellman2008bidding}. We extend them to budget and eligibility constrained environments in the same way as it is done in Definition \ref{constr PP}. In all experiments, none of the bidders are aware of their opponents' strategy.

Each algorithm is given respectively $150$ seconds of thinking time. Initial prediction of closing prices are done offline before the auction starts and, therefore, are excluded from the thinking time. This step usually takes a few minutes. All experiments are run on a server consisting of Intel\textregistered Xeon\textregistered E5-2699 v4 2.2GHz processors. In all upcoming experiments, the hyperparameter $\alpha$ of $SMS^\alpha$ takes the value $7$ and the risk-aversion hyperparameters $\lambda^r$ and $\lambda^o$ of $MS^\lambda$ both take the value $0.025$. The maximum number of expanded actions per information set $N_{act}$ of $SMS^\alpha$ is set to $20$. The choice of the hyperparameters is motivated in Section \ref{choice hyperparameters}.

\subsection{Convergence of sequence $p_t$}

One of the main advantages of using our method to compute an initial prediction is the convergence of sequence $p_t$. Even though this convergence has only been observed and not proven, it is possible to derive rates of convergence in small instances. For instance, in Example 1, it can be shown that $\forall t\geq 1$, $p_t$ belongs to the diamond defined by the points $(10-\frac{10}{t},10-\frac{9}{t})$, $(10-\frac{9}{t},10-\frac{10}{t})$, $(10+\frac{7}{4t},10+\frac{3}{4t})$ and $(10+\frac{3}{4t},10+\frac{7}{4t})$ which converges to $p^*=(10,10)$. We represent in Figure \ref{fig:Guarantee} the sequence $p_t^1$ with its corresponding lower bound and upper bound.

\begin{figure}[t]
\centering
\includegraphics[width=0.75 \columnwidth]{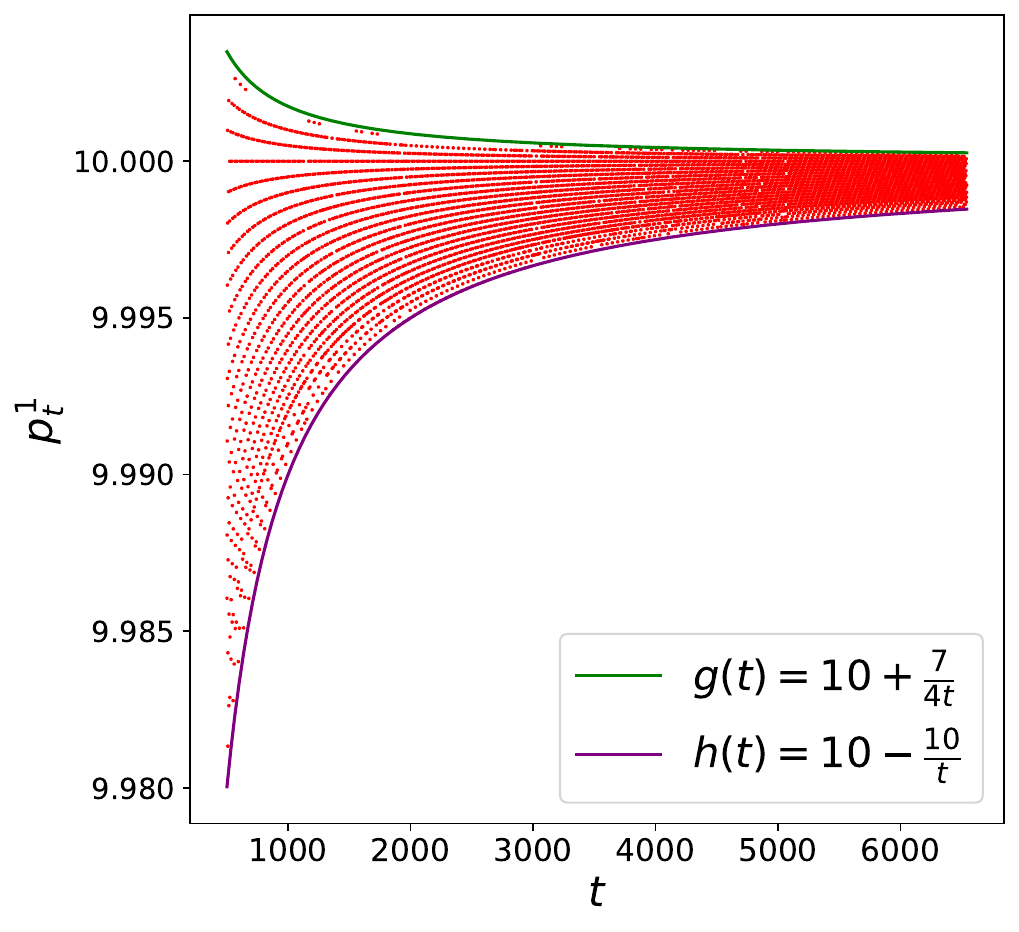}
\caption{Convergence of sequence $p_t^1$ with its respective upper bound $g(t)$ and lower bound $h(t)$ in Example 1.} 
\label{fig:Guarantee}
\end{figure}

In larger instances, we observe similar rates of convergence. However, computing such bounds seems unrealistic as obtaining a closed-form expression of $\E[f_\Gamma(p_t)]$ seems untractable.

\subsection{Test experiments}

One of the greatest advantages that MCTS methods have over other bidding algorithms is the capacity to judge pertinently in which situations adopting a demand reduction strategy is more beneficial. Indeed, through the use of its search tree, an MCTS method is capable of determining if it is more profitable to concede items to its opponents to keep prices low or to bid greedily. To highlight this feature, we propose the following experiment in a 2-item auction between two players with additive value functions. Each player values each item at $l=10$. Player 1 has a budget $b_1\geq 20$. Given that, the optimal strategy for player 2 is to bid on the cheapest item if it is not temporarily winning any item. Otherwise, it should pass. The optimal strategy for player 1 fully depends on its opponent's budget $b_2$. For an infinitesimal bid increment $\varepsilon$,

\begin{itemize}
    \item If $b_2\leq\frac{l}{2}$, player 1's optimal strategy is to play straightforwardly and it obtains an expected utility of $l-2b_2$.
    \item If $b_2\geq\frac{l}{2}$, player 1's should adopt a demand reduction strategy and it obtains an expected utility of $l$.
\end{itemize}

We plot in Figure \ref{fig:Test_exp} the expected utility $\E(\sigma_1)$ of player 1 for each strategy given player 2's budget $b_2$. The three algorithms SB, EPE, SCPD always suggest to player 1 to bid greedily and never propose a demand reduction strategy even when it is highly profitable ($b_2>\frac{l}{2}$). However, both MCTS methods perfectly adopt the appropriate strategy. This experiment highlights the fact that $SMS^\alpha$ selects the most profitable strategy and tackles own price effect, at least in simple budget and eligibility constrained environments. 

\begin{figure}[h]
\centering
\includegraphics[width=0.78 \columnwidth]{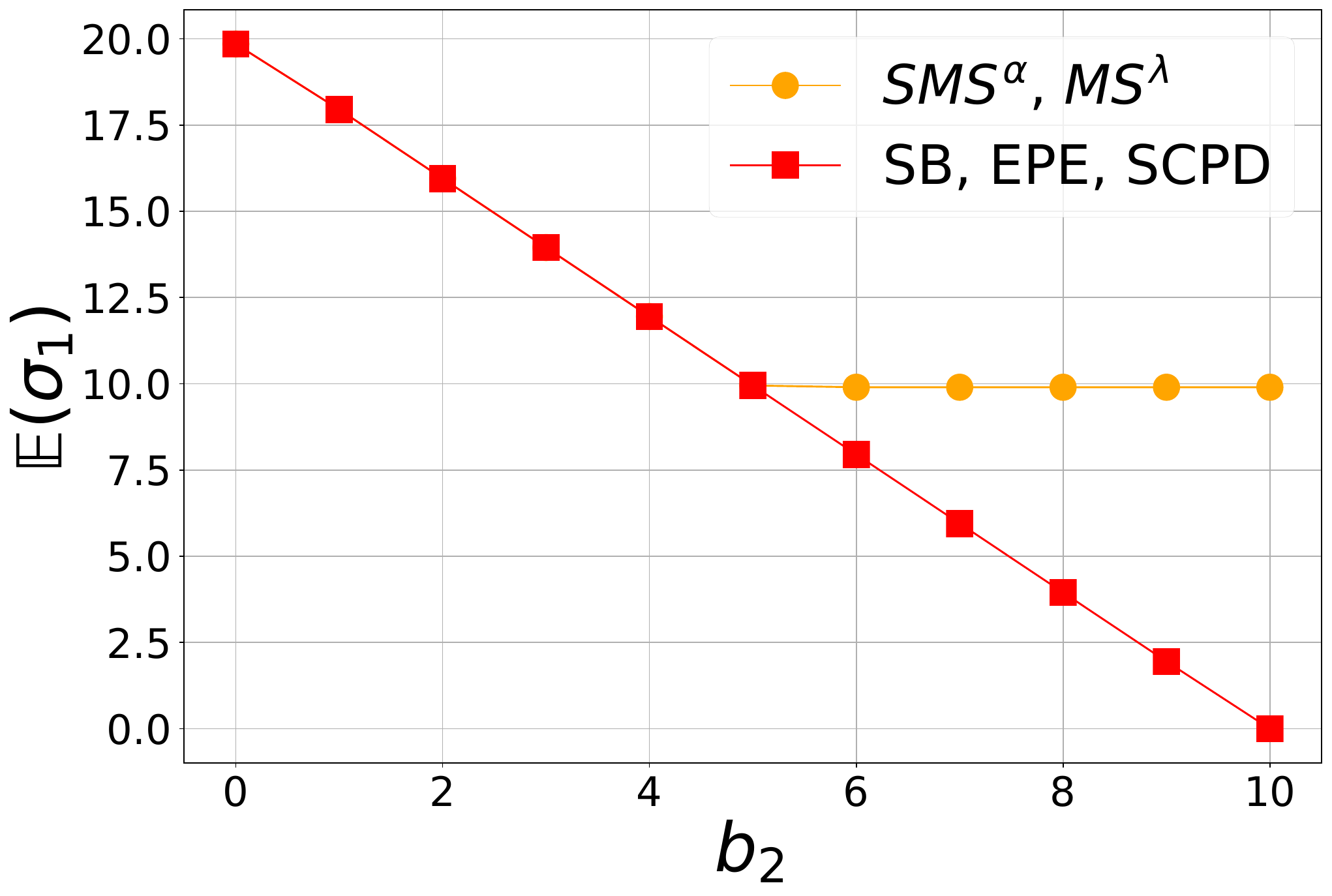}
\caption{Evolution of player 1's expected utility $\E(\sigma_1)$ depending on strategy versus player 2's budget $b_2$ given that player 2 plays optimally ($\varepsilon=0.1$).} 
\label{fig:Test_exp}
\end{figure}

Furthermore, $SMS^\alpha$ is capable of avoiding obvious exposure. To highlight this feature, we use the SAA-c game presented in Example 1 where player 2's budget $b_2=16$. The optimal strategy for player 1 is to play straightforwardly. Similarly to the preceding experiment, the optimal strategy for player 2 fully depends on its opponent's budget $b_1$. 
\begin{itemize}
    \item If $b_1<8$, player 2's optimal strategy is to play straightforwardly. 
    \item If $b_1\geq 8$, player 2's optimal strategy is to drop out of the auction to avoid exposure. 
\end{itemize}

We plot in Figure \ref{fig:Test_exp_exposure} the expected utility $\E(\sigma_2)$ of player 2 for each strategy given player 1's budget $b_1$. The two algorithms SCPD and SB always suggest to player 2 to bid straightforwardly leading player 2 to exposure when $b_1\geq 8$. $MS^\lambda$ never leads player 2 to exposure. However, it suggests to drop out prematurely of the auction in some situations with no risk of exposure and, hence, incurs a loss of easy profit ($b_1=7$). $SMS^\alpha$ and $EPE$ perfectly adopt the optimal strategy. This experiment highlights the fact that $SMS^\alpha$ perfectly adopts the most profitable strategy and tackles efficiently exposure, at least in simple budget and eligibility constrained environments.

\begin{figure}[ht]
\centering
\includegraphics[width=0.78 \columnwidth]{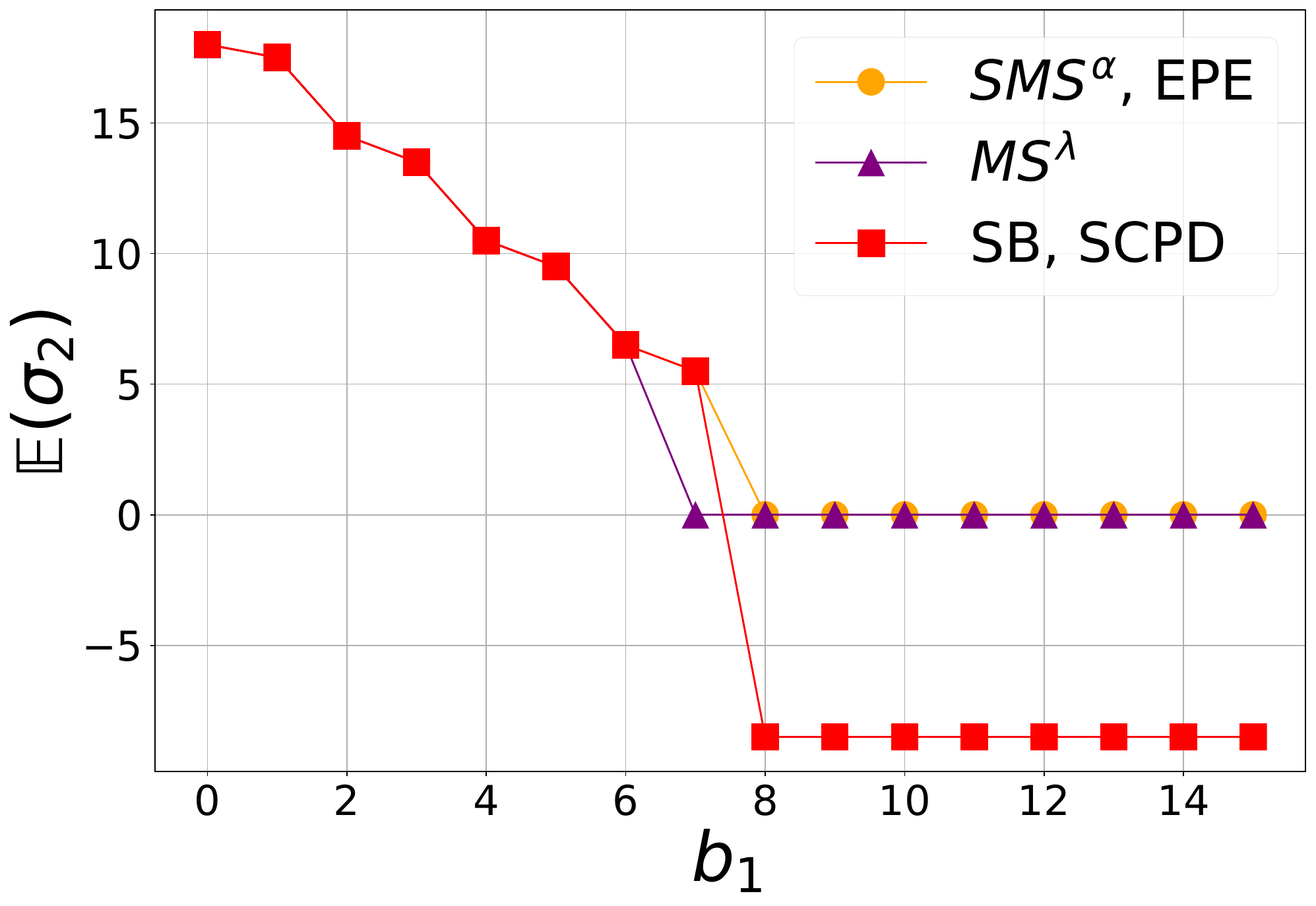}
\caption{Evolution of player 2's expected utility $\E(\sigma_2)$ depending on strategy versus player 1's budget $b_1$ given that player 1 plays optimally in Example 1 ($\varepsilon=1$).} 
\label{fig:Test_exp_exposure}
\end{figure}

\subsection{Extensive experiments} \label{extensive experiments}

In this section, we study instances of realistic size with $n=4$ and $m=11$. Each experimental result has been run on 1000 different SAA-c instances. With the exception of \cite{pacaud2022monte}, all experimental results in the literature are obtained for specific settings of SAA, i.e., using value functions with some specific property such as superadditivity \cite{goeree2014equilibrium,wellman2008bidding,reeves2005exploring}. Hence, it is difficult to conclude on the effectiveness of a method in more generic settings. Therefore, we propose a more general approach to generate value functions by making no additional assumption on its form. Budgets are drawn randomly. 

\begin{setting}\label{general setting}
Let $\Gamma$ be an instance of SAA-c with $n$ bidders, $m$ items and bid increment $\varepsilon$. Each player $i$ has a budget $b_i\sim U([b_{min},b_{max}])$ with $U$ the uniform distribution. Its value function $v_i$ is built as follows: $v_i(\emptyset)=0$ and, for any set of goods $X$,
\begin{equation}
    v_i(X)\sim U([\max_{j\in X}v_i(X\backslash \{j\}),V+\max_{j\in X}v_i(X\backslash \{j\})+v_i(\{j\})])
\end{equation}
\end{setting}

Drawing value functions through a uniform distribution is widely used for creating auction instances \cite{wellman2008bidding,reeves2005exploring}. In our setting, the lower-bound ensures that $v_i$ respects the \textit{free disposal} \cite{milgrom2000putting} condition. The upper-bound caps the maximum surplus of complementarity possibly gained by adding an item $j$ to the set of goods $X\backslash \{j\}$ by $V$. As valuations are always finite, any value function can be represented by our setting for a sufficiently large $V$. For $V=0$, only subadditive functions are considered. For $V>0$, goods can either be complements or substitutes. In our experimental results, value functions and budgets are generated for each instance as above with $\varepsilon=1$, $b_{min}=10$, $b_{max}=40$ and $V=5$. 

In the upcoming analysis, the average price paid per item won, the ratio of items won, the expected exposure and the exposure frequency are obtained by confronting a strategy $A$ to a strategy $B$. To facilitate our study, each measure of $A$ against $B$ is obtained by averaging the results obtained for the three following strategy profiles: ($A$,$B$,$B$,$B$), ($A$,$A$,$B$,$B$) and ($A$,$A$,$A$,$B$). For instance, if $A=SMS^\alpha$ and $B=\text{SB}$, the
average price paid per item won by $SMS^\alpha$ in these three strategy profiles is respectively: $5.96$, $5.46$ and $4.62$. Hence, the average price payer per item won by $SMS^\alpha$ against $SB$ is $5.35$. 

\subsubsection{Expected Utility}

\begin{figure*}[!t]
\centering
\subfloat[$SMS^\alpha$ vs $MS^\lambda$]{\includegraphics[width=2.5in]{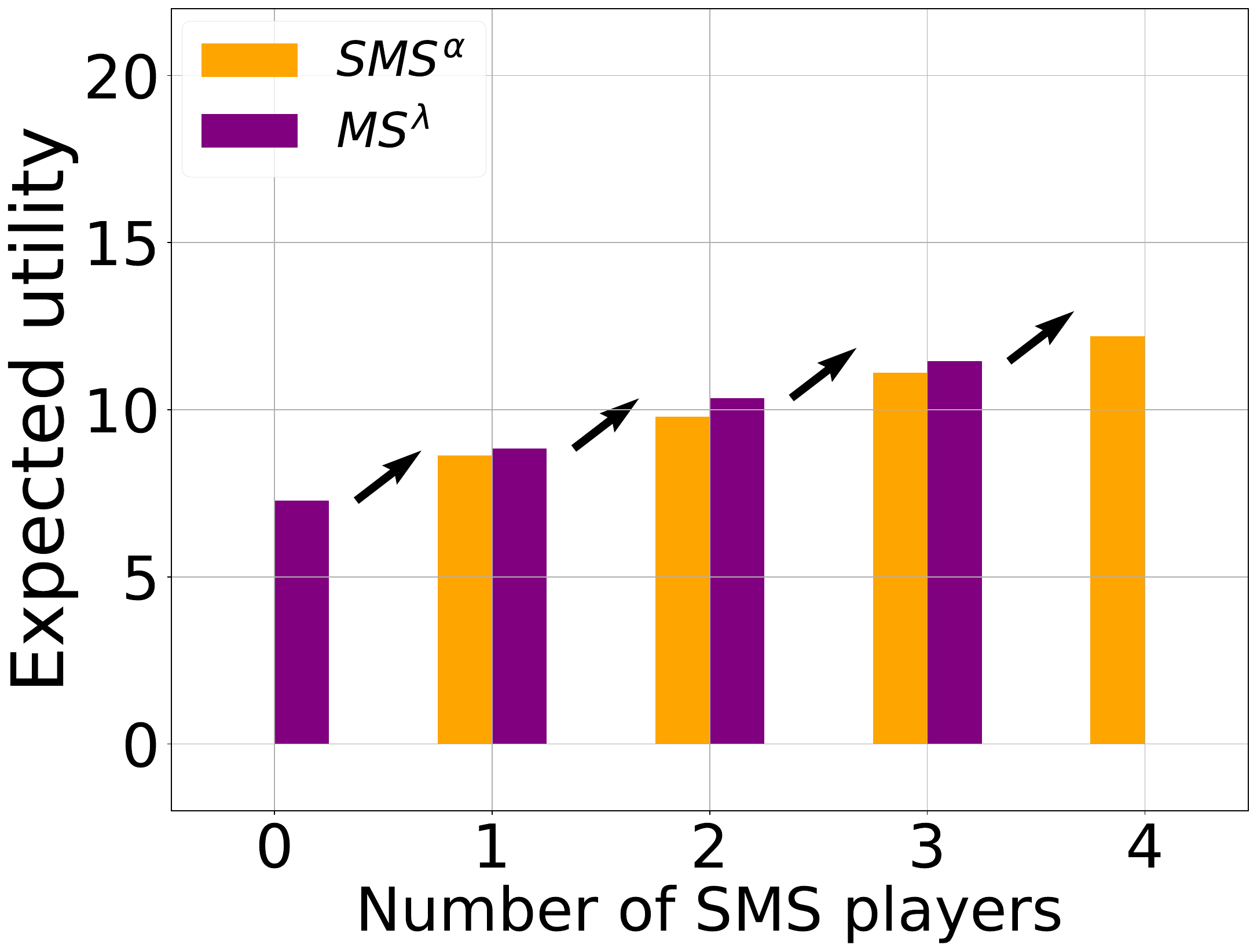}%
\label{1}}
\hfil
\subfloat[$SMS^\alpha$ vs EPE]{\includegraphics[width=2.5in]{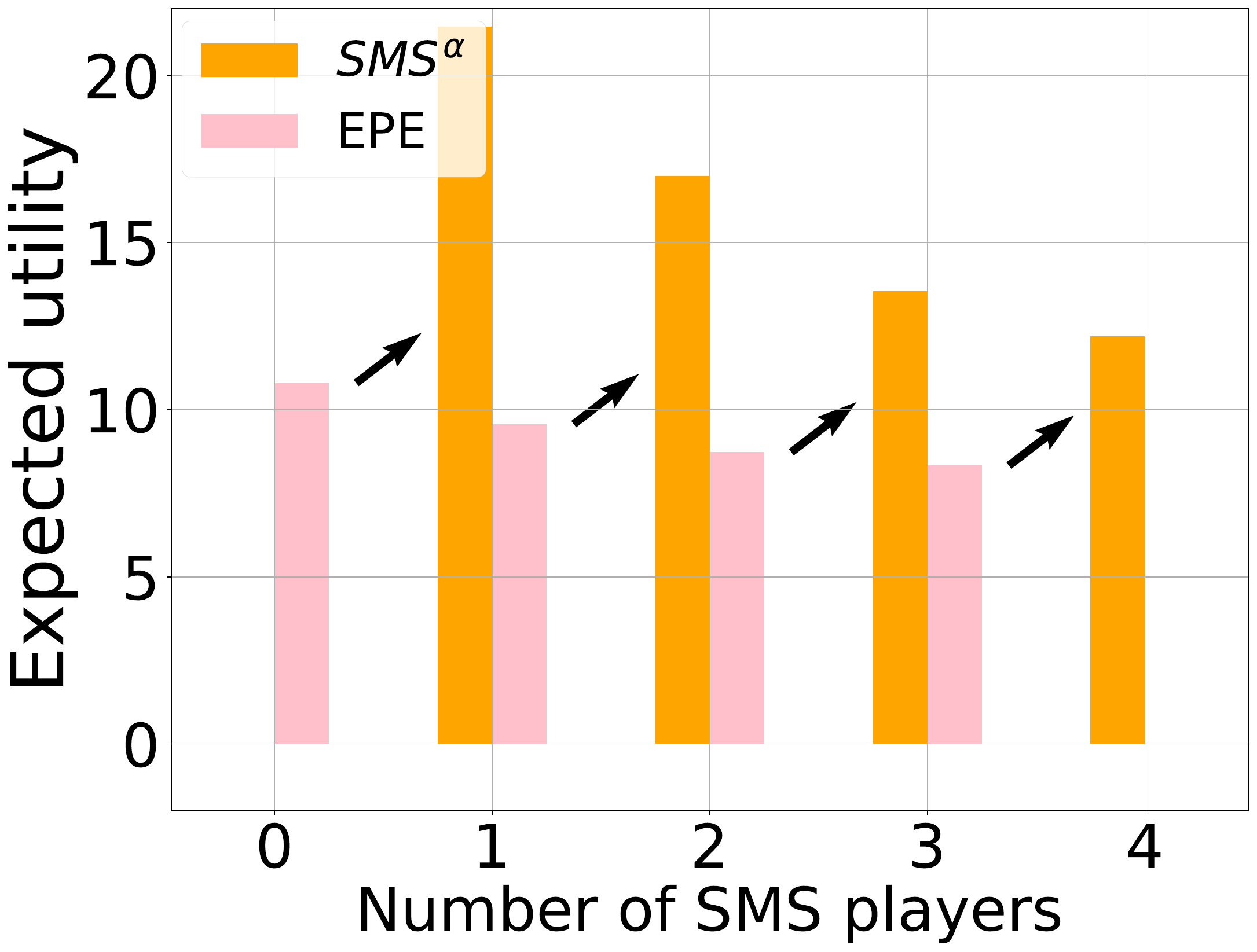}%
\label{Expected utility EPE}}

\subfloat[$SMS^\alpha$ vs SCPD]{\includegraphics[width=2.5in]{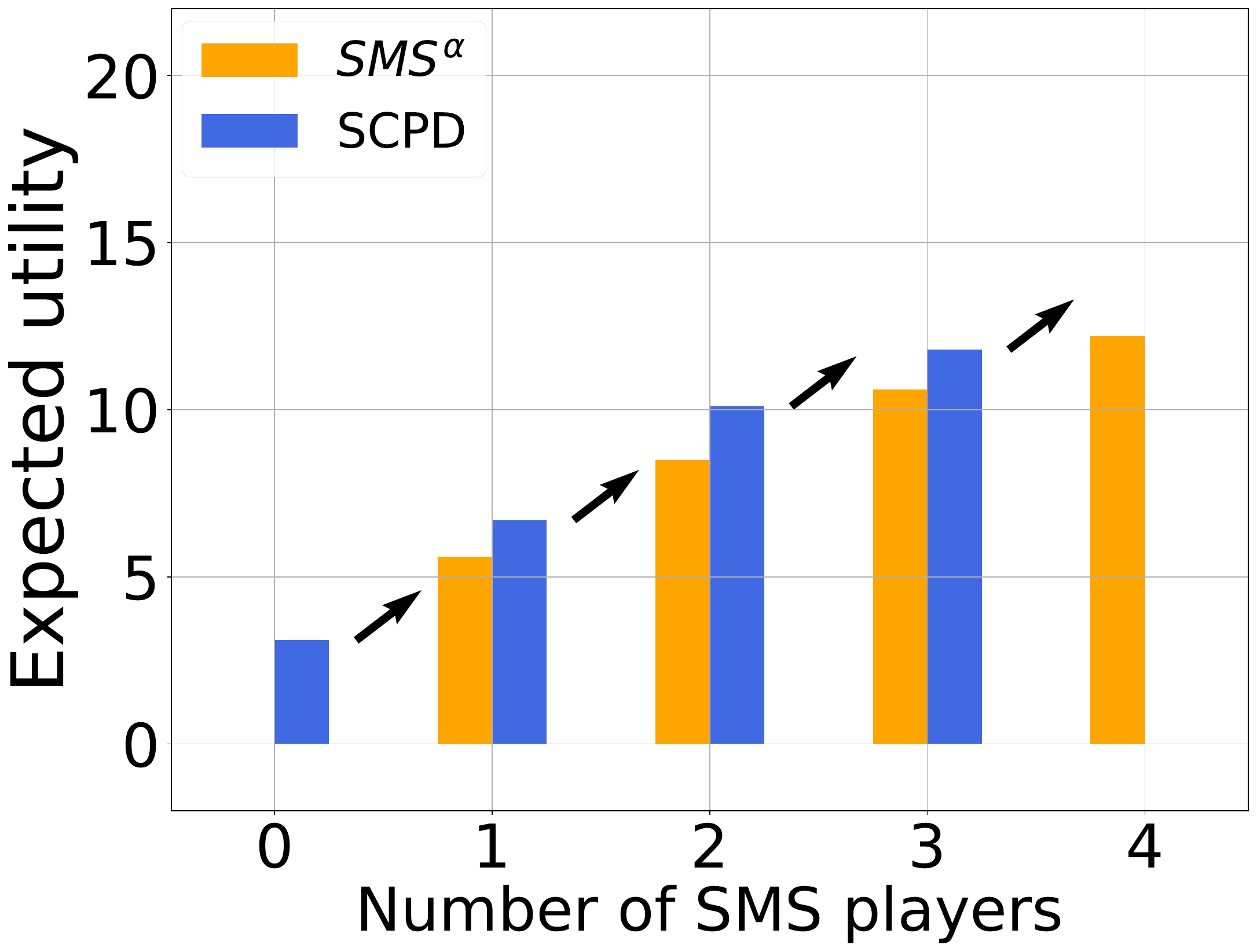}%
\label{4}}
\hfil
\subfloat[$SMS^\alpha$ vs SB]{\includegraphics[width=2.5in]{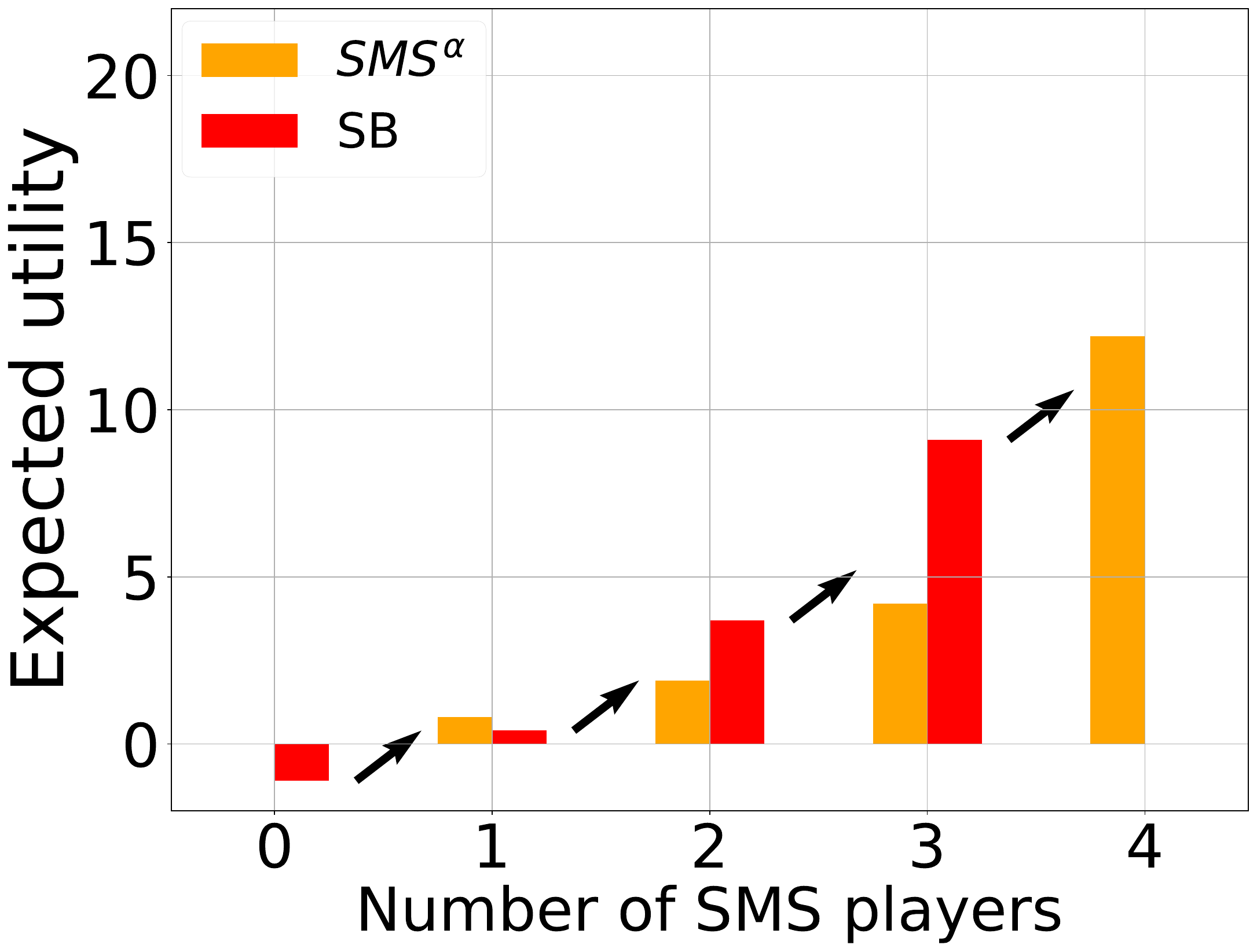}%
\label{5}}
\caption{Normal-form expected utility for an SAA-c game with five 
strategies ($n=4$, $m=11$, $\varepsilon=1$)}
\label{Expected utility}
\end{figure*}

To facilitate our analysis, we study the normal form game in expected utility where each player has the choice between playing $SMS^\alpha$ or another strategy $A$. The same empirical game analysis approach was employed by Wellman et al. in \cite{wellman2008bidding}. More precisely, we map each strategy profile to the estimated expected utility obtained by each player in the 1000 SAA-c instances. The four resulting empirical games for each possible strategy $A$ are given in Figure \ref{Expected utility}.

For example, in Figure \ref{Expected utility EPE}, if all bidders play EPE, each bidder obtains an expected utility of $10.8$. In the case of three EPE bidders and one $SMS^\alpha$ bidder, the $SMS^\alpha$ bidder obtains an expected utility of $21.5$. Hence, if all bidders play EPE, a bidder can double its expected utility by switching to $SMS^\alpha$. Therefore, deviating to $SMS^\alpha$ is profitable if all bidders play EPE. This is also the case for the three other possible deviations in Figure \ref{Expected utility EPE}. Hence, in the empirical game where bidders have the choice between playing $SMS^\alpha$ or EPE, each bidder has interest in playing $SMS^\alpha$. We can clearly see that all deviations to $SMS^\alpha$ are also strictly profitable in the three other empirical games. Hence, in each empirical game, a bidder should play $SMS^\alpha$ to maximise its expected utility. Therefore, the strategy profile ($SMS^\alpha$, $SMS^\alpha$, $SMS^\alpha$, $SMS^\alpha$) is a Nash equilibrium of the normal-form SAA-c game in expected utility with strategy set \{$SMS^\alpha$, $MS^\lambda$, EPE, SCPD, SB\}.

Moreover, the strategy profile where all bidders play $SMS^\alpha$ has a substantially higher expected utility than any other strategy profile where all bidders play the same strategy. This is mainly due to the fact that $SMS^\alpha$ tackles efficiently the own price effect. For instance, in Figure \ref{Expected utility}, the expected utility of the strategy profile where all bidders play $SMS^\alpha$ is respectively $1.13$, $1.68$ and $3.94$ times higher than the ones where all bidders play EPE, $MS^\lambda$ and SCPD. 

The fact that the expected utility obtained by the strategy profile where all bidders play EPE is relatively close to the one where all bidders play $SMS^\alpha$ can be explained as follows. To compute their expected price equilibrium as initial prediction of closing prices, all EPE bidders in our experiments share the same initial price vector and adjustment parameter in their tâtonnement process. This tâtonnement process is independent of the auction's mechanism and only relies on the estimated valuations of the players. Hence, as SAA-c is a game with complete information, all EPE bidders share the same initial prediction of closing prices and can therefore split up the items between them more or less efficiently.

Not all algorithms have the ability of achieving good coordination between bidders. For instance, the strategy profile where all bidders play SB leads to a negative expected utility. Hence, in this specific case, bidders would have preferred not to participate in the auction. This highlights the fact that playing SB is a very risky strategy and mainly leads to exposure. 

We believe that $SMS^\alpha$ outperforms the four other strategies for the three following reasons:
\begin{itemize}
    \item it better judges when to perform demand reduction or to bid greedily.
    \item it better tackles the own price effect without putting itself in a vulnerable position because of eligibility constraints.
    \item it better tackles the exposure problem.
\end{itemize}

\subsubsection{Own Price Effect}

\begin{figure*}[!t]
\centering
\subfloat[Average price paid per item won]{\includegraphics[width=3in]{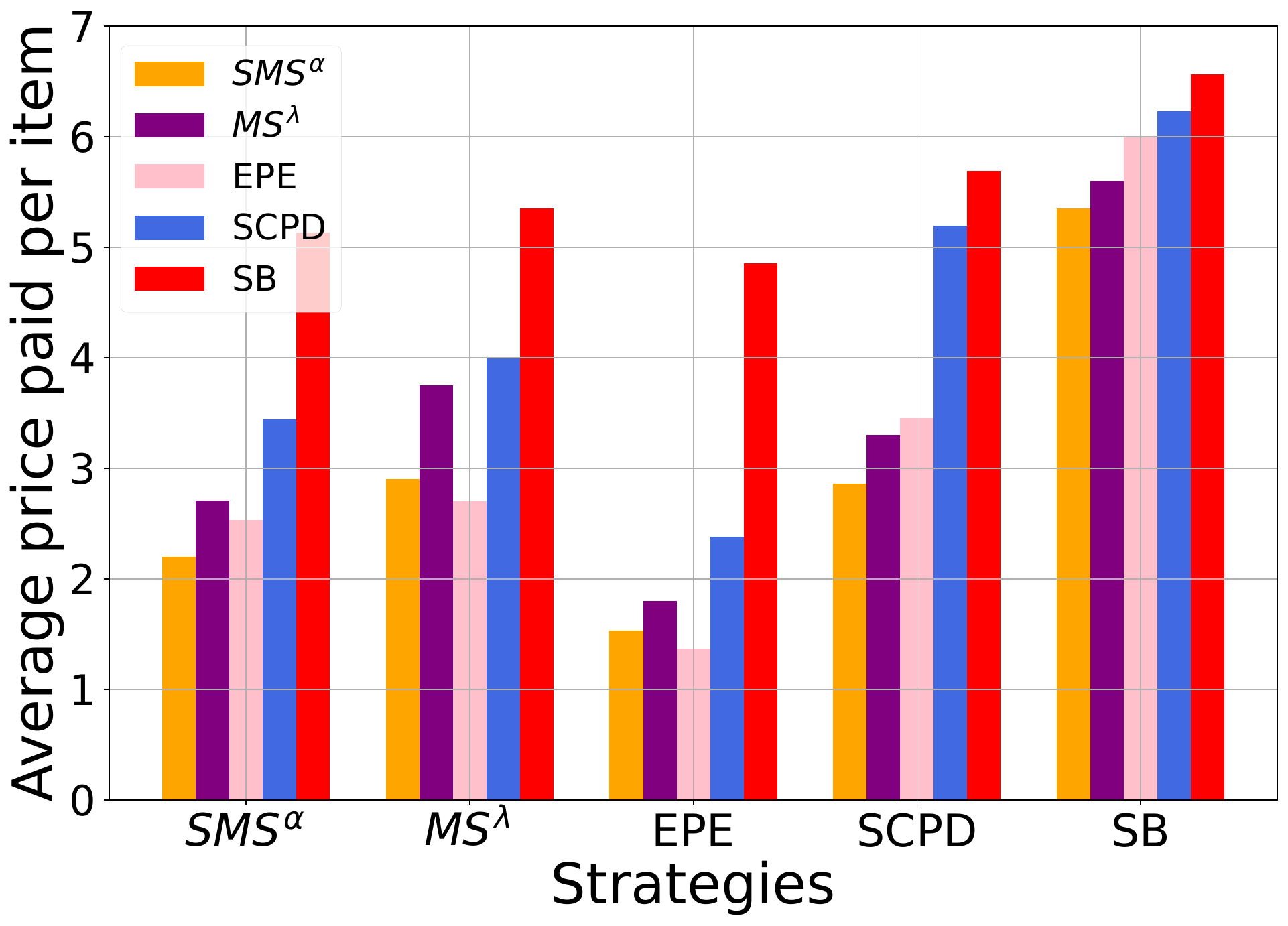}%
\label{avg price}}
\hfil
\subfloat[Ratio of items won]{\includegraphics[width=3.1in]{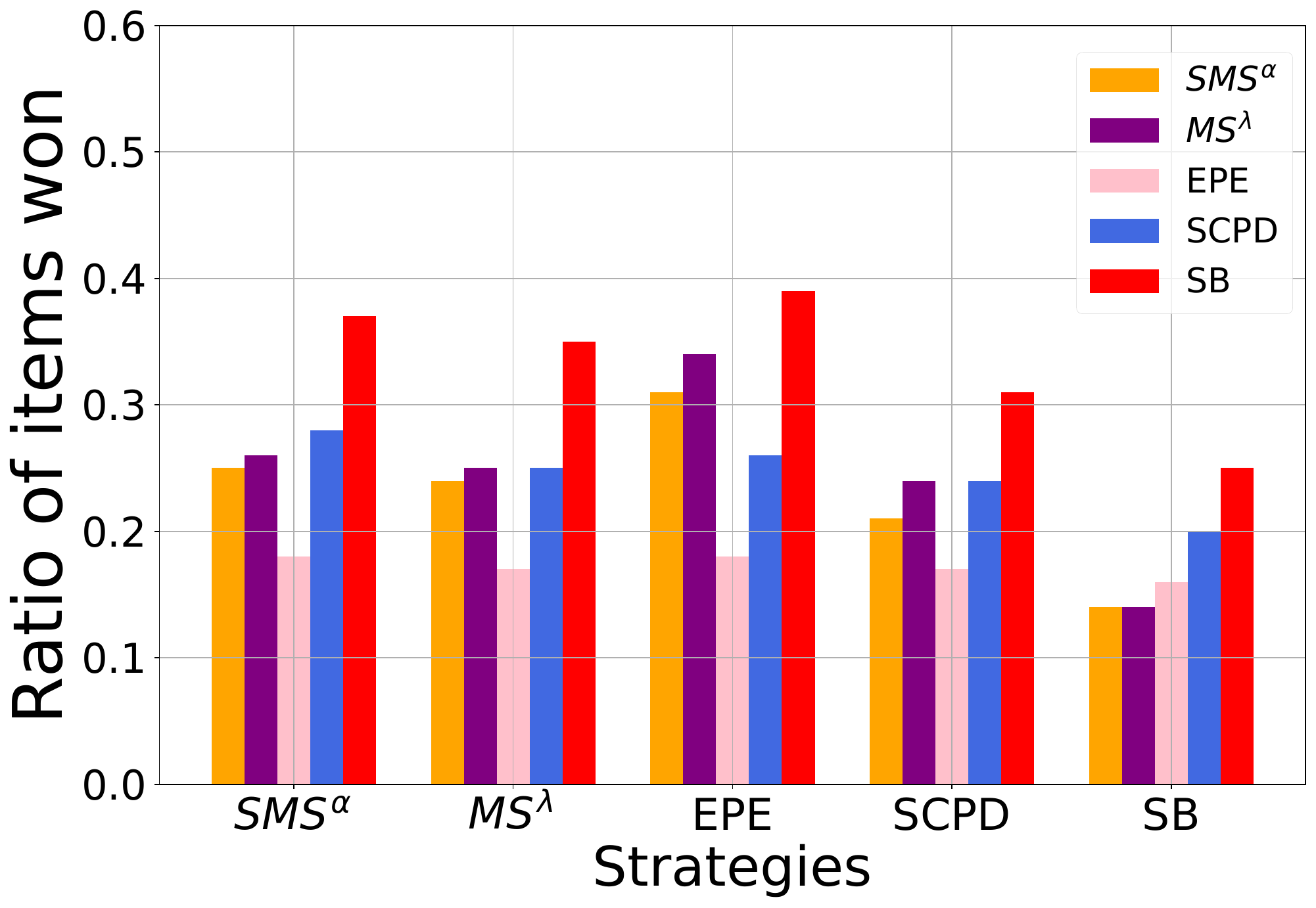}%
\label{ratio itm}}
\caption{Own price effect analysis for an SAA-c game with five strategies ($n=4$, $m=11$, $\varepsilon=1$)}
\label{fig:Own_price_effect}
\end{figure*}

\begin{figure*}[!t]
\centering
\subfloat[Expected exposure]{\includegraphics[width=3.1in]{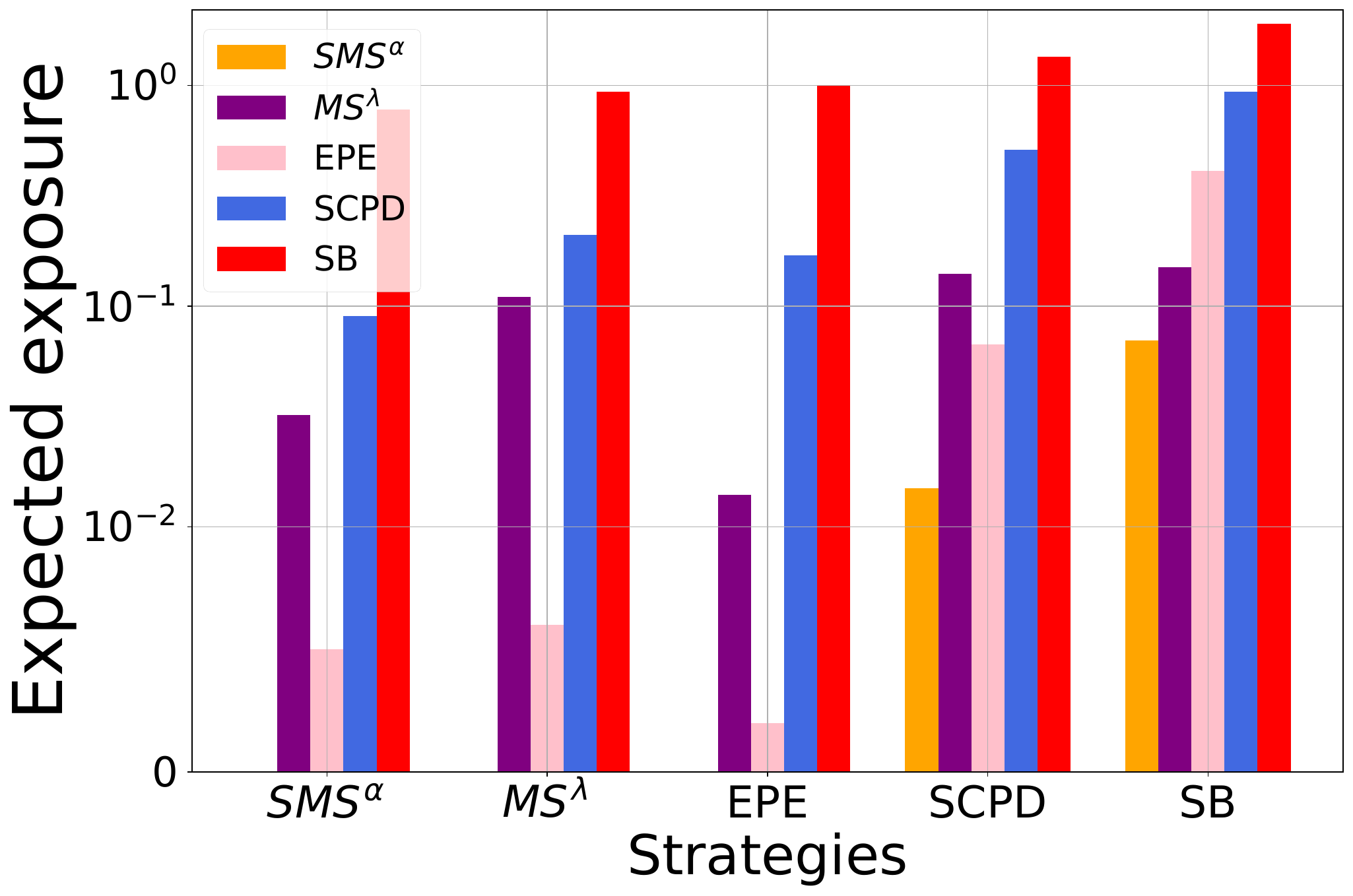}%
\label{fig:Estimated Exposure}}
\hfil
\subfloat[Exposure frequency]{\includegraphics[width=3in]{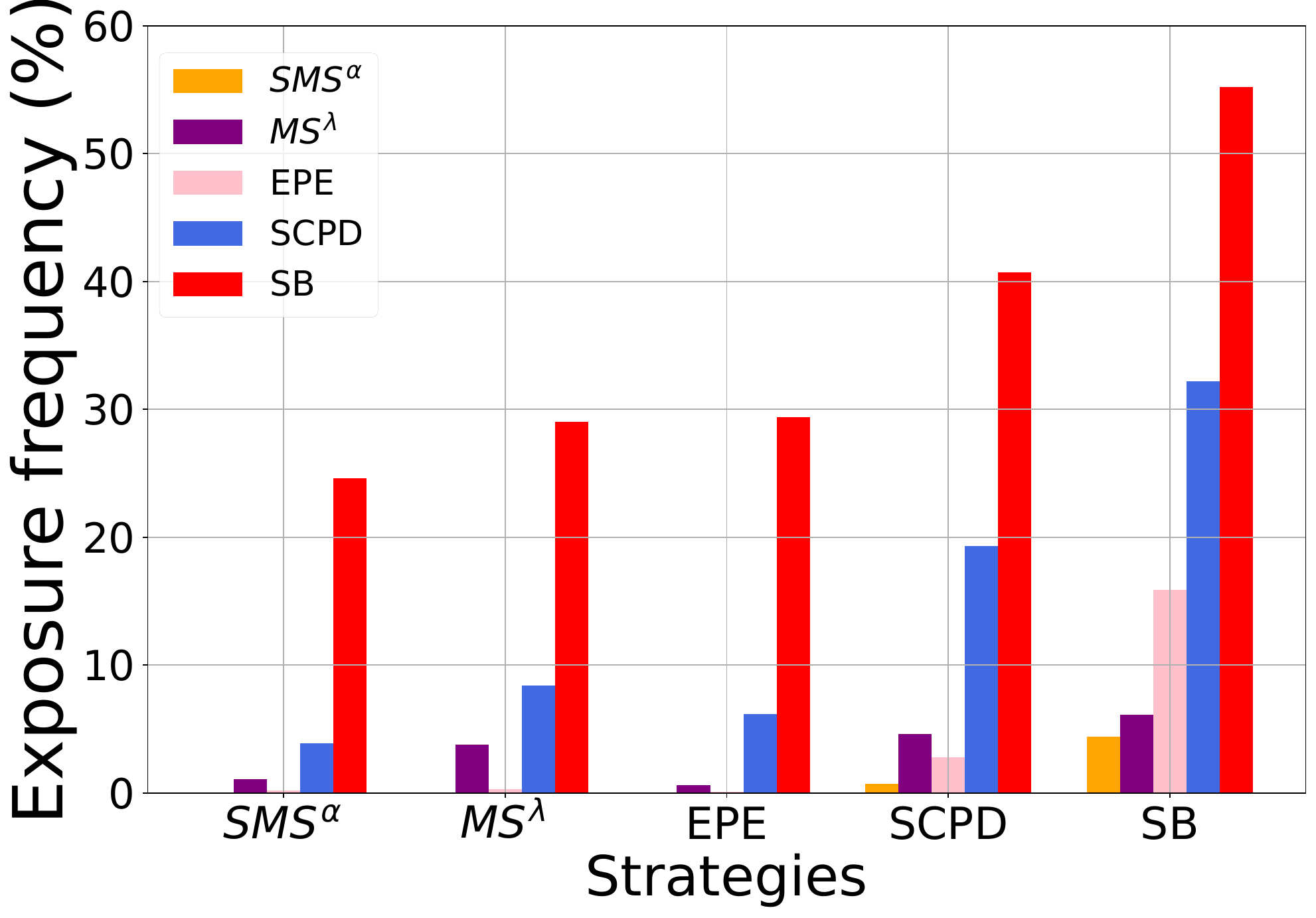}%
\label{fig:Exposure frequency}}
\caption{Exposure analysis for an SAA-c game with five strategies ($n=4$, $m=11$, $\varepsilon=1$)}
\label{fig:Exposure}
\end{figure*}

To analyse the own price effect, we plot in Figure \ref{avg price} the average price paid per item won and in Figure ~\ref{ratio itm} the ratio of items won by each strategy $A$ against every strategy $B$ displayed on the x-axis. For instance, if $A=SMS^\alpha$ and $B=\text{EPE}$, the average price paid per item won and the ratio of items won by $SMS^\alpha$ against EPE are respectively $1.53$ and $0.31$. They both correspond respectively  to the orange bar above index EPE on the x-axis. If $A=\text{EPE}$ and $B=SMS^\alpha$, then the average price paid per item won and the ratio of items won by EPE against $SMS^\alpha$ are respectively $2.53$ and $0.18$. They both correspond respectively to the pink bar above index $SMS^\alpha$ on the x-axis.

In Figure \ref{avg price}, we can clearly see that $SMS^\alpha$ acquires items at a lower price in average than the other strategies against $SMS^\alpha$, SCPD and SB. For instance, $SMS^\alpha$ spends $13.3\%$, $17.1\%$, $44.9\%$ and $49.8\%$ less per item won against SCPD than $MS^\lambda$, EPE, SCPD and SB respectively. Moreover, against $MS^\lambda$ and EPE, only EPE spends slightly less than $SMS^\alpha$ per item won.

Regarding strategy profiles where all bidders play the same strategy, the one corresponding to $SMS^\alpha$ has an average price paid per item won $1.70$, $2.35$ and $2.98$ times lower than $MS^\lambda$, SCPD and SB respectively. Moreover, by looking at Figure \ref{ratio itm}, we can see that all items are allocated when all bidders play $SMS^\alpha$. Being capable of splitting up all items at a relatively low price explains why the expected utility of the strategy profile where all bidders play $SMS^\alpha$ is substantially higher than the ones where all bidders play a same other strategy. Only obtaining items at a low price is not sufficient. For instance, when all bidders play EPE, the average price paid per item won is $1.6$ times lower than when all bidders play $SMS^\alpha$. However, only $72\%$ of all items are allocated. Hence, this strategy profile achieves a lower expected utility than if all bidders had played $SMS^\alpha$.

Moreover, the fact that the average price per item won when all bidders play EPE is relatively close to $\varepsilon$ raises an important strategic issue. Indeed, to obtain such low price, EPE bidders drastically reduce their eligibility during the first round without considering the fact that they might end up in a vulnerable position. Hence, an EPE bidder can easily be deceived. This explains why a bidder doubles its expected utility if it decides to play $SMS^\alpha$ instead of EPE when all its opponents are playing EPE in Figure \ref{Expected utility EPE}. After the first round, $SMS^\alpha$ easily takes advantage of the weak position of its opponents. By gradually decreasing its eligibility, a $SMS^\alpha$ bidder tackles efficiently the own price effect and avoids putting itself in vulnerable positions.

\subsection{Exposure} \label{Section exposure}

To analyse exposure, we plot in Figure \ref{fig:Estimated Exposure} the expected exposure of each strategy $A$ against every strategy $B$ displayed on the x-axis. Similarly, we plot in Figure \ref{fig:Exposure frequency} the exposure frequency of each strategy $A$ against every strategy $B$ displayed on the x-axis. For instance, if $A=SMS^\alpha$ and $B=\text{SB}$, the expected exposure and exposure frequency of $SMS^\alpha$ against $SB$ are respectively $0.07$ and $4.4\%$. They both correspond respectively to the orange bar above index SB on the x-axis in Figure \ref{fig:Estimated Exposure} and Figure \ref{fig:Exposure frequency}.

Firstly, in the situation where all bidders decide to play the same strategy, $SMS^\alpha$ has the remarkable property of never leading to exposure. This is not the case for the four other strategies. Secondly, $SMS^\alpha$ is the only strategy which never suffers from exposure against $MS^\lambda$ and EPE. Thirdly, even against SCDP and SB, $SMS^\alpha$ is rarely exposed. It has the lowest expected exposure and exposure frequency. For instance, $SMS^\alpha$ induces $9.3$, $4.5$, $34$ and $90$ times less expected exposure against SCPD than $MS^\lambda$, EPE, SCPD and SB respectively. Moreover, regarding exposure frequency, by playing $SMS^\alpha$ a bidder has $6.6$, $4$, $27.6$ and $58.1$ times less chance of ending up exposed against SCPD than $MS^\lambda$, EPE, SCPD and SB respectively. 

Hence, not only does $SMS^\alpha$ achieve higher expected utility than state-of-the-art algorithms but it also takes less risks. 

In this analysis, we exclusively focus on empirical games where bidders have the choice between only two distinct bidding strategies. In Appendix \ref{Appendix A}, we show that $SMS^\alpha$ still remains highly efficient even when bidders have access to a wider range of bidding strategies. Moreover, we also ran additional experiments on instances with different numbers of players and items in Appendix \ref{Appendix B}. The same conclusions can be drawn as for instances of size ($n=4$, $m=11$, $\varepsilon=1$), such as $SMS^\alpha$ achieves higher expected utility than state-of-the-art algorithms, notably by better tackling the own price effect and the exposure problem in eligibility and budget-constrained environments.

\subsection{Influence and selection of $N_{act}$ and $\alpha$}
\label{choice hyperparameters}

\begin{figure*}[!t]
\centering
\subfloat[Expected utility when all bidders play $SMS^\alpha$ versus $N_{act}$]{\includegraphics[width=2.5in]{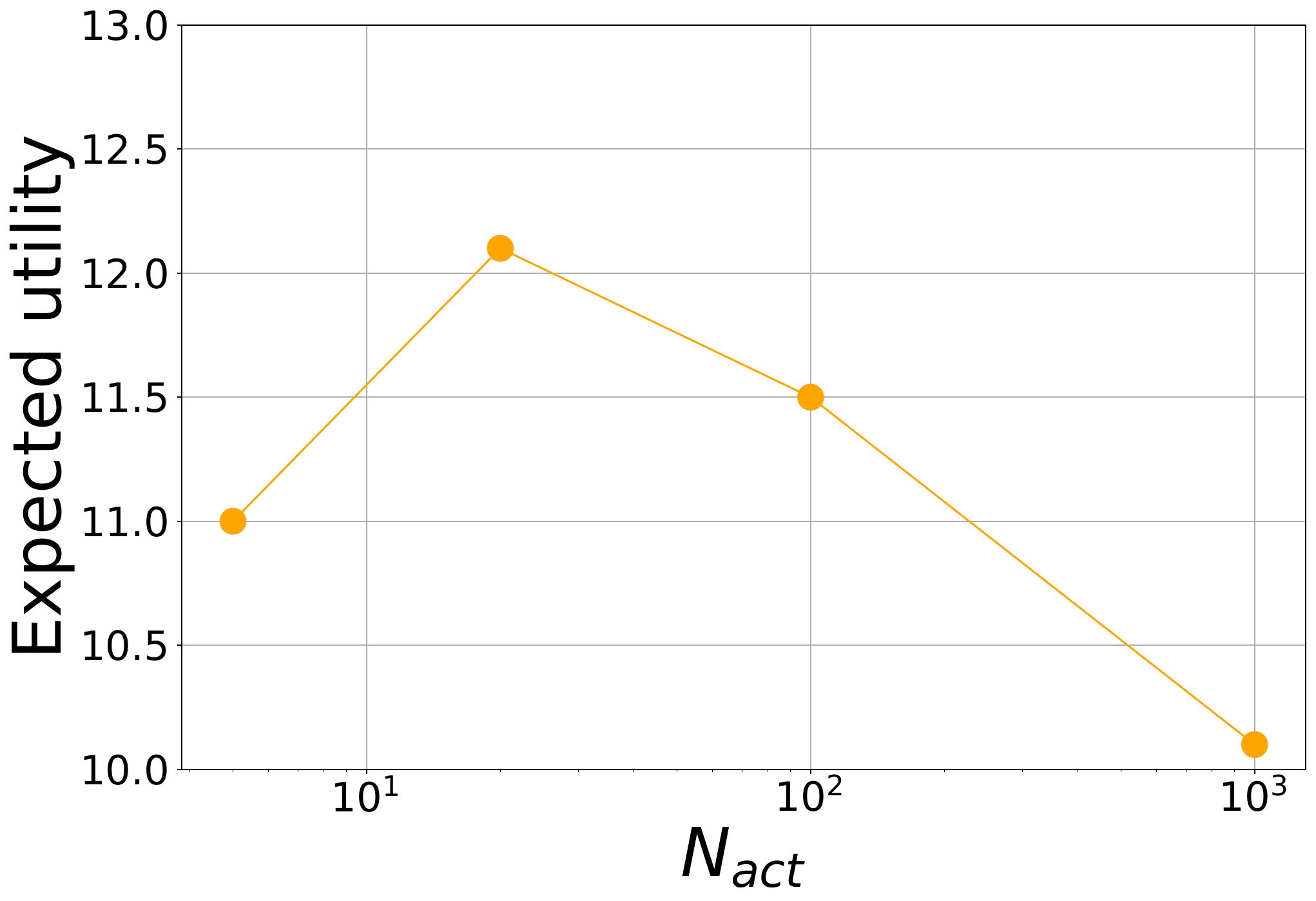}%
\label{N_act expected}}
\hfil
\subfloat[Exposure frequency ($\%$) of $SMS^\alpha$ against SB versus $N_{act}$]{\includegraphics[width=2.5in]{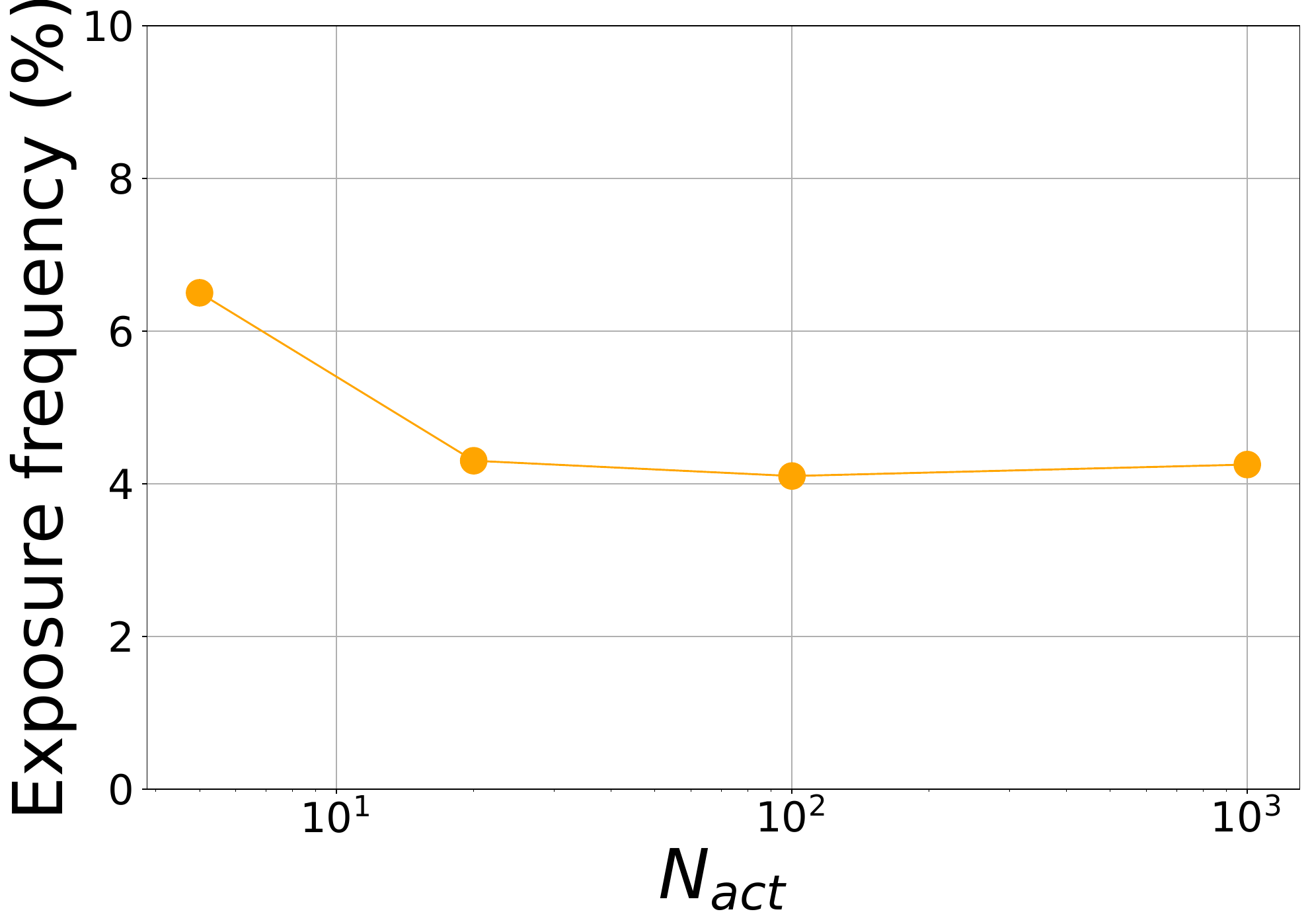}%
\label{N_act exposure}}

\subfloat[Exposure frequency ($\%$) of $SMS^\alpha$ against SB versus $\alpha$]{\includegraphics[width=2.5in]{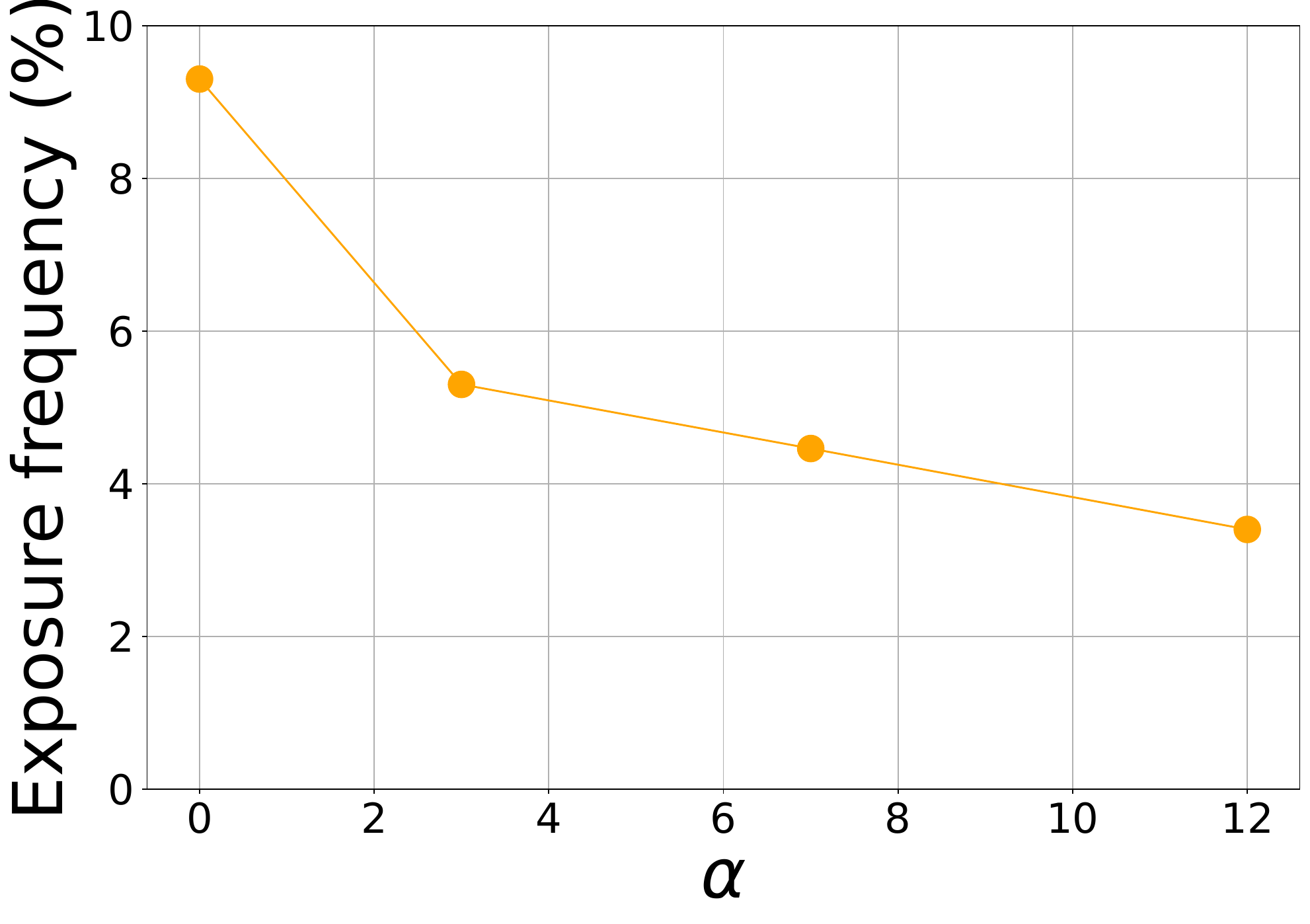}%
\label{alpha exposure}}
\hfil
\subfloat[Increase in expected utility ($\%$) when deviating from SB to $SMS^\alpha$ versus $\alpha$]{\includegraphics[width=2.5in]{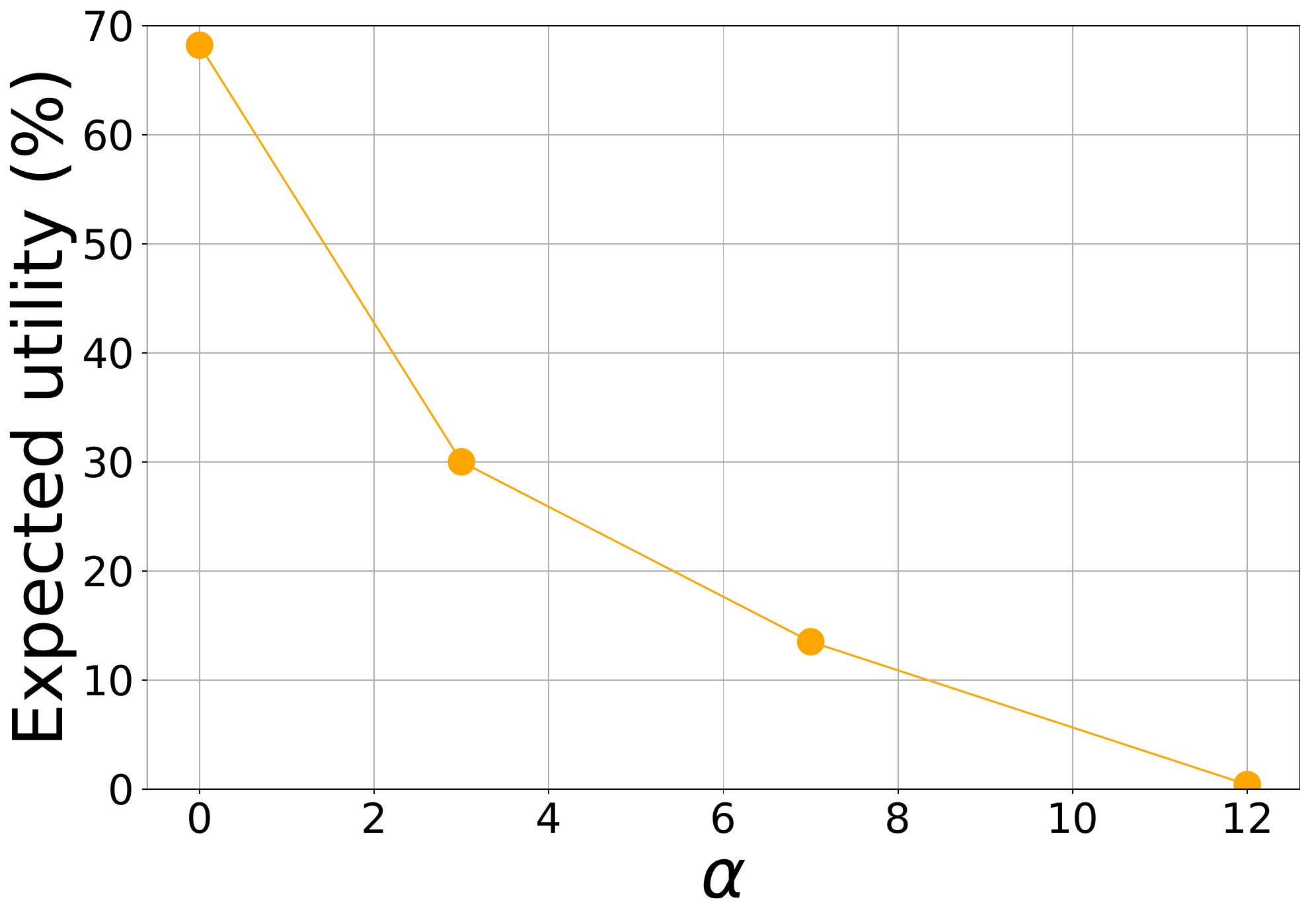}%
\label{alpha expected}}

\caption{Impact of $N_{act}$ and $\alpha$ on $SMS^\alpha$} 
\end{figure*}

The values of hyperparameters $N_{act}$ and $\alpha$ are obtained through grid search. We propose hereafter an analysis of the impact of both hyperparameters on the performance of $SMS^\alpha$ and explain our choice of setting $N_{act}$ to $20$ and $\alpha$ to $7$. 

First, we compare $SMS^\alpha$ when $\alpha=7$ for the following values of $N_{act}$: $5$, $20$, $100$ and $1000$. We plot the expected utility when all bidders play $SMS^\alpha$ versus $N_{act}$ in Figure~\ref{N_act expected}. We can see that the expected utility drastically increases between $5$ and $20$ and, then, slowly decreases between $20$ and $1000$. The fact that the expected utility seems to be a unimodal function presenting a maximum for $N_{act}=20$ illustrates perfectly the tradeoff to be made between in-depth inspection and a wide range of possible moves per player. Moreover, in Figure \ref{N_act exposure}, we represent the exposure frequency of $SMS^\alpha$ against SB versus $N_{act}$. We can see that between $5$ and $20$ the exposure frequency drops by $50\%$ whereas it remains relatively constant afterwards. This highlights the fact that having an overly restricted range of moves reduces the possibilities to avoid exposure. Hence, in order to maximise coordination between bidders and minimise exposure, we set $N_{act}$ to $20$.

Secondly, we compare $SMS^\alpha$ with $N_{act}=20$ for the following values of $\alpha$: $0$, $3$, $7$ and $12$. We represent the exposure frequency of $SMS^\alpha$ against SB versus $\alpha$ in Figure~\ref{alpha exposure}. We clearly see that the exposure frequency decreases when $\alpha$ grows. Thus, by increasing $\alpha$, $SMS^\alpha$ tackles better the exposure problem. In Figure \ref{alpha expected}, we plot the increase in expected utility when deviating from SB to $SMS^\alpha$ when two opponents play $SB$ and the last opponent plays $SMS^\alpha$. We observe that the profitability of the deviation considerably decreases when $\alpha$ grows and nearly reaches $0$ when $\alpha=12$. Hence, the reduction of exposure when $\alpha$ increases is accompanied by a decrease in profitability of deviating to $SMS^\alpha$. The hyperparameter $\alpha$ thus allows the bidder to arbitrate between expected utility and risk-aversion. In order to maintain all deviations towards $SMS^\alpha$ profitable in each empirical game illustrated in Figure \ref{Expected utility} and minimise the risk of exposure, we set $\alpha$ to $7$.

It is also important to note that increasing $\alpha$ reduces the own price effect and enables a better coordination between bidders. This is a natural effect of risk-aversion where a bidder tends to avoid a rise in price. For instance, the average price per item won all bidders play $SMS^\alpha$ is respectively $3.1$ and $2$ for $\alpha$ equal to $0$ and $12$. This explains why the expected utility when all bidders play $SMS^{12}$ is $1.36$ times greater than the expected utility when all bidders play $SMS^{0}$.

\section{Conclusions and Future Work}

This paper introduces the first efficient bidding strategy that tackles simultaneously the \textit{exposure problem}, the \textit{own price effect}, \textit{budget constraints} and the \textit{eligibility management problem} in a simplified version of SAA (SAA-c). It is a SM-MCTS whose expansion and rollout phase relies on a new method for the prediction of closing prices. By introducing hyperparameter $\alpha$, we give the freedom to bidders to arbitrate between expected profit and risk-aversion. Our solution $SMS^\alpha$ largely outperforms state-of-the-art algorithms on instances of realistic size in generic settings. 

For future works, it would be interesting to see if $SMS^\alpha$ can be improved by modifying some of its search iteration phases. For instance, another selection index could be used such as \textit{EXP3} or \textit{Regret Matching} \cite{bovsansky2016algorithms}. Moreover, following the successful model of AlphaZero \cite{silver2018general}, exploring approaches which combine MCTS with neural networks might also be a promising direction.

An important future work is the design of an efficient bidding strategy for SAA with incomplete information. To address this challenge, one could investigate determinization approaches. These solutions are based on complete information schemes like $SMS^\alpha$, which thus appears as an important building block for future advancements.

Finally, although $SMS^\alpha$ was initially designed for bidders, it could also be used by mechanism designers to verify the impact of new rules on the efficiency and revenue of SAA.

\bibliographystyle{plain}
\bibliography{sample-base}

\cleardoublepage
\appendix

\subsection{Extensive experiments with more than two distinct strategies} \label{Appendix A}

In this section, we show that $SMS^\alpha$ still computes an efficient bidding strategy when there are more than two distinct strategies present among bidders. For instance, we provide hereafter results for the strategy profiles (A, $MS^\lambda$, EPE, SCPD) with A$\in$\{$SMS^\alpha$, $MS^\lambda$, EPE, SCPD, SB\}. Each experimental result has been run on the same $1000$ SAA-c instances of size ($n=4$, $m=11$, $\varepsilon=1$) than in Section \ref{extensive experiments}. We fix the strategies played by the opponents (bidder 2 plays $MS^\lambda$, bidder 3 plays EPE, bidder 4 plays SCPD) and analyse the effects of deviating from one strategy to another for the first player. Our analysis is divided into three parts: expected utility, own price effect and exposure.
\begin{enumerate}
    \item \textit{Expected utility:} We plot in Figure \ref{fig:Expected util Trans} the expected utility obtained by the first player depending on the strategy it decides to play. We observe that, by playing $SMS^\alpha$, the first player obtains an expected utility which is respectively $1.04$, $1.09$, $1.51$ and $1.63$ times higher than if it had played $MS^\lambda$, SCPD, EPE and SB. Thus, if bidder 2 plays $MS^\lambda$, bidder 3 plays EPE and bidder 4 plays SCPD, then the first player should play $SMS^\alpha$ to maximise its expected utility. 
    \begin{figure}[h]
    \centering
    \includegraphics[width=0.9 \columnwidth]{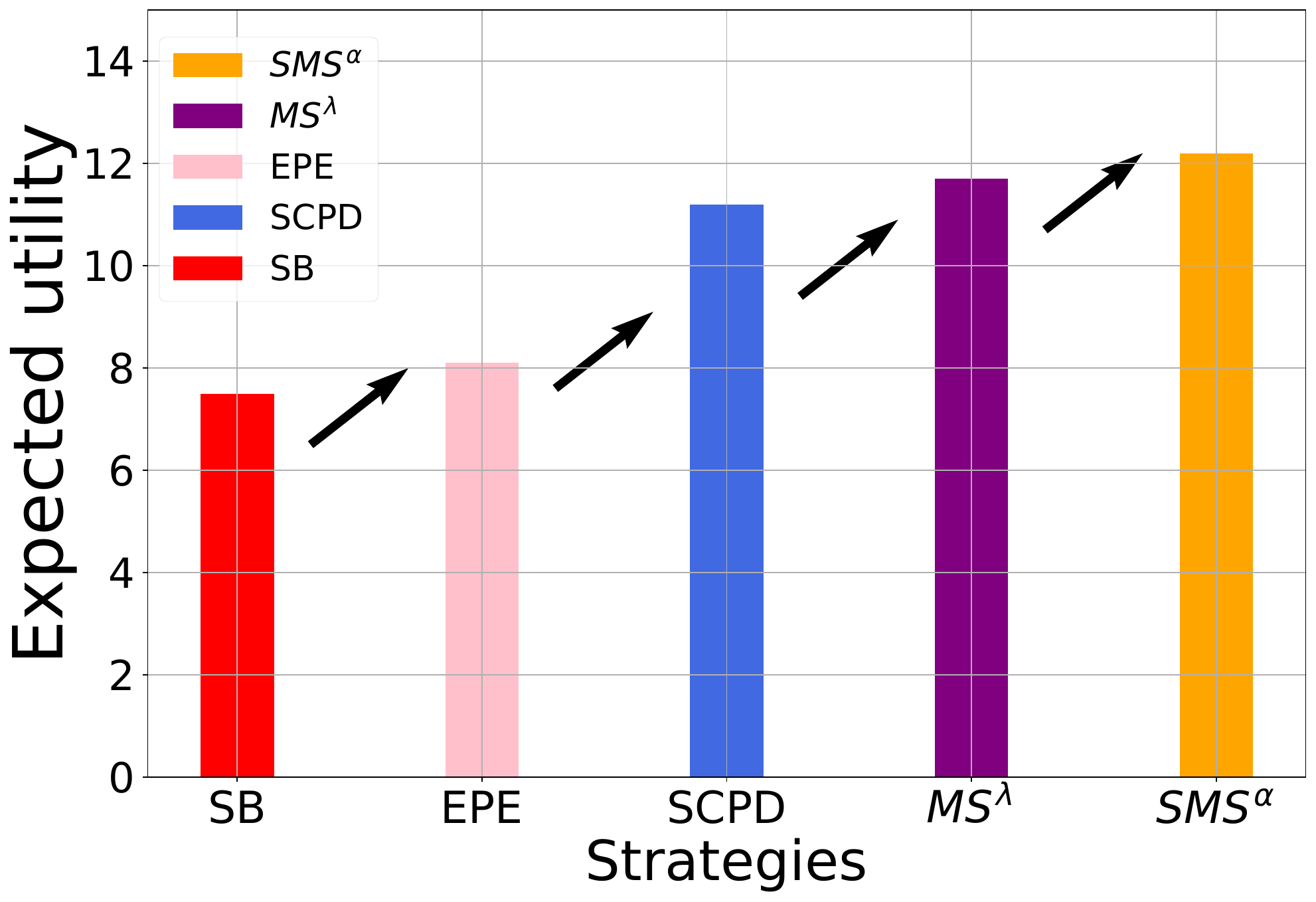}
    \caption{Expected utility obtained by the first player in strategy profile (A, $MS^\lambda$, EPE, SCPD) depending on the strategy A$\in$\{ $SMS^\alpha$, $MS^\lambda$, EPE, SCPD, SB\} played.} 
    \label{fig:Expected util Trans}
    \end{figure}
    
    \item \textit{Own price effect:} We plot in Figure \ref{fig:Own_price_effect Trans} the average price payed per item won and the ratio of items won of the first player depending on which strategy it plays. We see that, by playing $SMS^\alpha$, the first player spends $2\%$, $30\%$, $33\%$ and $53\%$ less per item won than respectively $MS^\lambda$, EPE, SCPD and SB. Moreover, regarding the ratio of items won, by playing $SMS^\alpha$, the first player obtains $26\%$ of all items. The highest ratio of items won is obtained by playing SB. In this specific case, the first player obtains $42\%$ of all items. This highlights the fact that reducing demand to obtain items at a lower price is far more effective to increase one's expected utility than bidding aggressively to acquire a maximum number of items. Hence, by playing $SMS^\alpha$, the first player tackles better the own price effect than the other strategies when bidder 2 plays $MS^\lambda$, bidder 3 plays EPE and bidder 4 plays SCPD,
    \begin{figure*}[h]
    \centering
    \subfloat[Average price per item won]{\includegraphics[width=3in]{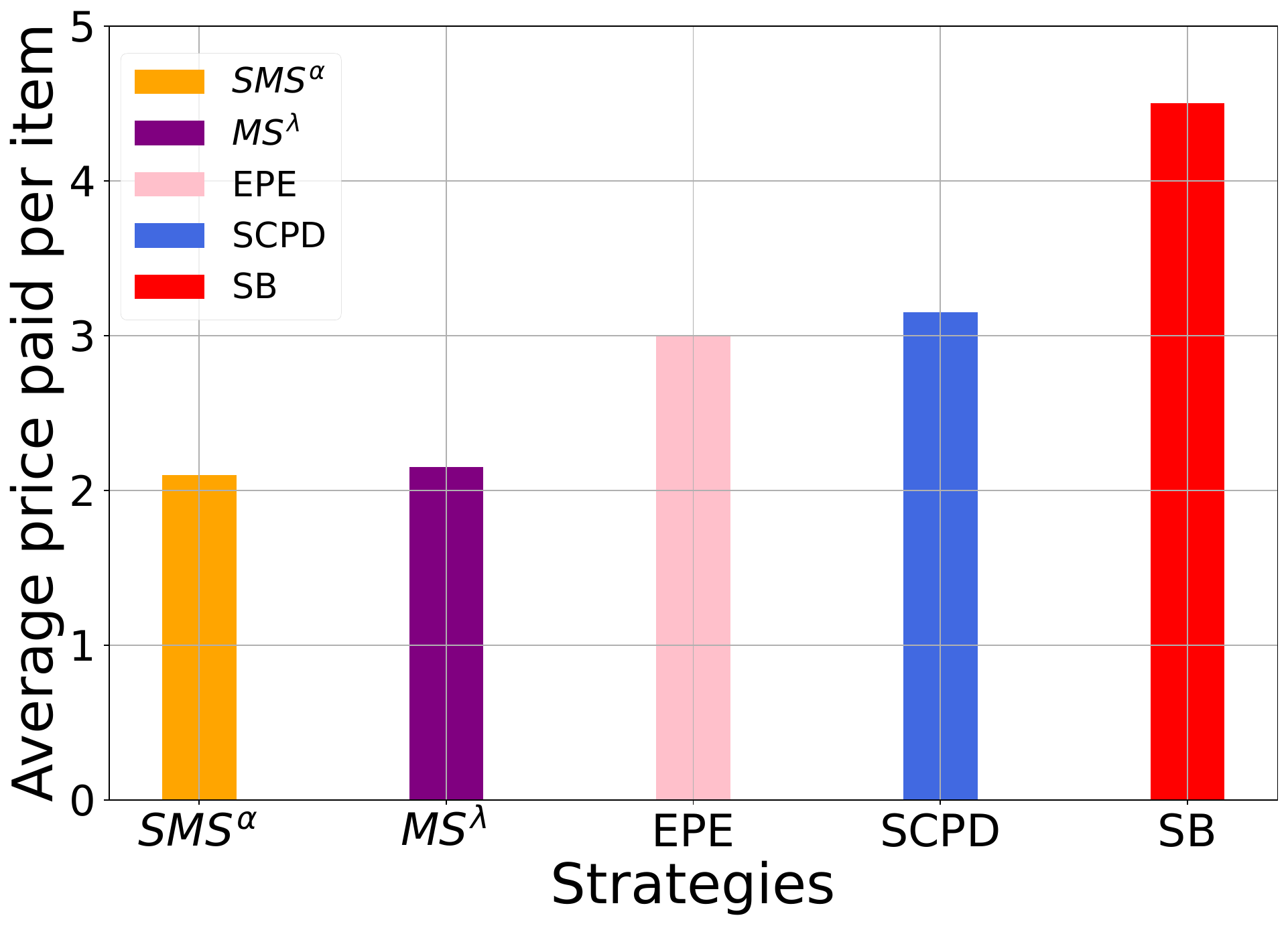}}
    \hfil
    \subfloat[Ratio items won]{\includegraphics[width=3.1in]{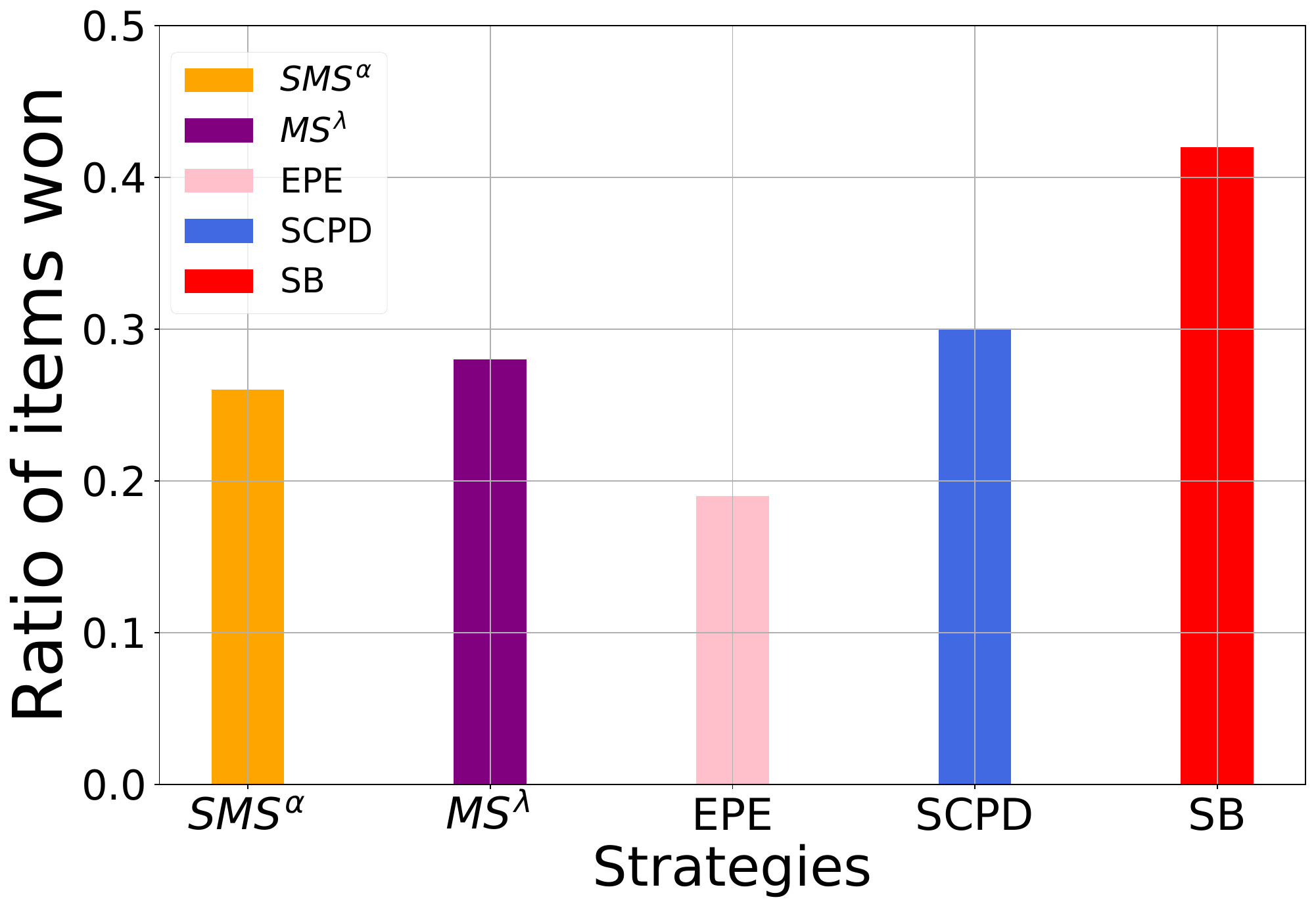}}
    \caption{Own price effect analysis for the first player in strategy profile (A, $MS^\lambda$, EPE, SCPD) depending on the strategy A$\in$\{ $SMS^\alpha$, $MS^\lambda$, EPE, SCPD, SB\} played.}
    \label{fig:Own_price_effect Trans}
    \end{figure*}
    
    \item \textit{Exposure:} In Figure \ref{fig:Exposure Trans}, we plot the expected exposure and the exposure frequency incurred by the first player depending on the strategy played. By playing $SMS^\alpha$, the first player incurs $10$, $5$, $24$ and $133$ times less expected exposure than by playing respectively $MS^\lambda$, EPE, SCPD and SB. Moreover, by playing $SMS^\alpha$, the first player has $4.3$, $3.3$, $11.7$ and $49.7$ less chance of ending up exposed than by playing respectively $MS^\lambda$, EPE, SCPD and SB. Thus, by playing $SMS^\alpha$, the first player considerably reduces the risk of exposure compared to other strategies when bidder 2 plays $MS^\lambda$, bidder 3 plays EPE and bidder 4 plays SCPD.
    \begin{figure*}[t]
    \centering
    \subfloat[Expected exposure]{\includegraphics[width=3.1in]{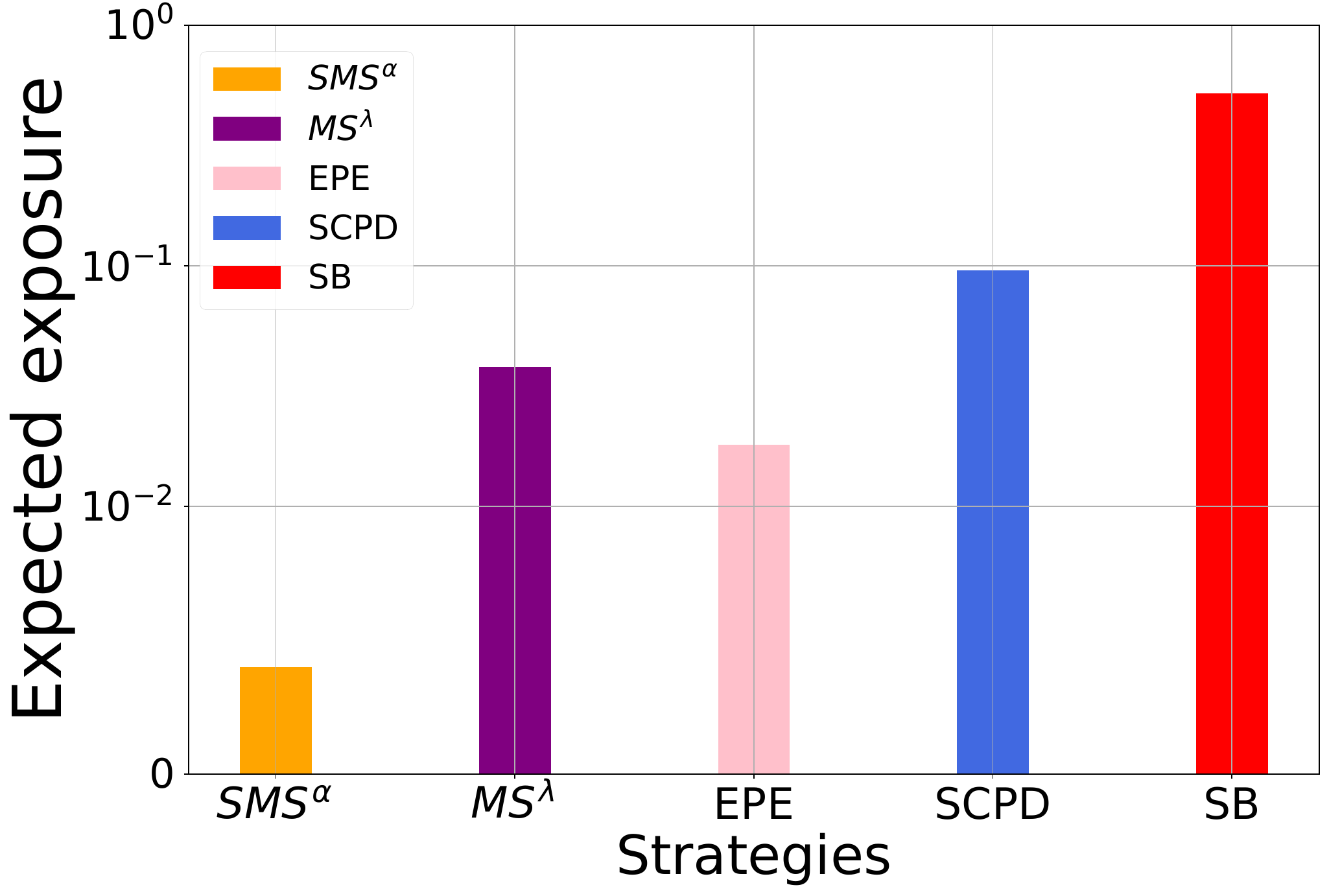}}
    \hfil
    \subfloat[Exposure frequency]{\includegraphics[width=3in]{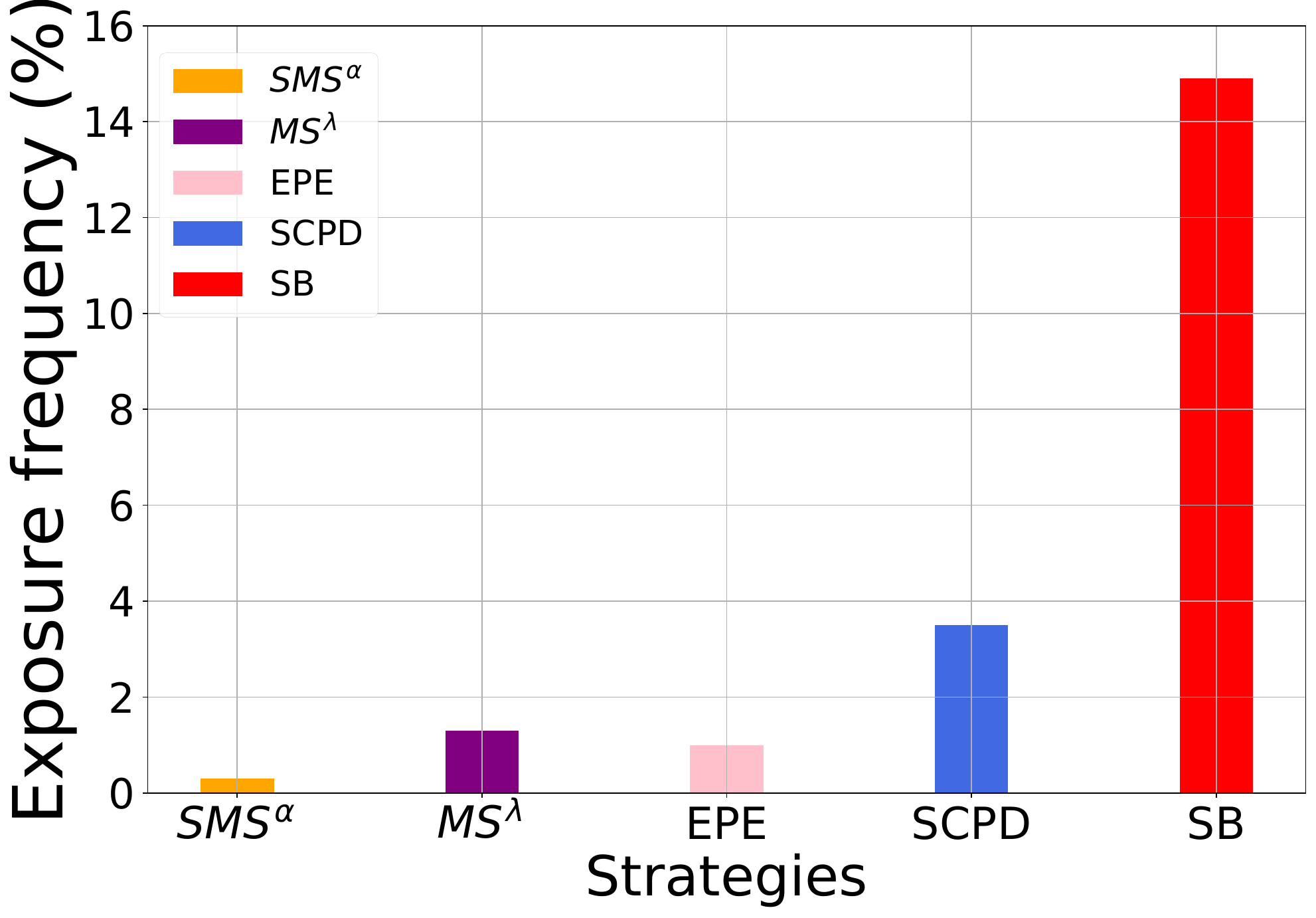}}
    \caption{Exposure analysis for the first player in strategy profile (A, $MS^\lambda$, EPE, SCPD) depending on the strategy A$\in$\{ $SMS^\alpha$, $MS^\lambda$, EPE, SCPD, SB\} played.}
    \label{fig:Exposure Trans}
    \end{figure*}
\end{enumerate}

Through this example, we clearly see that $SMS^\alpha$ remains an efficient bidding strategy when there are more than two distinct strategies. Indeed, $SMS^\alpha$ achieves higher expected utility than state-of-the-art strategies by better tackling the own price effect and the exposure problem in eligibility and budget constrained environments.

\subsection{Extensive experiments on instances of size ($n=2$, $m=7$, $\varepsilon=1$) and ($n=3$, $m=9$, $\varepsilon=1$)} \label{Appendix B}

We provide hereafter results obtained on two different sizes of instances: ($n=2$, $m=7$, $\varepsilon=1$) and ($n=3$, $m=9$, $\varepsilon=1$). Results are computed through 1000 SAA-c instances. Value functions and budgets are generated the same way as for instances of size ($n=4$, $m=11$, $\varepsilon=1$), i.e. with $b_{min}=10$, $b_{max}=40$ and $V=5$. Our analysis is divided into three parts: expected utility, own price effect and exposure.\\


For instances of size ($n=2$, $m=7$, $\varepsilon=1$), the hyperparameter $\alpha$ of $SMS^\alpha$ takes the value $12$ and the risk-aversion hyperparameters $\lambda^r$ and $\lambda^o$ of $MS^\lambda$ both take the value $0.025$. The maximum number of expanded actions per information set $N_{act}$ of $SMS^\alpha$ is set to $20$. Each algorithm is given respectively $50$ seconds of thinking time. 
\begin{enumerate}
    \item \textit{Expected Utility:} We represent in Figure \ref{Expected utility 2 pl} the four normal form games in expected utility where each player has the choice between either playing $SMS^\alpha$ or another strategy A. In each empirical game, playing $SMS^\alpha$ is always profitable. Thus, the strategy profile ($SMS^\alpha$, $SMS^\alpha$) is a Nash equilibrium of the normal form game in expected utility with strategy set \{$SMS^\alpha$, $MS^\lambda$, EPE, SCPD, SB\}. 
    \begin{figure*}[t]
    \centering
    \subfloat[$SMS^\alpha$ vs $MS^\lambda$]{\includegraphics[width=2.1in]{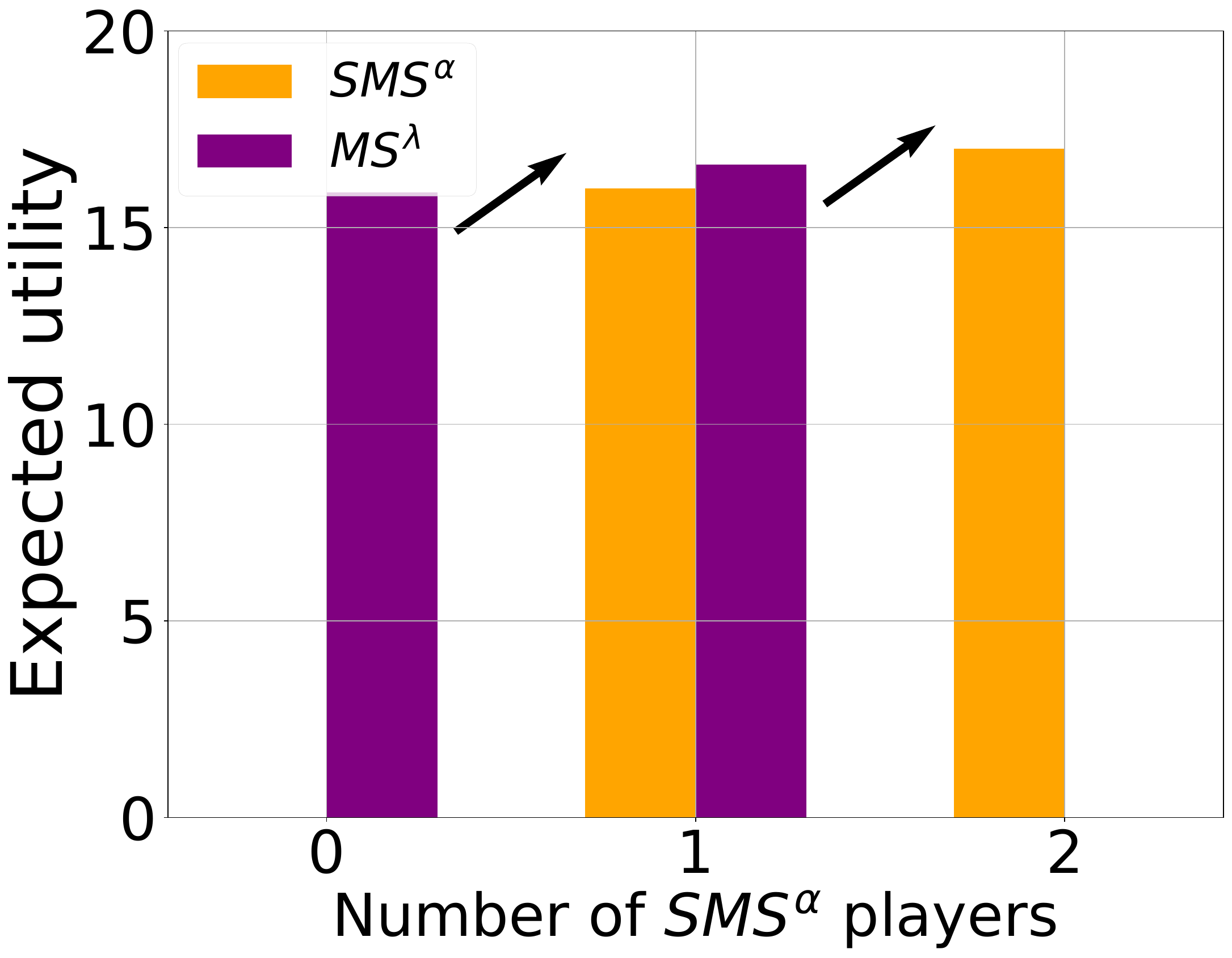}}
    \hfil
    \subfloat[$SMS^\alpha$ vs EPE]{\includegraphics[width=2.1in]{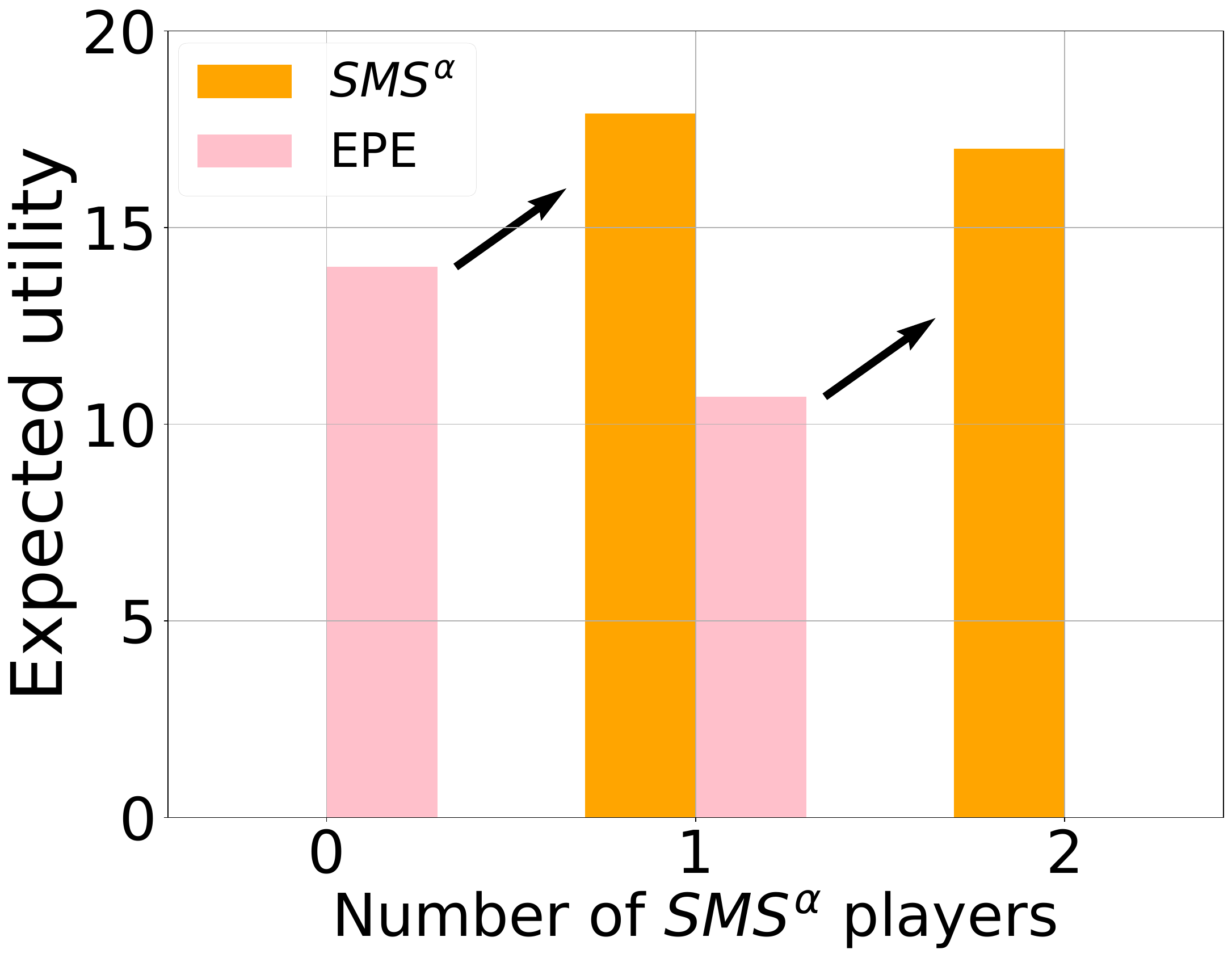}}

    \subfloat[$SMS^\alpha$ vs SCPD]{\includegraphics[width=2.1in]{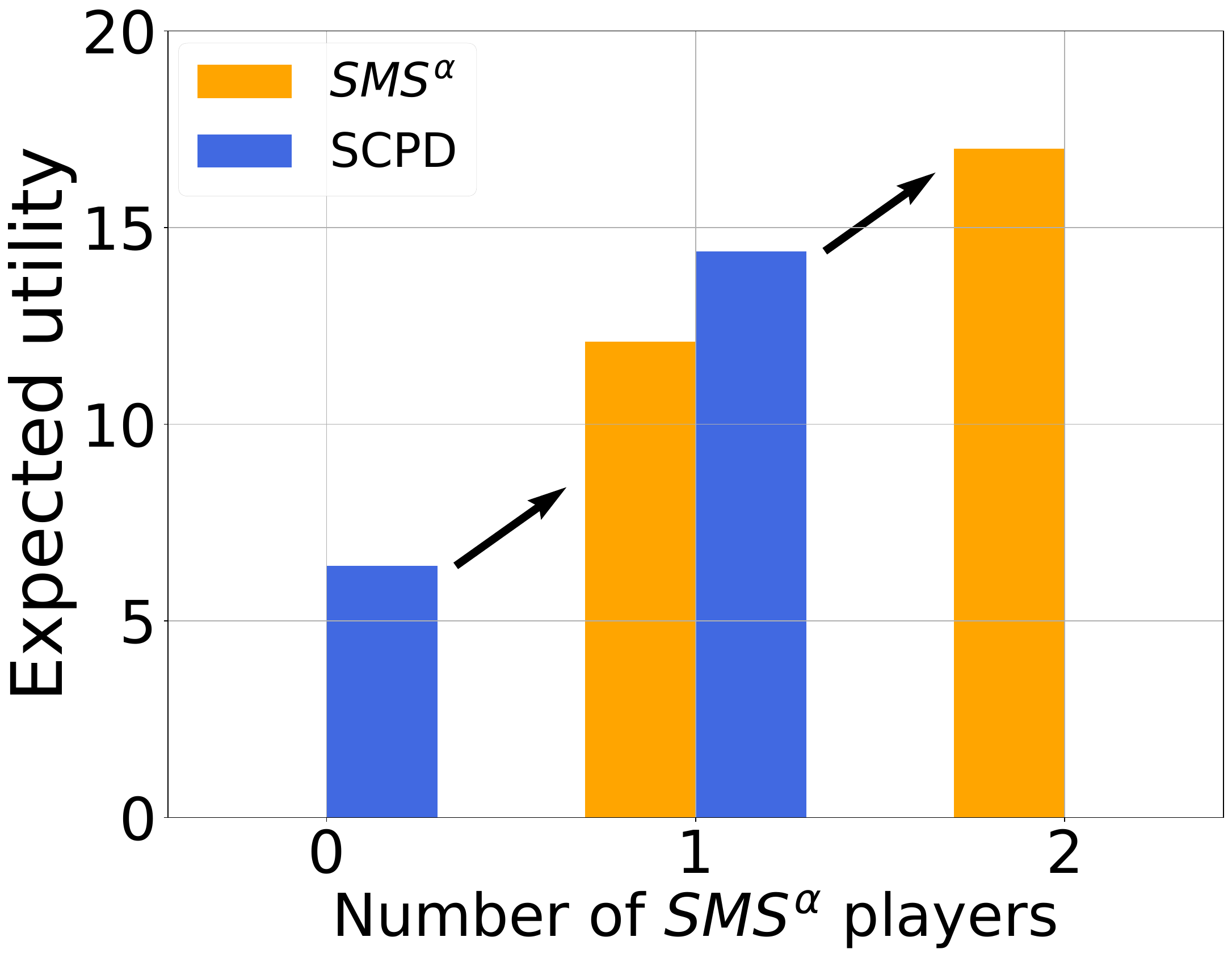}}
    \hfil
    \subfloat[$SMS^\alpha$ vs SB]{\includegraphics[width=2.1in]{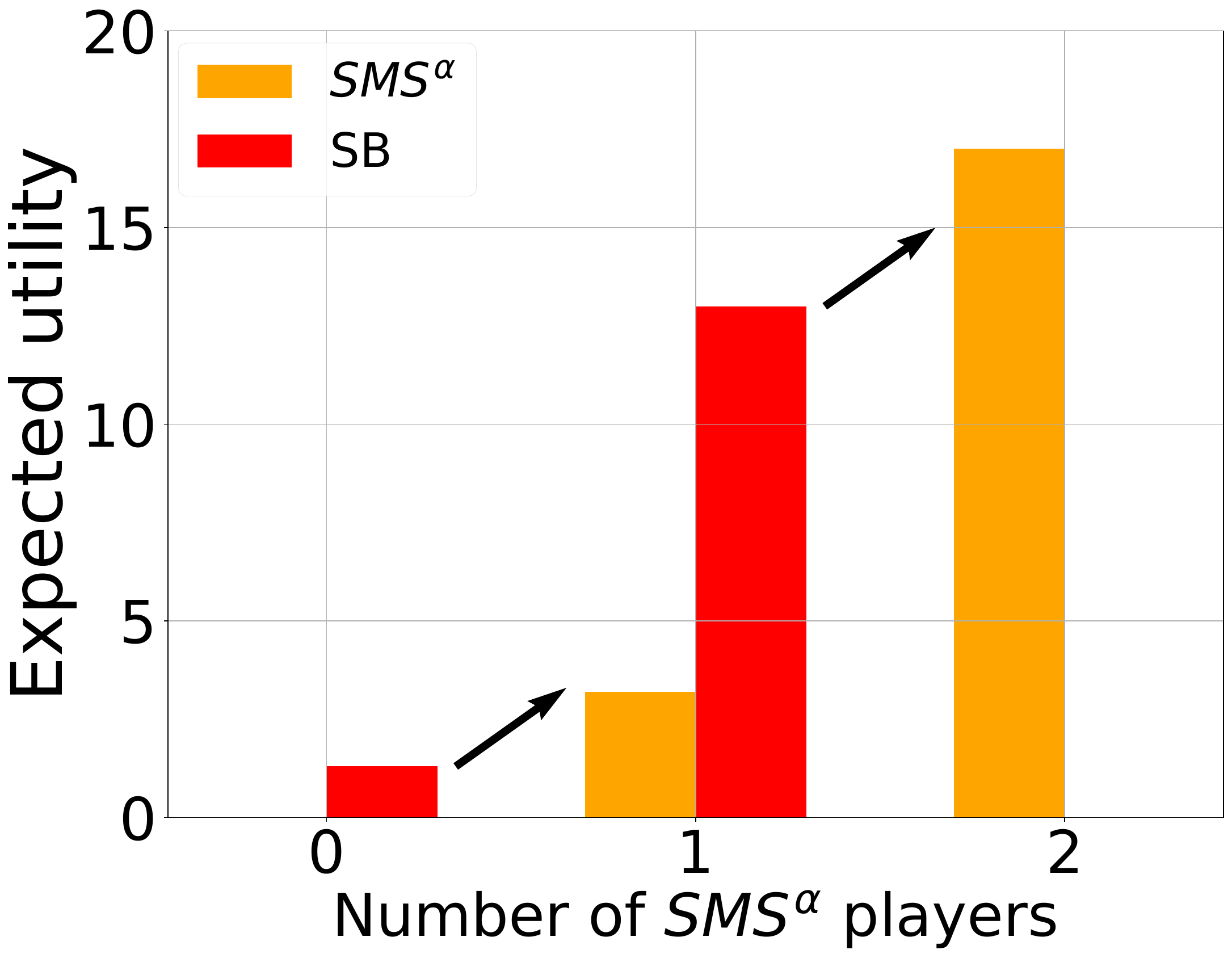}}
    \caption{Normal-form SAA-c game in expected utility with five 
    strategies ($n=2$, $m=7$, $\varepsilon=1$)}
    \label{Expected utility 2 pl}
    \end{figure*}
    Moreover, regarding strategy profiles where all bidders play the same strategy, $SMS^\alpha$ has a significantly higher expected utility than the other bidding strategies. For instance, it obtains an expected utility $1.07$, $1.21$, $2.66$ and $13.28$ times higher than respectively $MS^\lambda$, EPE, SCPD and SB. 
    
    \item \textit{Own price effect}: We plot in Figure \ref{avg price 2} the average price paid per item won and in Figure \ref{ratio itm 2} the ratio of items won by each strategy $A$ against every strategy $B$ displayed on the x-axis. In Figure \ref{avg price 2}, we can see that $SMS^\alpha$ acquires items at a lower price in average than the other strategies against $SMS^\alpha$, SCPD and SB. For instance, $SMS^\alpha$ spends $2.7\%$, $28.1\%$, $52.6\%$ and $56.1\%$ less per item won against SCPD than $MS^\lambda$, EPE, SCPD and SB respectively. Moreover, against $MS^\lambda$ and EPE, only EPE spends slightly less than $SMS^\alpha$ per item won. 
    
    Moreover, regarding strategy profiles where all bidders play the same strategy, the one corresponding to $SMS^\alpha$ has an average price payed per item won $1.19$, $3.33$ and $4.65$ times lower than $MS^\lambda$, SCPD and SB respectively. All items are allocated when all bidders play $SMS^\alpha$ whereas only $78\%$ are allocated when all bidders play EPE.
    
    \begin{figure*}[t]
    \centering
    \subfloat[Average price per item won]{\includegraphics[width=3in]{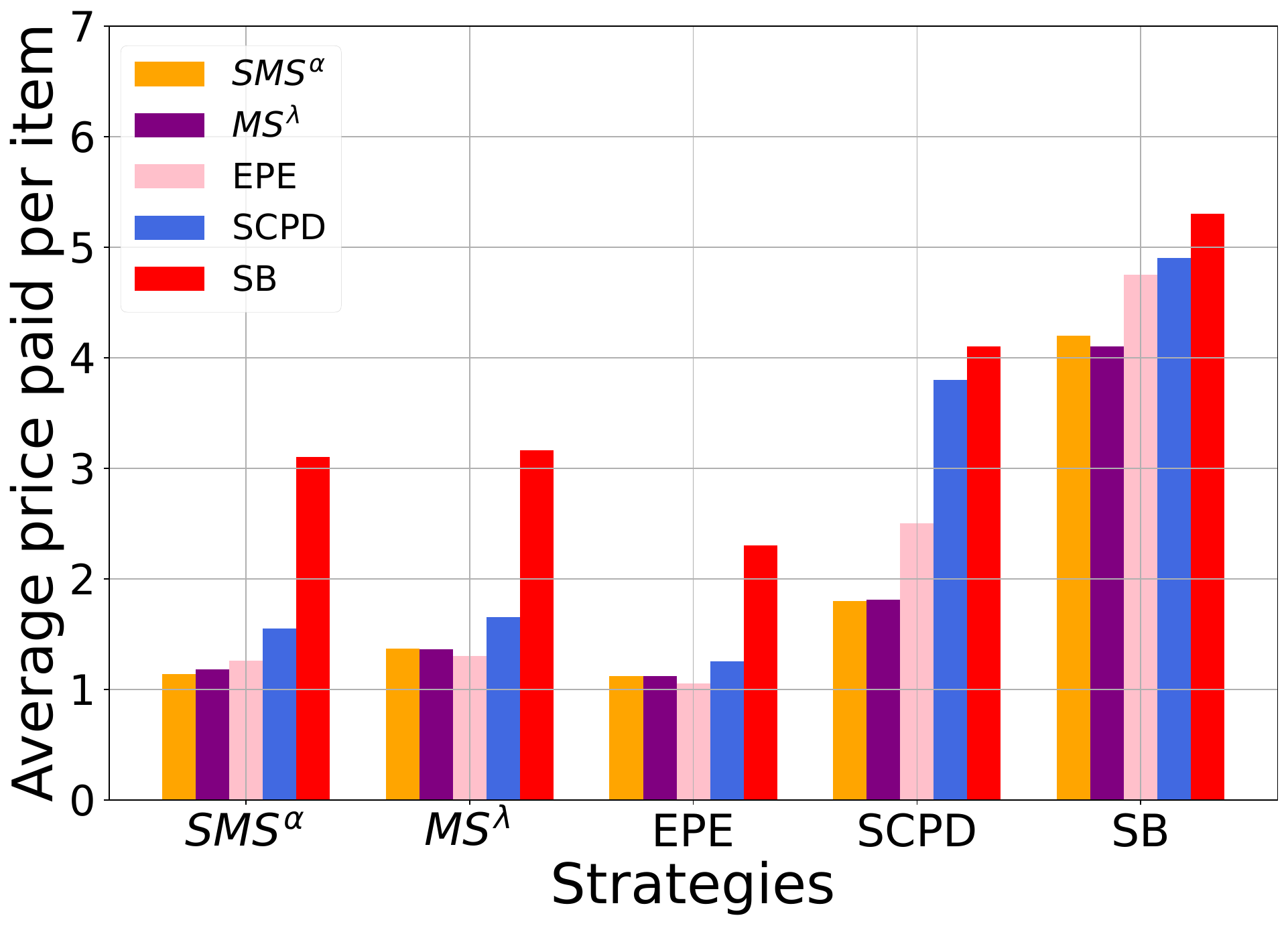}%
    \label{avg price 2}}
    \hfil
    \subfloat[Ratio of items won]{\includegraphics[width=3.1in]{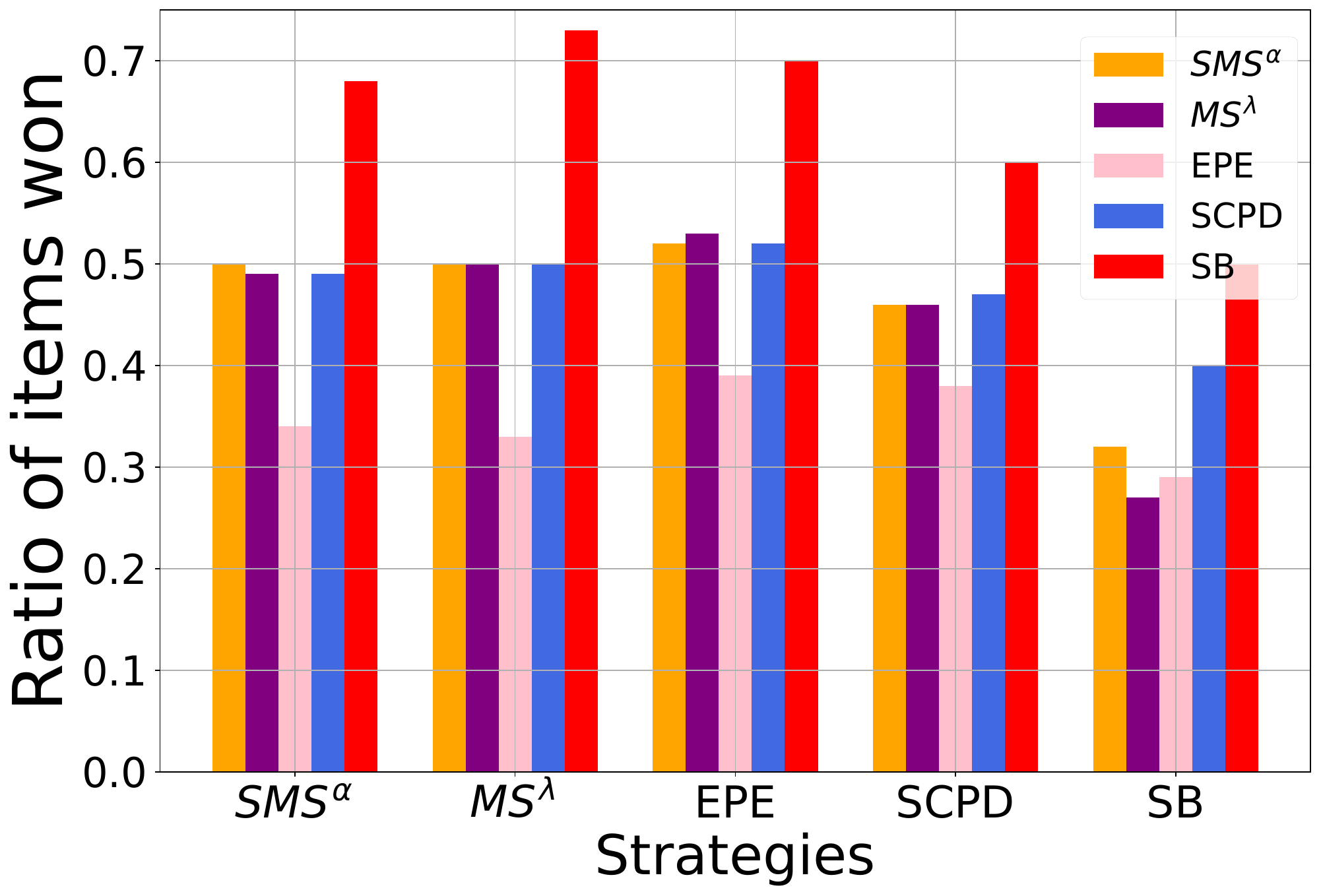}%
    \label{ratio itm 2}}
    \caption{Own price effect analysis for an SAA-c game with five strategies ($n=2$, $m=7$, $\varepsilon=1$)}
    \label{fig:Own_price_effect 2}
    \end{figure*}
    
    \item \textit{Exposure}: In Figure \ref{fig:Exposure 2}, we plot the expected exposure and exposure frequency of each strategy A against every strategy B displayed on the x-axis. When all bidders play the same strategy, $SMS^\alpha$ has the remarkable property of never leading to exposure. Moreover, $SMS^\alpha$ never suffers from exposure against $MS^\lambda$ and EPE. Thirdly, even against SCPD and SB, $SMS^\alpha$ is rarely exposed. It has the lowest expected exposure and exposure frequency (with some ties with EPE and $MS^\lambda$). For instance, $SMS^\alpha$ induces $1.8$, $3.6$ and $7.0$ times less expected exposure against SB than EPE, SCPD and SB respectively. Strategy $MS^\lambda$ obtains roughly the same expected exposure than $SMS^\alpha$ against every strategy. Moreover, regarding exposure frequency, by playing $SMS^\alpha$, a bidder has $1.1$, $1.7$, $3.2$ and $4.9$ times less chance of ending up exposed against SB than $MS^\lambda$, EPE, SCPD and SB respectively.

    \begin{figure*}[t]
    \centering
    \subfloat[Expected exposure]{\includegraphics[width=3.1in]{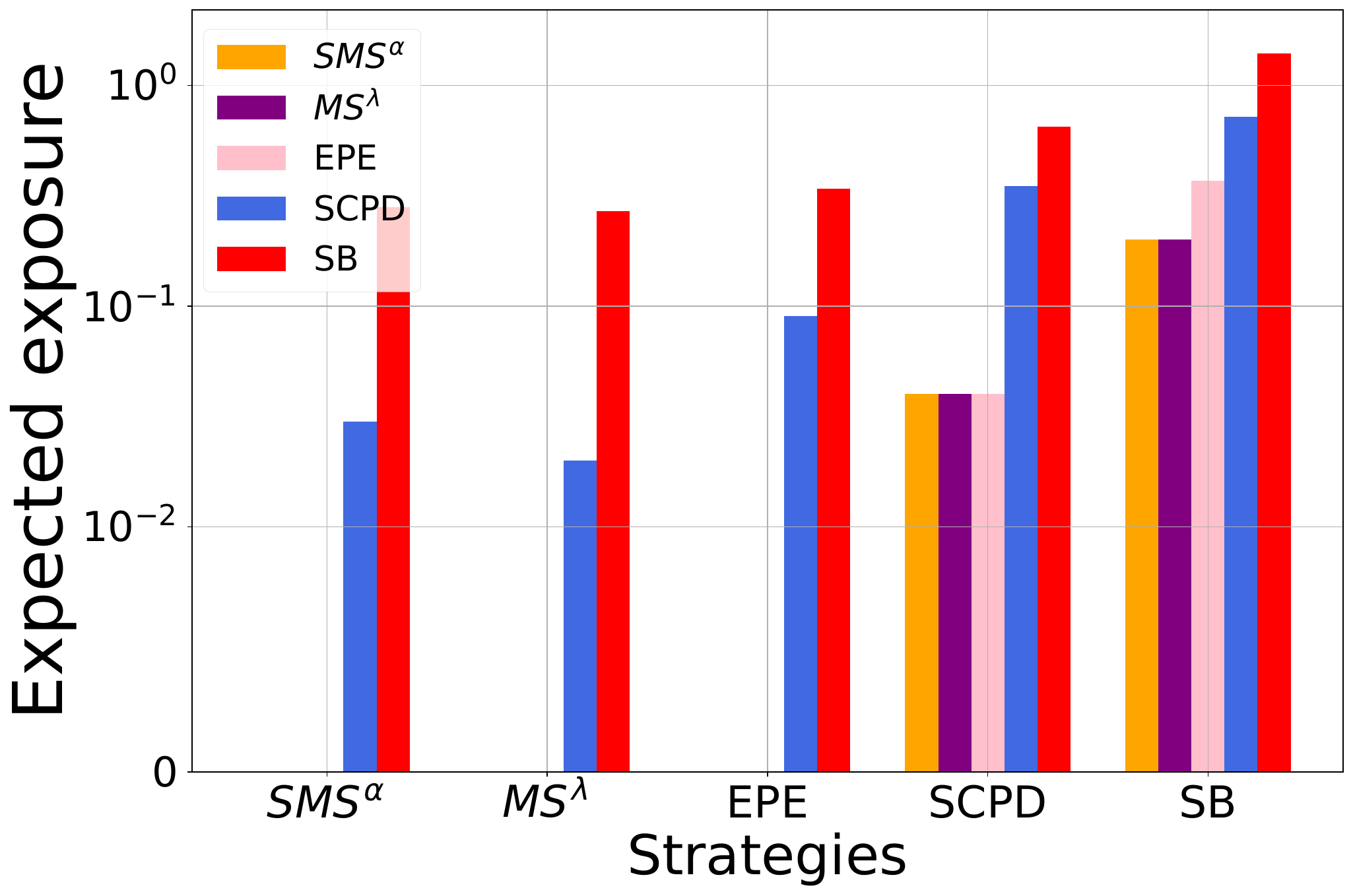}%
    \label{fig:Estimated Exposure 2}}
    \hfil
    \subfloat[Exposure frequency]{\includegraphics[width=3in]{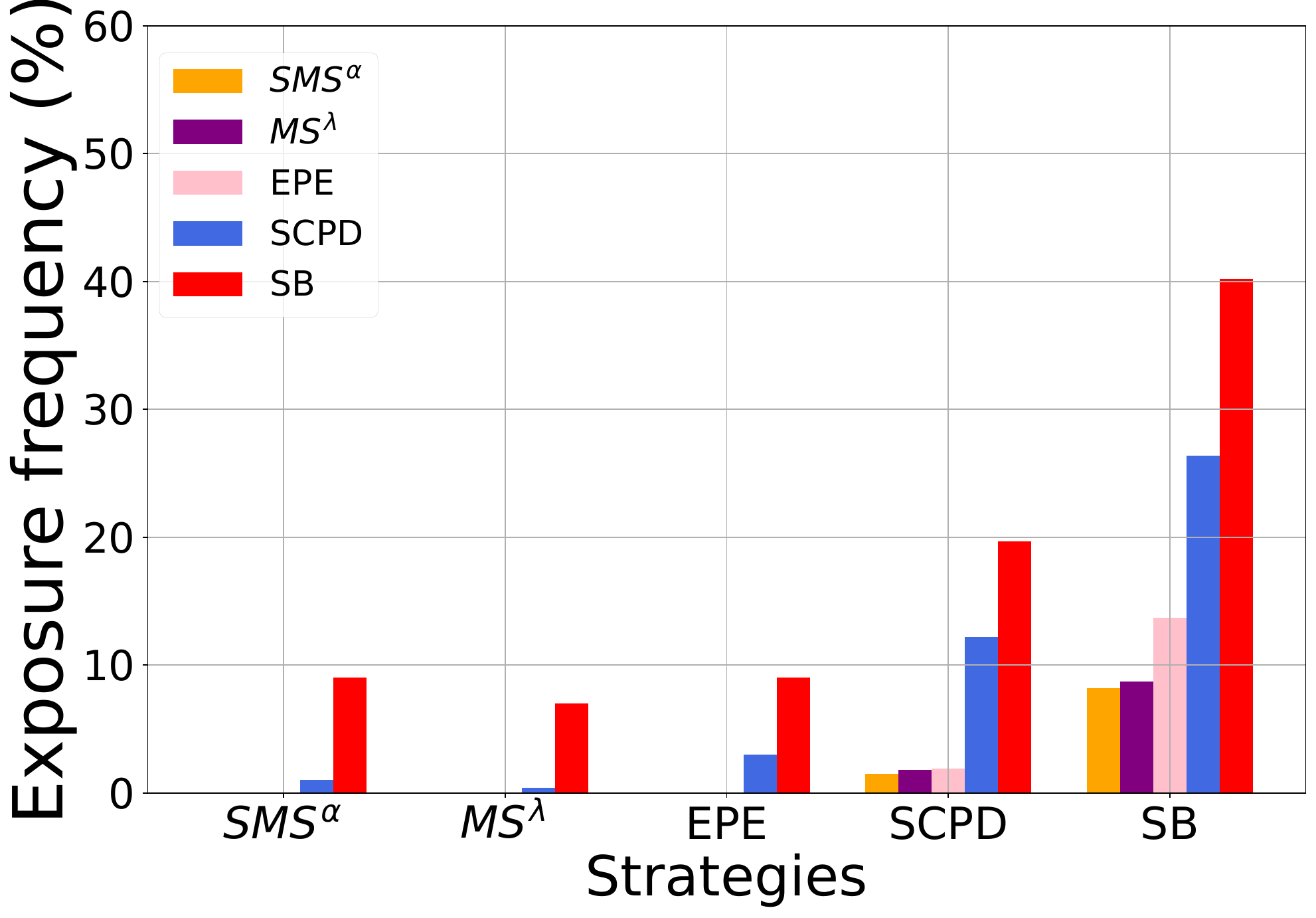}%
    \label{fig:Exposure frequency 2}}
    \caption{Exposure analysis for an SAA-c game with five strategies ($n=2$, $m=7$, $\varepsilon=1$)}
    \label{fig:Exposure 2}
    \end{figure*}
\end{enumerate}

Hence, we can draw the same conclusions for instances of size ($n=2$, $m=7$, $\varepsilon=1$) than for instances of size ($n=4$, $m=11$, $\varepsilon=1$) such as $SMS^\alpha$ achieves a higher expected utility than state-of-the-art algorithms, notably by better tackling the own price effect and the exposure problem in eligibility and budget constrained environments. \\

For instances of size ($n=3$, $m=9$, $\varepsilon=1$), the hyperparameter $\alpha$ of $SMS^\alpha$ takes the value $4$ and the risk-aversion hyperparameters $\lambda^r$ and $\lambda^o$ of $MS^\lambda$ both take the value $0.01$. The maximum number of expanded actions per information set $N_{act}$ of $SMS^\alpha$ is set to $20$. Each algorithm is given respectively $100$ seconds of thinking time. 
\begin{enumerate}
    \item \textit{Expected Utility:} We represent in Figure \ref{Expected utility 3 pl} the four normal form games in expected utility where each player has the choice between either playing $SMS^\alpha$ or another strategy A. In each empirical game, playing $SMS^\alpha$ is always profitable. Thus, the strategy profile ($SMS^\alpha$, $SMS^\alpha$, $SMS^\alpha$) is a Nash equilibrium of the normal form game in expected utility with strategy set \{$SMS^\alpha$, $MS^\lambda$, EPE, SCPD, SB\}. 
    
    Moreover, regarding strategy profiles where all bidders play the same strategy, $SMS^\alpha$ has a significantly higher expected utility than the other bidding strategies. For instance, it obtains an expected utility $1.09$, $1.17$ and $3.55$ times higher than respectively $MS^\lambda$, EPE and SCPD. When all bidders play SB, their expected utility is negative which highlights the risk of exposure. 

    \begin{figure*}[t]
    \centering
    \subfloat[$SMS^\alpha$ vs $MS^\lambda$]{\includegraphics[width=2.1in]{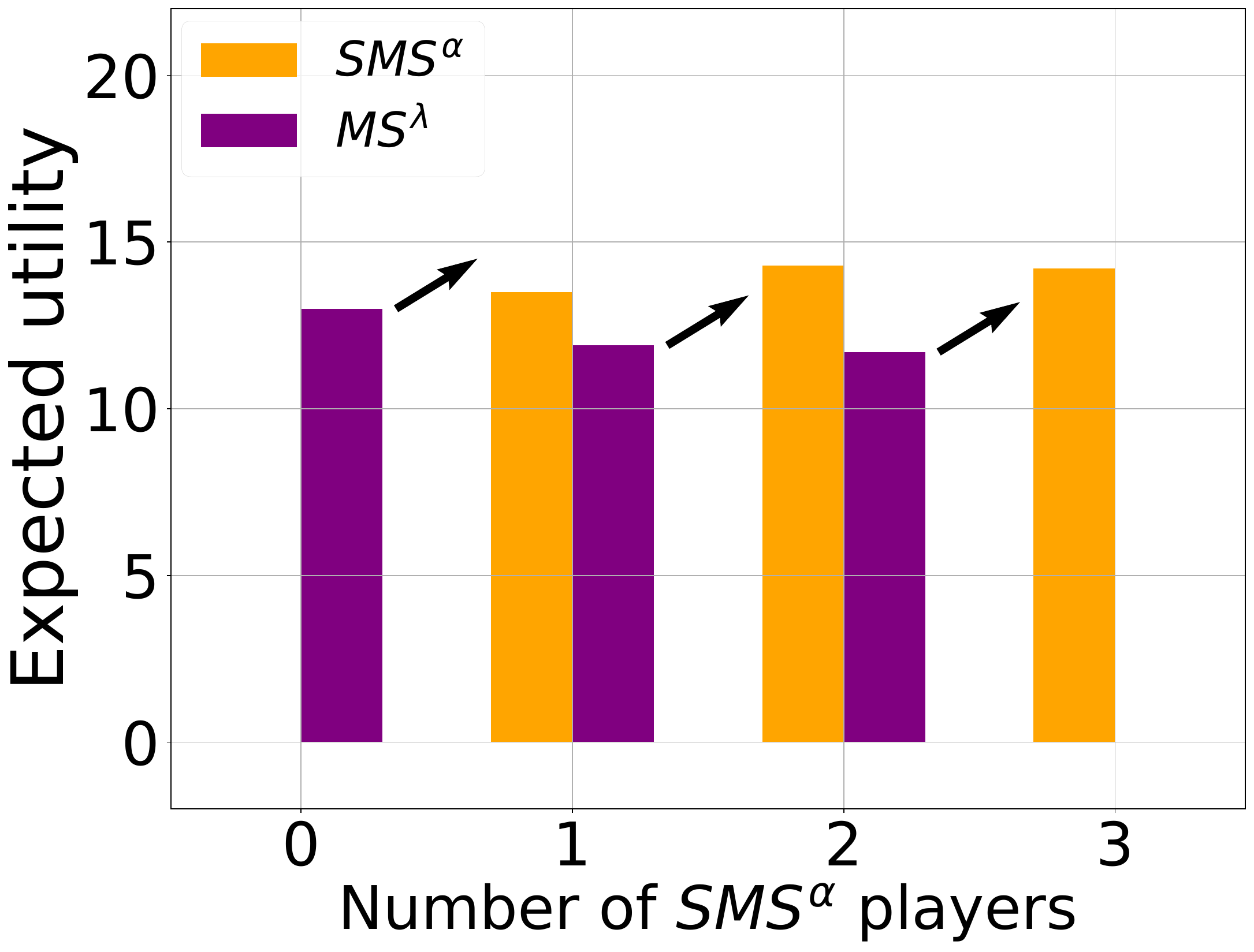}}
    \hfil
    \subfloat[$SMS^\alpha$ vs EPE]{\includegraphics[width=2.1in]{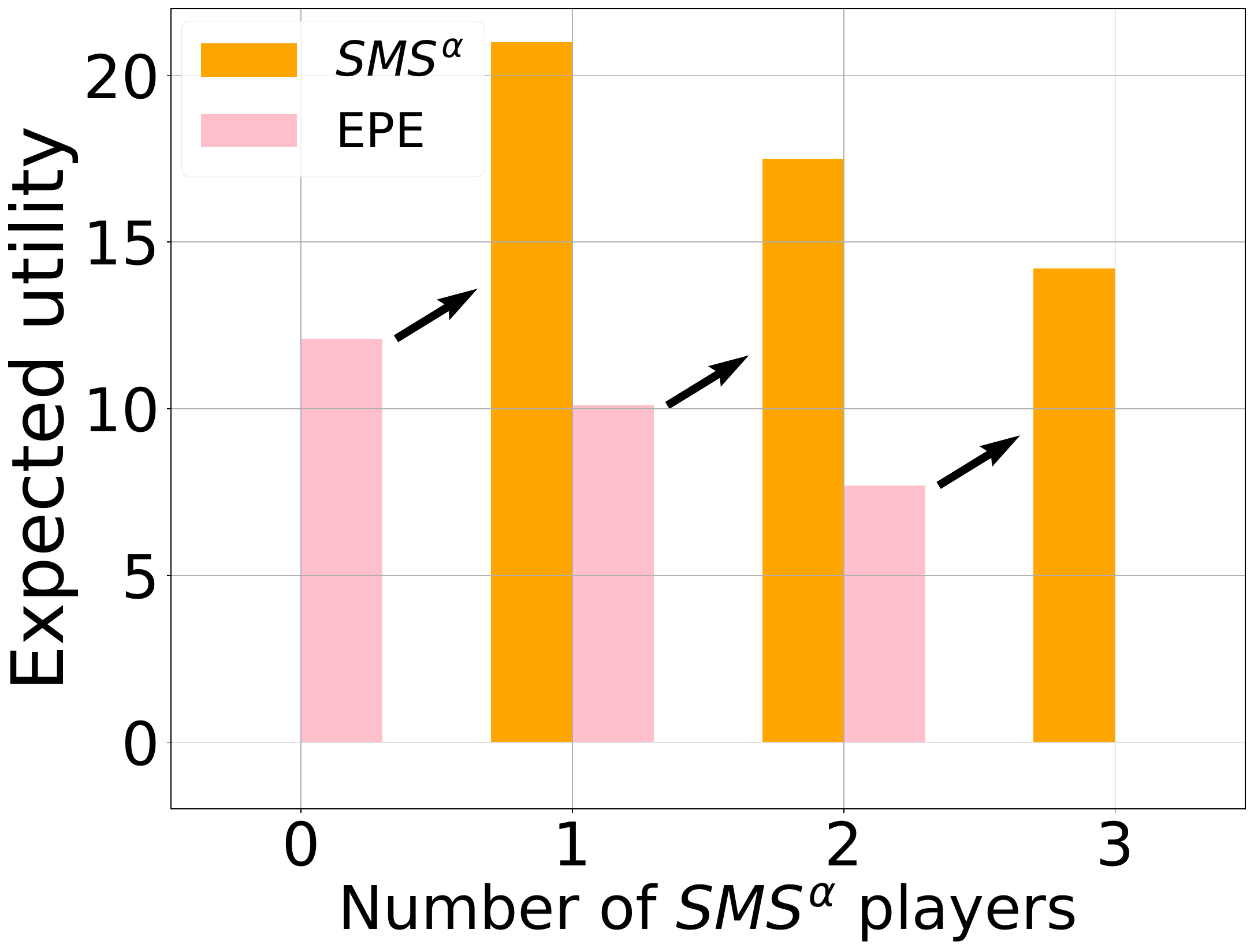}}

    \subfloat[$SMS^\alpha$ vs SCPD]{\includegraphics[width=2.1in]{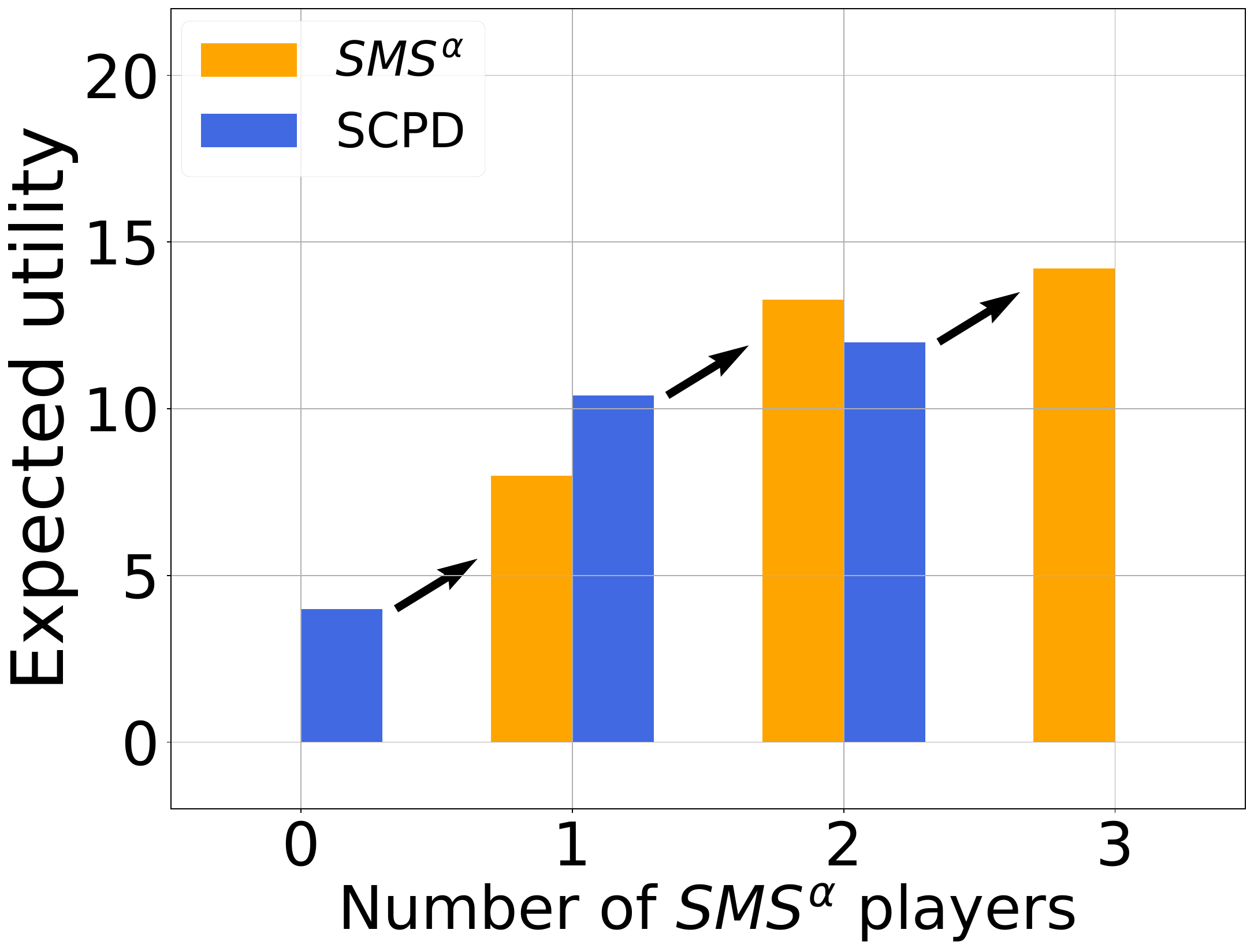}}
    \hfil
    \subfloat[$SMS^\alpha$ vs SB]{\includegraphics[width=2.1in]{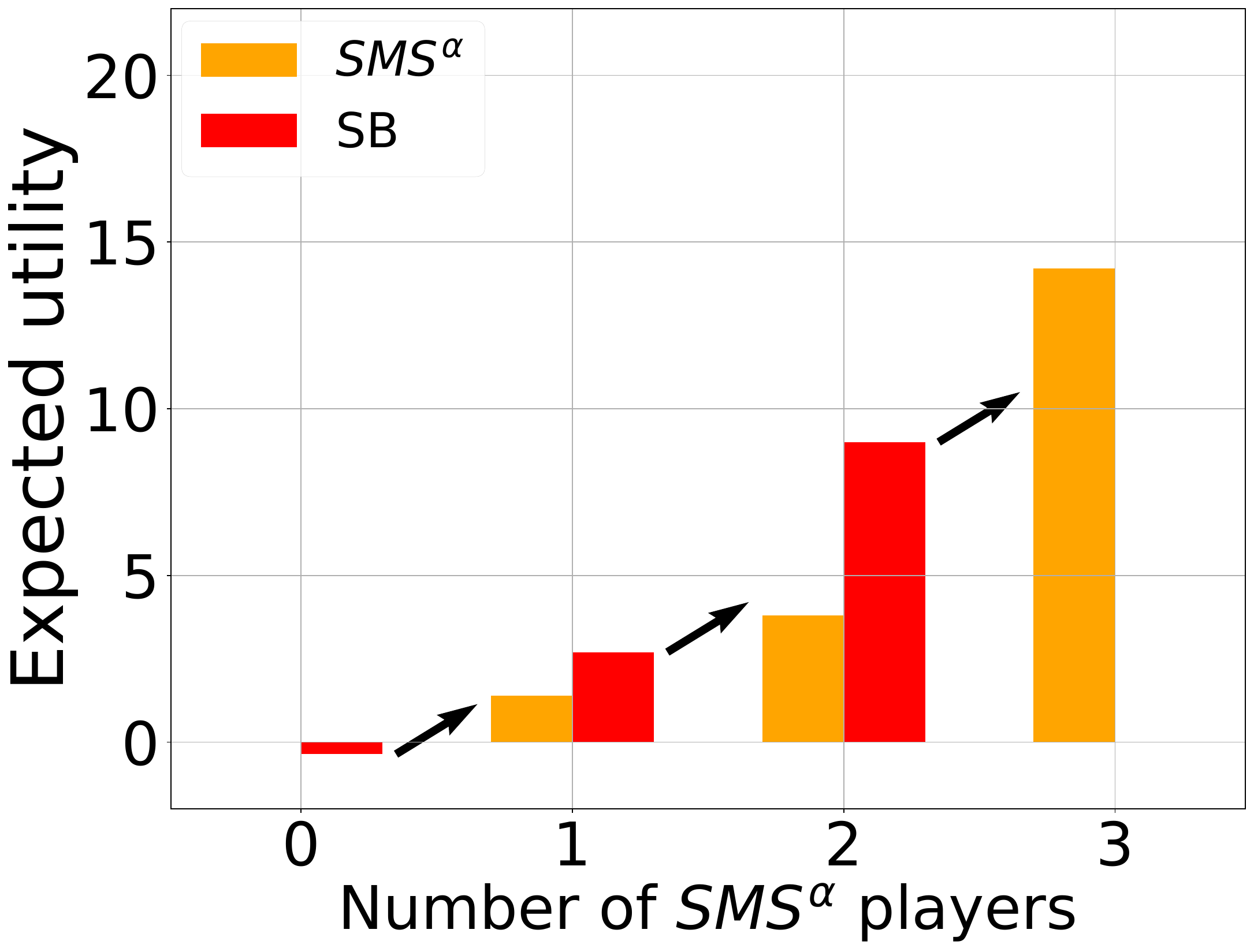}}
    \caption{Normal-form SAA-c game in expected utility with five 
    strategies ($n=3$, $m=9$, $\varepsilon=1$)}
    \label{Expected utility 3 pl}
    \end{figure*}
    
    \item \textit{Own price effect}: We plot in Figure \ref{avg price 3} the average price paid per item won and in Figure \ref{ratio itm 3} the ratio of items won by each strategy $A$ against every strategy $B$ displayed on the x-axis. In Figure \ref{avg price 3}, we can see that $SMS^\alpha$ acquires items at a lower price in average than the other strategies against $SMS^\alpha$, EPE, SCPD and SB (some ties occur for $SMS^\alpha$ and EPE). For instance, $SMS^\alpha$ spends $4.5\%$, $33.1\%$, $55.3\%$ and $59.6\%$ less per item won against SCPD than $MS^\lambda$, EPE, SCPD and SB respectively. Moreover, against $MS^\lambda$, only EPE spends slightly less than $SMS^\alpha$ per item won. 

    Moreover, regarding strategy profiles where all bidders play the same strategy, the one corresponding to $SMS^\alpha$ has an average price payed per item won $1.15$, $3$ and $3.89$ times lower than $MS^\lambda$, SCPD and SB respectively. All items are allocated when all bidders play $SMS^\alpha$ whereas only $75\%$ are allocated when all bidders play EPE. The fact that items are divided more efficiently with $SMS^\alpha$ than with EPE explains why the strategy profile where all bidders play $SMS^\alpha$ has a higher expected utility than when all bidders play EPE even though the items are purchased at a higher price.
    
    \begin{figure*}[t]
    \centering
    \subfloat[Average price per item won]{\includegraphics[width=3in]{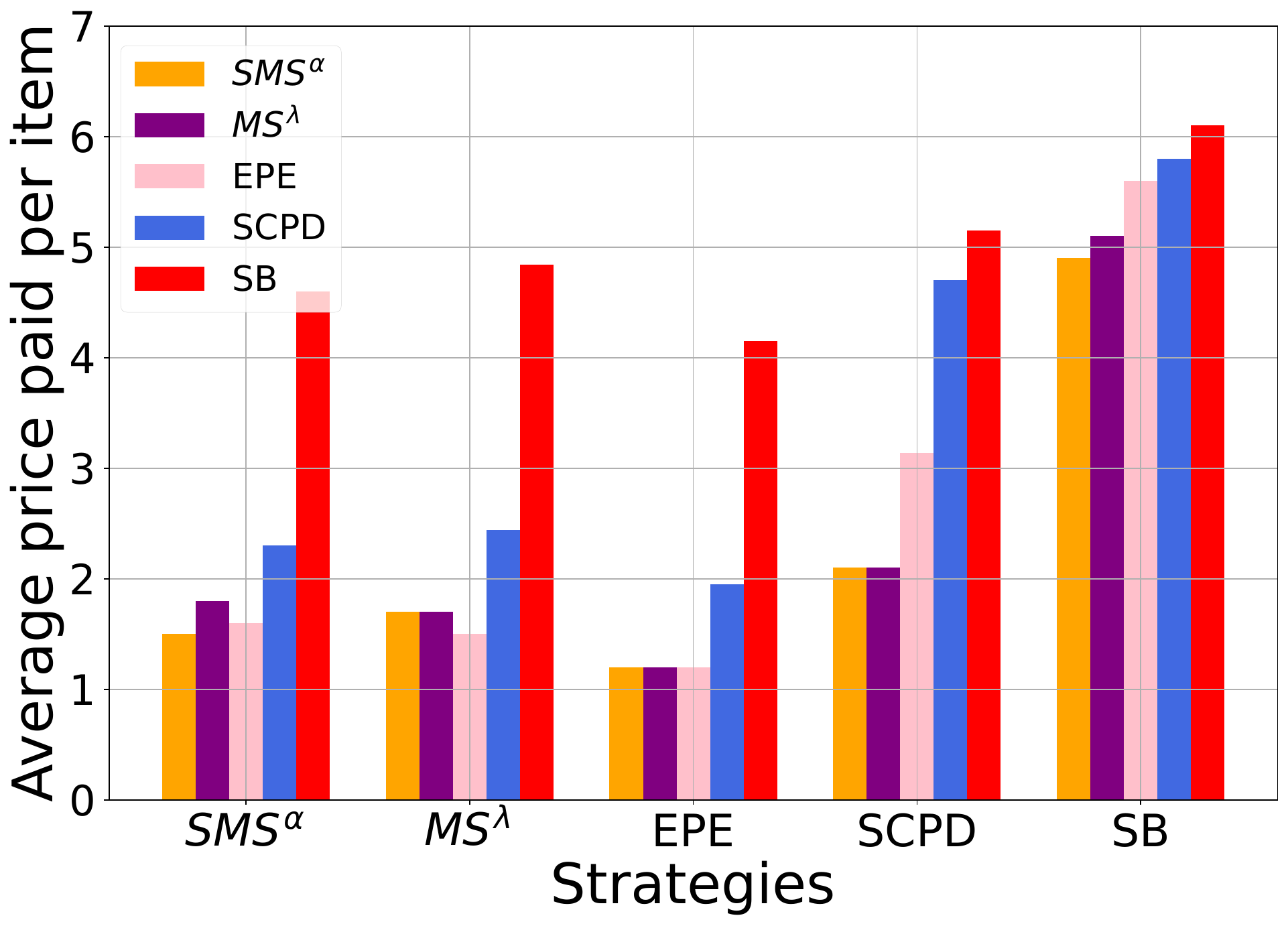}%
    \label{avg price 3}}
    \hfil
    \subfloat[Ratio of items won]{\includegraphics[width=3.1in]{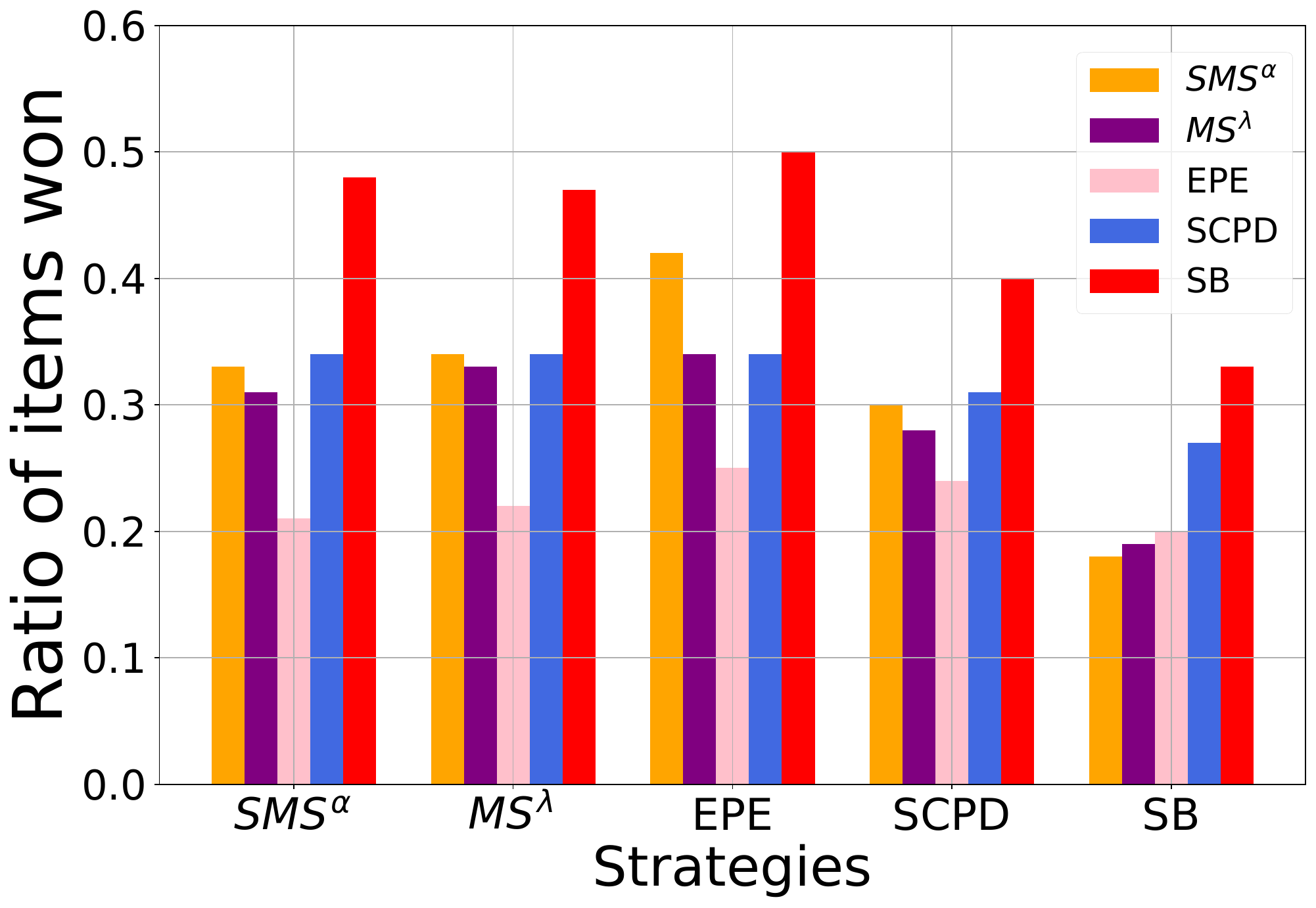}%
    \label{ratio itm 3}}
    \caption{Own price effect analysis for an SAA-c game with five strategies ($n=3$, $m=9$, $\varepsilon=1$)}
    \label{fig:Own_price_effect 3}
    \end{figure*}
    
    \item \textit{Exposure}: In Figure \ref{fig:Exposure 3}, we plot the expected exposure and exposure frequency of each strategy A against every strategy B displayed on the x-axis. When all bidders play the same strategy, $SMS^\alpha$ has the remarkable property of never leading to exposure. Moreover, $SMS^\alpha$ never suffers from exposure against $MS^\lambda$ and EPE. Thirdly, even against SCPD and SB, $SMS^\alpha$ is rarely exposed. It has the lowest expected exposure and exposure frequency. For instance, $SMS^\alpha$ induces $2.7$, $3.1$, $30.7$ and $77.3$ times less expected exposure against SCPD than $MS^\lambda$, EPE, SCPD and SB respectively. Moreover, regarding exposure frequency, by playing $SMS^\alpha$, a bidder has $2.3$, $2.2$, $19.1$ and $38.9$ times less chance of ending up exposed against SCPD than $MS^\lambda$, EPE, SCPD and SB respectively.

    \begin{figure*}[t]
    \centering
    \subfloat[Expected exposure]{\includegraphics[width=3.1in]{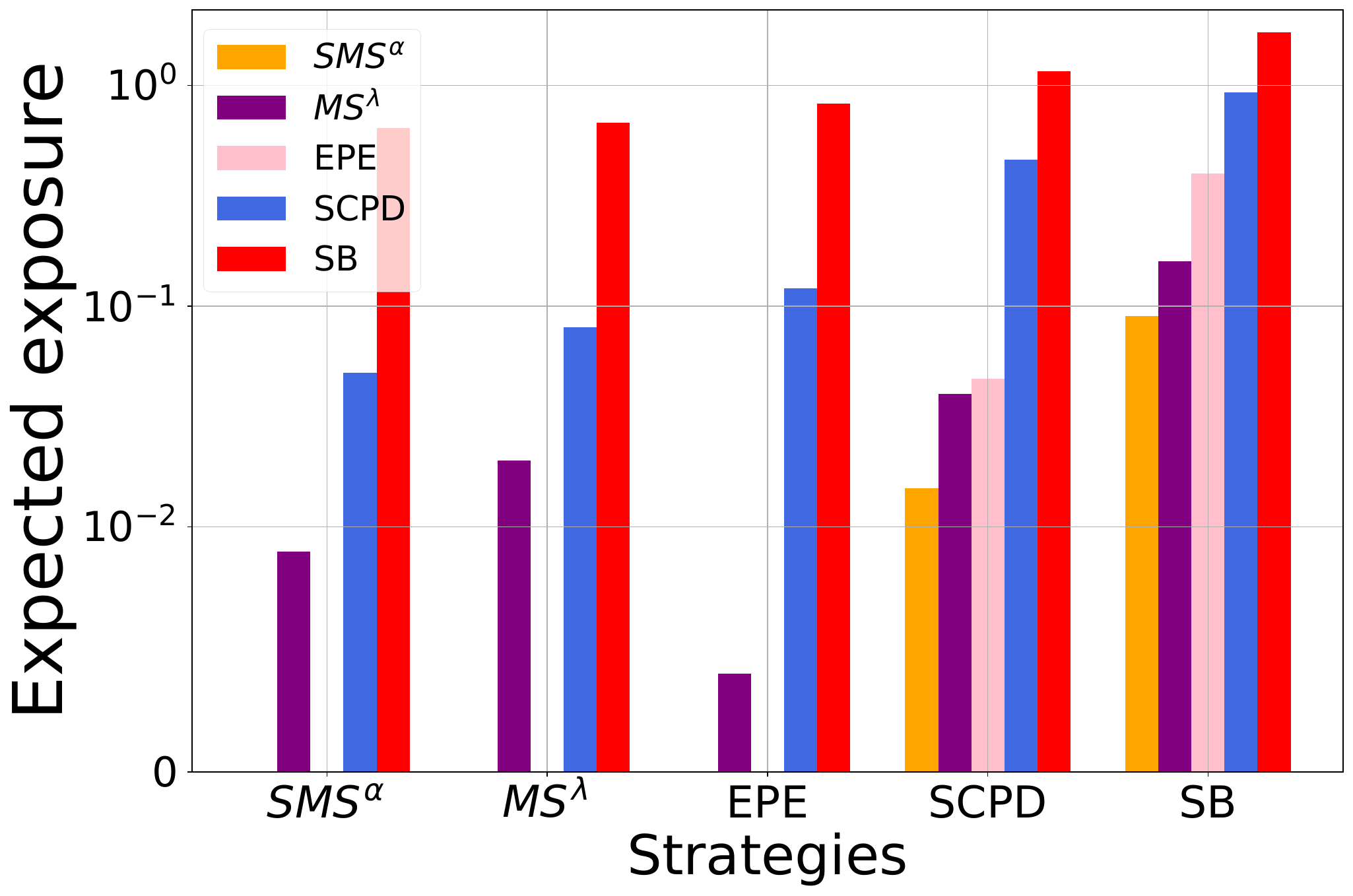}%
    \label{fig:Estimated Exposure 3}}
    \hfil
    \subfloat[Exposure frequency]{\includegraphics[width=3in]{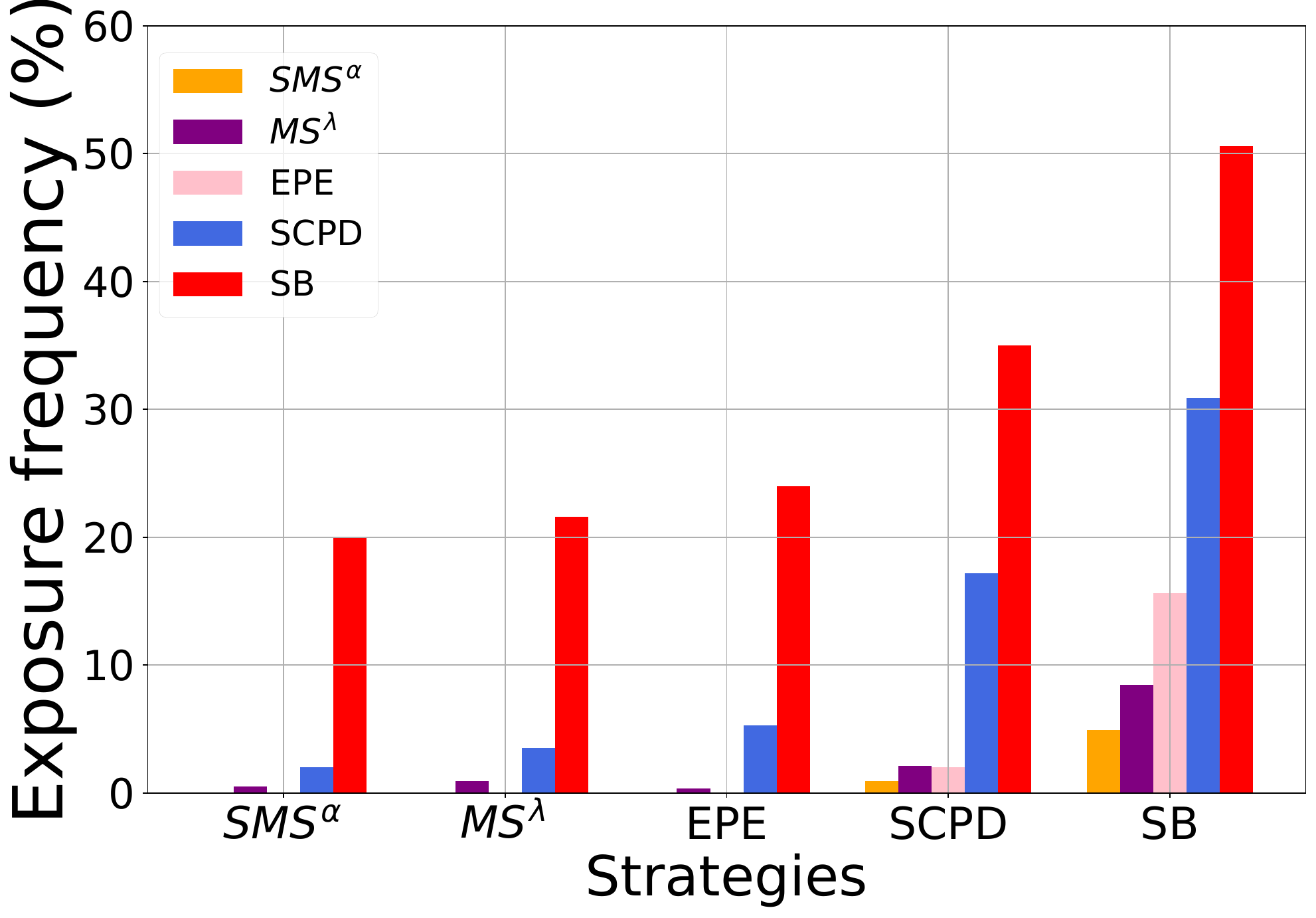}%
    \label{fig:Exposure frequency 3}}
    \caption{Exposure analysis for an SAA-c game with five strategies ($n=3$, $m=9$, $\varepsilon=1$)}
    \label{fig:Exposure 3}
    \end{figure*}
    
\end{enumerate}

Hence, we can draw the same conclusions for instances of size ($n=3$, $m=9$, $\varepsilon=1$) than for instances of size ($n=4$, $m=11$, $\varepsilon=1$) such as $SMS^\alpha$ achieves higher expected utility than state-of-the-art algorithms, notably by better tackling the own price effect and the exposure problem in eligibility and budget constrained environments. \\

\vfill

\end{document}